\newcommand{\opt}{{\rm OPT}}
\newcommand{\alg}{{ALG}}
\newcommand{\optalg}{OPTALG}
\newcommand{\newreptheorem}[2]
{\newtheorem*{rep@#1}{\rep@title}
\newenvironment{rep#1}[1]
{\def\rep@title{#2 \ref*{##1}}
\begin{rep@#1}}{\end{rep@#1}}}
\title{Prophet Inequalities for Cost Minimization\thanks{Dept.\ of Computer Science, University of Illinois at Urbana-Champaign, IL 61801. {\tt \{livanos3, rutameht\}@illinois.edu}.}
}
\author{
Vasilis Livanos %
\and
Ruta Mehta
}
\date{\today}
\begin{document}
\maketitle
\thispagestyle{empty}

\begin{abstract} 
Prophet inequalities for \emph{rewards maximization} are fundamental to optimal stopping theory with extensive applications to mechanism design and online optimization. We study the \emph{cost minimization} counterpart of the classical prophet inequality: a decision maker is facing a sequence of \emph{costs} $X_1, X_2, \dots, X_n$ drawn from known distributions in an online manner and \emph{must} ``stop'' at some point and take the last cost seen. The goal is to compete with a ``prophet'' who can see the realizations of all $X_i$'s upfront and always select the minimum, obtaining a cost of $\E\brk{\min_i X_i}$.

We observe that if the $X_i$'s are not identically distributed, no strategy can achieve a bounded approximation, even for random arrival order and $n = 2$. This leads us to consider the case where the $X_i$'s are \emph{independent} and \emph{identically distributed (I.I.D.)}. For the I.I.D. case, we show that if the distribution satisfies a mild condition, the optimal stopping strategy achieves a (distribution-dependent) constant-factor approximation to the prophet's cost. Moreover, for MHR distributions, this constant is at most $2$. All our results are tight. We also demonstrate an example distribution that does not satisfy the condition and for which the competitive ratio of any algorithm is infinite. 

Turning our attention to single-threshold strategies, we design a threshold that achieves a $\bigO{\polylog{n}}$-factor approximation, where the exponent in the logarithmic factor is a distribution-dependent constant, and we show a matching lower bound. Finally, we note that our results can be used to design approximately optimal posted price-style mechanisms for \emph{procurement auctions} which may be of independent interest.

Our techniques utilize the \emph{hazard rate} of the distribution in a novel way, allowing for a fine-grained analysis which could find further applications in prophet inequalities.
\end{abstract}

\medskip
\noindent
{\small \textbf{Keywords:}
prophet inequalities, cost minimization, online algorithms, MHR distributions
}
\medskip
\medskip
\medskip
\noindent

\clearpage
\setcounter{page}{1}

\newcommand{\rh}[1]{\textcolor{blue}{#1}}
\section{Introduction}\label{sec:introduction}

The classical prophet inequality due to Krengel, Sucheston, and Garling \cite{kren-such} concerns the setting where one is presented with \emph{take-it-or-leave-it} rewards $X_1, \dots, X_n$ in an online manner, drawn independently from known distributions, and can ``stop’’ at any point and collect the last reward seen. Given that the distributions are known, the inequality ensures the existence of a {\em stopping strategy $S$} (online algorithm) for any arrival order of the random variables, with expected reward at least half that of a {\em prophet} who can see the realizations of all the $X_i$'s upfront (offline optimum), {\em i.e.,}  $\E\brk{S}\ge \frac{1}{2} \E\brk{\max_i X_i}$. This result, and its variations and generalizations, have found extensive applications to online optimization and mechanism design, particularly, in the design of simple yet approximately optimal sequential posted price mechanisms, both online and offline, for {\em revenue (rewards) maximization} while {\em selling items} \cite{haji, ChawlaHMS, chawla07, klein-wein} (see Section~\ref{sec:related-work} for a detailed discussion).

However, what if the $X_i$'s are {\em costs} and the goal is {\em cost minimization}, like in the case of {\em procuring items} while {\em minimizing the payment}? For example, consider a house buyer trying to decide when to buy a house in a sellers' market, where houses are selling fast. When a house arrives with its price (cost) listed, she may have to decide the same day whether to buy it or not. Given that the buyer may have only distributional knowledge of future house prices, the goal is to devise a buying strategy so that the price paid is minimized.

Towards this, we study the {\em cost} counterpart of the prophet inequality, where the $X_i$'s represent costs arriving in an online manner, and one {\em must} ``stop'' at some point and select the last cost seen. Note that here the constraint is {\em upwards-closed}, {\em i.e.,} one of the $X_i$'s {\em has to be} selected. In particular, if one makes it to $X_n$, they are forced to pick its realization regardless of how high it is. The goal is to design a stopping strategy (online algorithm) $\alg$ that {\em minimizes the expected cost}, and is comparable to the cost of an all-knowing prophet who can always select the minimum realization and thus has expected cost $\E\brk{min_i X_i}$. For an $\alpha \geq 1$, we say that algorithm $\alg$ achieves an $\alpha$-factor {\em cost prophet inequality}, or is $\alpha$-{\em competitive/approximate}, if
\begin{equation}\label{eq:a-approx}
\E\brk{\alg} \le \alpha \cdot \E\brk{\min_i X_i}.
\end{equation}

For the rewards maximization setting, the competitive ratio of $\nicefrac{1}{2}$ in the classical prophet inequality is achievable through simple single-threshold algorithms~\cite{sam-cahn, klein-wein} of the form ``accept the first $X_i \geq \tau$ for some threshold $\tau$'', and is known to be tight. Furthermore, there exist simple online algorithms that achieve constant-factor approximations even for general multi-dimensional settings with complicated constraints (matroids, matchings, etc) \cite{klein-wein, Alaei14, willma-adversarial, gravin-bipartite, ezra}. Motivated by these works, we ask:

\begin{center}
{\em For the cost minimization setting, can we obtain similar results to the rewards setting?

In particular, what is the factor achieved by the optimal online algorithm, i.e., the best (smallest) possible $\alpha$ by any online algorithm, in the cost prophet inequality of \eqref{eq:a-approx}?
Is the factor achieved a constant? 

Is it achievable by simple single-threshold algorithms?}
\end{center}

In this paper, we study the above questions. At first glance, one may wonder why the cost setting is not equivalent to the classical prophet inequality for reward maximization with {\em negative} $X_i$'s. The main reason is that an algorithm for the rewards setting works with {\em downwards-closed} constraints (see Section~\ref{sec:related-work}) and therefore, if all $X_i$'s are negative, the optimal solution is trivial: the algorithm will not select any $X_i$ and obtain a value of $0$. However, this violates the {\em upwards-closed} constraint of the cost prophet inequality. In fact, this difference turns out to be a crucial one, as we demonstrate that upwards-closed constraints lead to qualitatively different guarantees.

In particular, towards answering the above questions, we obtain tight bounds for the {\em optimal online algorithm} as well as for {\em single threshold algorithms}. Surprisingly, these bounds turn out to be qualitatively different from what is known for the {\em rewards maximization setting}. In what follows, we give an overview of our results and the techniques used.

\subsection{Overview of our Contributions and Techniques}

We first observe that when the $X_i$'s are not identically distributed, no algorithm can achieve {\em any bounded approximation} if the arrival order of the $X_i$'s is adversarial or even random. This was first observed by Esfandiari, Hajiaghayi, Liaghat and Monemizadeh \cite{esf-prophsec}, and we present a simplified version of their counterexample. In particular, $\alpha$ in \eqref{eq:a-approx} can be unbounded, even in the special case of $n=2$ and the distributions with support at most two (see Proposition~\ref{prp:negative-arrival}). This strong negative result leads us to consider the I.I.D. case where the $X_i$'s are drawn independently from the same known distribution.

\paragraph{Independent and Identically Distributed (I.I.D.) Costs.}
In the I.I.D. case, all the $X_i$'s are drawn from a common non-negative known distribution $\cD$. Since single-threshold algorithms have been very successful in providing constant-factor approximations in the rewards setting, we start by asking whether we can achieve similar results for cost prophet inequalities. The intuition behind this is that if $n$ is very large, one could set a single threshold close to $\E[\min_i X_i]$ and with good probability there will be at least one realization below the threshold.

Unfortunately, this intuition turns out to be wrong, even for simple distributions, like the exponential. We present an example in Appendix~\ref{app:counterexamples} explaining why this intuition fails. In fact, in Section~\ref{sec:single-threshold} we show that no single threshold can achieve a constant competitive ratio; we discuss what one can achieve by a single threshold later.

Thus, in our search for a constant-factor competitive algorithm, we study {\em optimal online algorithms}, i.e., algorithms that achieve the smallest possible $\alpha$ in \eqref{eq:a-approx}. It follows that, much like the classical prophet inequality \cite{correa, renato-iid}, it suffices to only consider threshold-based algorithms with oblivious thresholds: set thresholds $\tau_1, \dots, \tau_n$ upfront, and accept the first $X_i \leq \tau_i$. Intuitively, this is because the process is {\em memory-less}, i.e. the decision in the $i$-th round is independent of the past realizations and depends only on the realization of $X_i$ and the distribution of the future costs.

Notice that $\tau_n = + \infty$ because, if for all $i < n$ we have $X_i > \tau_i$, then the algorithm has to accept $X_n$ no matter how high its value is. In that case, it incurs cost of $E_{X \sim \cD} \brk{X}$. Using this fact, it suffices to set $\tau_{n-1} = \E_{X \sim \cD} \brk{X}$, which is the expected cost of the algorithm if it decides not to select $X_{n-1}$. Then, by standard backwards induction, it can be shown that the optimal $\tau_i$ is equal to the expected cost that the optimal algorithm incurs when there are $n - i$ remaining random variables to be drawn from $\cD$ (see Section \ref{sec:opt-thrs}). This fact verifies the natural intuition that we should stop and select $X_i$ only if its realization is better (smaller) than the expected future cost; we include the formal proof for the sake of completeness.

Note that the optimal online algorithm may incur the expected value as a cost with positive probability, since it always has to select a cost. If the expected value is not finite and the prophet's cost is finite, then no non-trivial factor is possible. This fact prevents any bounded factor approximation for {\em all} distributions. We formalize this in Section~\ref{sec:preliminaries} through a simple example with $n = 2$ (due to Lucier \cite{brendan-personal}).

A natural next step is to search for the largest class of \emph{tractable distributions}. We approach this goal via a new technique that utilizes the hazard rate of the distribution. We identify a somewhat surprising property of the hazard rate that allows us to characterize a hierarchy of tractable distributions. After introducing the hazard rate and its properties, we discuss how our analysis crucially use it to obtain the tight approximation factors. To the best of our knowledge, this is the first use of the hazard rate as an analysis tool in prophet inequality-type problems, and we believe our approach may be of independent interest in analyzing both cost and rewards prophet inequalities.

\paragraph{Hazard Rate.}
For a given distribution $\cD$ with probability density and cumulative distribution functions $f$ and $F$ respectively, the {\em hazard rate} of $\cD$ is defined for all $x$ in the support of $\cD$ as $h(x) \triangleq \frac{f(x)}{1 - F(x)}$. Also referred to as the {\em failure rate}, it is a fundamental quantity within several fields of economics and mathematics and has found a lot of applications in survival analysis \cite{hazard-survival}, reliability theory \cite{hazard-reliability}, pricing \cite{hartline-simple, giannakopoulos-mhr,babaioff-pricing,bhattacharya-pricing,cai-pricing,daskalakis-mhr,revenue-sample-mhr,giannakopoulos-markets} and even forensic analysis \cite{hazard-forensic}. For our results, we utilize the integral of the hazard rate, $H(x) = \int_0^x h(z) \dif z$, which is called the {\em cumulative hazard rate} of $\cD$.
\medskip
\medskip

\noindent{\bf Analysing the Optimal (Online) Algorithm.}
We express both the prophet's cost and the expected cost of the optimal algorithm, thus also the competitive ratio, as integrals that depend only on $H(x)$ (Observation~\ref{obs:betan-hazard} and Lemma \ref{lem:G-recurrence}). To proceed, we need an explicit expression for $H$. One solution is to consider a series expansion of $H$ around $\alpha$, the infimum of the support of $\cD$. The choice of $\alpha$ is important since $H(\alpha)=0$ by definition, and hence the Taylor series will not have a constant term. Although, we assume $\alpha = 0$, our results easily extend to distributions for which $\alpha \neq 0$ (see Remark~\ref{rmk:support-point}).
\medskip

{\em Entire Distributions.}
It turns out that the Taylor series representation is pretty restrictive since it fails to capture distributions beyond those with monotone hazard rate (MHR). We instead turn to a generalization of Taylor series, the {\em Puiseux series} of a function (for more information see \cite{puiseux-book}). The only difference for us is that Puiseux series allow for fractional exponents in the indeterminate, and this allows us to cover a substantially larger class of distributions. We call these distributions {\em Entire}\footnote{The term is analogous to the notion of entire functions, which are functions that are analytic everywhere, i.e. their Taylor series converges everywhere.}, and our results hold for all such distributions. This class includes almost all commonly used distributions, including the uniform, exponential, Gaussian, Weibull, Rayleigh, arcsine, beta and gamma distributions, among many others. 

A distribution $\cD$ is called \emph{Entire} if the cumulative hazard rate function $H$ of $\cD$ has a Puiseux series expansion $H(x) = \sum_{i = 1}^\infty {a_i x^{d_i}} \neq 0$, and this series is convergent everywhere in the support of $\cD$. This assumption is not just a technicality, since the equal-revenue distribution mentioned earlier has $H(x) = \log{x}$, and its Puiseux series around $x = 1$ only converges in $[1, 2]$ instead of $[1, +\infty)$. Thus the equal-revenue distribution is not Entire and, as we show, no algorithm can obtain a finite competitive ratio for it (see Theorem~\ref{thm:negative-arrival-equal-revenue}).
\medskip

Using the Puiseux series of $H$, we are able to analyze the two integrals. Specifically, through an intricate analysis, we are able to compute the prophet's cost $\E[\min_i X_i]$ exactly (Lemma~\ref{lem:poly-hazard-beta}).  Somewhat surprisingly, the value of $\E[\min_i X_i]$ is dominated only by lowest degree $d_1$ and corresponding coefficient $a_1$ in the series of $H$. We note that this result may be of independent interest. The algorithm's cost can then be computed in a similar manner, using a recursive analysis. Putting everything together, the optimal algorithm is $\lambda(d_1)$-competitive, where $\lambda(d_1)$ is a decreasing function of $d_1$, and the competitive ratio cannot be improved. Interestingly, the approximation depends (inversely) only on $d_1$. Intuitively, this is because $H$ grows rapidly as $d_1$ increases, and thus $\cD$ has a less heavy tail which leads to a better approximation. It turns out that $\lambda(d_1)$ depends on the {\em Gamma function}, an extension of the factorial function over the reals. For its definition, see Section~\ref{sec:preliminaries}.

\begin{theorem}\label{thm:hazard-constant}
For the I.I.D. setting under any given non-negative Entire distribution $\cD$, for large enough $n$, there exists a $\lambda(d)$-factor cost prophet inequality, where
\[
\lambda(d) = \frac{\prn{1+1/d}^{1/d}}{\Gamma\prn{1+1/d}},
\]
$d$ is the smallest degree of the Puiseux series of $H$, and $\Gamma(\cdot)$ is the Gamma function.

Moreover, this constant is tight for the distribution with cumulative hazard rate $H(x) = x^d$.
\end{theorem}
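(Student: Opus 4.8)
The plan is to express both quantities in the competitive ratio as integrals involving $H$, substitute the Puiseux series, extract the leading asymptotics in $n$, and divide. Write $1-F(x)=e^{-H(x)}$ throughout, and let $d=d_1>0$ and $a=a_1>0$ be the smallest exponent and its coefficient in $H(x)=\sum_{i\ge 1}a_ix^{d_i}$, where $0<d_1<d_2<\cdots$. By Observation~\ref{obs:betan-hazard}, $\E\brk{\min_i X_i}=\int_0^\infty e^{-nH(x)}\dif x$, and by Lemma~\ref{lem:G-recurrence} the optimal online algorithm has expected cost $\E\brk{\alg}=G_n$, where $G_1=\E\brk{X}$ (finite since $\cD$ is Entire) and $G_k=\E\brk{\min(X,G_{k-1})}=\int_0^{G_{k-1}}e^{-H(x)}\dif x$ for $k\ge 2$.

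\emph{Prophet's cost.} As $n\to\infty$ the integral $\int_0^\infty e^{-nH(x)}\dif x$ concentrates near $x=0$, where $H(x)\sim ax^{d}$ --- this is precisely the ``only $d_1$ and $a_1$ matter'' phenomenon. Carrying out this estimate (Lemma~\ref{lem:poly-hazard-beta}), e.g.\ via the substitution $u=nax^{d}$, gives $\E\brk{\min_i X_i}=(1+o(1))\,\Gamma\!\prn{1+\tfrac1d}(an)^{-1/d}$.

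\emph{Algorithm's cost.} Since $e^{-H(x)}<1$ for every $x>0$, the map $g(t)=\int_0^t e^{-H(x)}\dif x$ satisfies $0<g(t)<t$ and is increasing, so the sequence $(G_k)$ is strictly decreasing and converges to the unique fixed point $0$. Hence for large $k$ the interval $[0,G_{k-1}]$ lies in a neighborhood of $0$ on which $H(x)=ax^{d}+O(x^{d_2})$; expanding $e^{-H}$ and integrating termwise yields
\[
G_k=G_{k-1}-\frac{a}{d+1}G_{k-1}^{\,d+1}+O\prn{G_{k-1}^{\,d_2+1}}=G_{k-1}-\frac{a}{d+1}G_{k-1}^{\,d+1}\prn{1+o(1)}.
\]
Setting $v_k=G_k^{-d}$ and using $(1-\varepsilon)^{-d}=1+d\varepsilon+O(\varepsilon^2)$ with $\varepsilon=\frac{a}{d+1}G_{k-1}^{d}\to0$, and noting $v_{k-1}G_{k-1}^{d}=1$, we obtain $v_k-v_{k-1}=\frac{ad}{d+1}+o(1)$. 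Telescoping and averaging (Stolz--Ces\`aro) gives $v_n/n\to\frac{ad}{d+1}$, i.e.
\[
G_n=(1+o(1))\prn{\tfrac{ad}{d+1}\,n}^{-1/d}=(1+o(1))\,\prn{1+\tfrac1d}^{1/d}(an)^{-1/d}.
\]

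\emph{Ratio and tightness.} Dividing the two asymptotics, the common factor $(an)^{-1/d}$ cancels and $\E\brk{\alg}/\E\brk{\min_i X_i}=G_n/\E\brk{\min_i X_i}\to (1+1/d)^{1/d}/\Gamma(1+1/d)=\lambda(d)$; in particular the competitive ratio of the optimal algorithm converges to $\lambda(d)$, which establishes the claimed $\lambda(d)$-factor cost prophet inequality for all large enough $n$. For tightness, take $\cD$ with $H(x)=x^{d}$: then $\E\brk{X}=\Gamma(1+1/d)<\infty$ and the two displays hold with exactly these leading constants, so the ratio between the \emph{optimal} algorithm's cost and the prophet's cost tends to exactly $\lambda(d)$; since no online algorithm beats the optimal one, $\lambda(d)$ is best possible. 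I expect the crux to be the asymptotic analysis of the nonlinear recurrence for $G_k$ --- establishing $G_k\to 0$, expanding the integral to the correct order, and controlling the $o(1)$ error in $v_k-v_{k-1}$ so that the Ces\`aro step is legitimate --- together with the companion verification (Watson's-lemma style for the prophet, termwise integration for the algorithm) that the higher-degree Puiseux terms of $H$ contribute only lower-order corrections to both integrals; it is exactly this that collapses the final constant onto $d_1$ alone.
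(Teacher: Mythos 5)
Your route is genuinely different from the paper's: where the paper proves the upper bound by induction on $n$ (Lemma~\ref{lem:ratio-induction}), assuming $G(n)\le\lambda(d_1)\beta_n$, substituting into the integral for $G(n+1)$, expanding everything into lower incomplete Gamma functions, and then invoking Claim~\ref{clm:ratio-upper-bound} to close the induction, you instead attack the nonlinear recurrence $G_k=\int_0^{G_{k-1}}e^{-H(x)}\dif x$ directly as a dynamical system near the fixed point $0$. The transformation $v_k=G_k^{-d}$ linearizes the recurrence to leading order, $v_k-v_{k-1}\to ad/(d+1)$, and Stolz--Ces\`aro then yields the exact asymptotic $G_n\sim (1+1/d)^{1/d}(an)^{-1/d}$. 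That derivation is clean, correct, and arguably more transparent than the paper's series-manipulation argument, and it yields the exact constant for $G_n$ rather than only an upper bound. Your Taylor-expansion error term $O(G_{k-1}^{d_2+1})$ should really be $O(G_{k-1}^{\min(2d_1,d_2)+1})$, but this still beats $G_{k-1}^{d_1+1}$, so the $(1+o(1))$ is unaffected.

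The genuine gap is at the very last step. You prove $R(n)\to\lambda(d)$ and then assert that this ``establishes the claimed $\lambda(d)$-factor cost prophet inequality for all large enough $n$.'' Convergence to $\lambda(d)$ only gives, for every $\eps>0$, a $(\lambda(d)+\eps)$-factor CPI for large $n$; it is consistent with $R(n)>\lambda(d)$ for every $n$, in which case no $n$ admits a $\lambda(d)$-factor CPI. To get the theorem as stated you must control the \emph{sign} of the $o(1)$ correction, i.e.\ show the ratio approaches $\lambda(d)$ from below (or is eventually $\le\lambda(d)$). The paper obtains this from the inductive bound $R(n)\le\lambda(d_1)$ (driven by Claim~\ref{clm:ratio-upper-bound}), and for the tightness witness $H(x)=x^d$ it separately proves $R(n)$ is increasing (Lemma~\ref{lem:ratio-incr}), which combined with your limit would immediately give $R(n)<\lambda(d)$ for all $n$. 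Your tightness argument (no algorithm beats the optimal one, and the optimal one converges to $\lambda(d)$ for $H(x)=x^d$) is fine as a lower bound. So: adopt a monotonicity argument for $R(n)$, or sharpen your asymptotics to two-sided bounds with explicitly signed error, and the proof closes.
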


To understand how $\lambda(d)$ grows with $d$, consider Stirling's approximation for the Gamma function,
$
\Gamma(z) \approx \frac{\sqrt{2 \pi}}{z} \prn{\frac{z}{e}}^z.
$
Replacing this in the expression of $\lambda(d)$, we have
\[
\lambda(d)=\frac{\prn{1+1/d}^{1/d}}{\Gamma\prn{1+1/d}} \approx \frac{\prn{1+1/d}^{1/d}}{\frac{\sqrt{2 \pi}}{1 + 1/d} \prn{\frac{1+1/d}{e}}^{1+1/d}} = \frac{e}{\sqrt{2 \pi}} \: \: e^{1/d}.
\]
Thus the dependence of $\lambda(d)$ on $d$ is (approximately) inversely exponential.

\paragraph{MHR Distributions.} Distributions with monotonically increasing hazard rate have been extensively studied in the mechanism design literature due to their sought after properties and applications (e.g., see \cite{giannakopoulos-mhr, babaioff-pricing, bhattacharya-pricing, cai-pricing, daskalakis-mhr, revenue-sample-mhr, giannakopoulos-markets, hartline-simple}). These are known as {\em monotone hazard rate (MHR)} distributions. For such distributions, we are able to show that $d \geq 1$, and since $\lambda(1) = 2$, we show that the optimal algorithm is $2$-factor competitive. In addition, we show that the factor of $2$ is tight for the exponential distribution, which has constant hazard rate.

\begin{theorem}\label{thm:mhr-constant}
For every Entire MHR distribution, there exists a $2$-competitive cost prophet inequality, for large enough $n$.

This factor is tight, since there is no $\prn{2 - \eps}$-cost prophet inequality for any $\eps > 0$ for the exponential distribution, which has constant hazard rate.
\end{theorem}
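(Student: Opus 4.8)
The plan is to deduce both halves of the statement from Theorem~\ref{thm:hazard-constant}. That theorem gives, for any Entire distribution and large $n$, a $\lambda(d)$-factor cost prophet inequality, where $d$ is the smallest exponent in the Puiseux series of the cumulative hazard rate $H$, and says this is tight for $H(x) = x^d$. Since $\lambda(1) = (1+1)^{1}/\Gamma(2) = 2/1 = 2$ and $\lambda$ is decreasing in $d$, it suffices for the upper bound to show that every Entire MHR distribution has $d \ge 1$; and for tightness it suffices to exhibit one Entire MHR distribution with $d = 1$, for which the exponential is the natural candidate.

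\emph{Showing $d \ge 1$ under MHR.} Write the Puiseux series of $H$ as $H(x) = \sum_{i \ge 1} a_i x^{d_i}$ with $0 < d_1 < d_2 < \cdots$ and $a_1 \neq 0$, convergent on a right-neighborhood of the left endpoint of the support, which by Remark~\ref{rmk:support-point} we may take to be $0$. First, $a_1 > 0$: since $H(x) = \int_0^x h(z)\,\dif z \ge 0$ and the lowest-degree term dominates as $x \to 0^+$, a negative $a_1$ would make $H$ negative near $0$. Differentiating the series termwise inside its region of convergence gives $h(x) = H'(x) = \sum_{i \ge 1} a_i d_i x^{d_i - 1}$, so $h(x) = a_1 d_1 x^{d_1 - 1}(1 + o(1))$ as $x \to 0^+$. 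If $d_1 < 1$ this forces $h(x) \to +\infty$ as $x \to 0^+$. But an MHR hazard rate is non-decreasing, hence bounded above by $h(c)$ on $(0, c]$ for any fixed $c$ in the interior of the support, so it cannot blow up as $x \to 0^+$ — a contradiction. Therefore $d_1 \ge 1$, and Theorem~\ref{thm:hazard-constant} yields a $\lambda(d_1) \le \lambda(1) = 2$ factor cost prophet inequality for large enough $n$. (The inequality $\lambda(d) \le \lambda(1)$ for $d \ge 1$ is consistent with the established monotonicity of $\lambda$ and the heuristic $\lambda(d) \approx \frac{e}{\sqrt{2\pi}} e^{1/d}$; it can also be checked directly from the closed form, e.g. by showing $t \mapsto (1+t)^t/\Gamma(1+t)$ is increasing on $t = 1/d \in (0,1]$.)

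\emph{Tightness via the exponential.} For the exponential distribution with rate $1$ we have $1 - F(x) = e^{-x} = f(x)$, so the hazard rate is identically $1$: this is (trivially) MHR, and $H(x) = x$ is a finite Puiseux series convergent everywhere, so the distribution is Entire with smallest degree $d = 1$. Applying the tightness clause of Theorem~\ref{thm:hazard-constant} with $d = 1$, no $(2 - \eps)$-factor cost prophet inequality is possible for it, for any $\eps > 0$. (Replacing the rate by an arbitrary $\mu > 0$ only rescales both $\E\brk{\alg}$ and $\expmin$ by $1/\mu$, so the conclusion transfers to every exponential distribution.)

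\emph{Main obstacle.} The delicate part is the $d \ge 1$ step: making rigorous the termwise differentiation of the Puiseux series of $H$ near the left endpoint of the support and the sign determination $a_1 > 0$, together with the minor monotonicity check for $\lambda$ over $d \ge 1$. If a self-contained lower bound is preferred to invoking the tightness part of Theorem~\ref{thm:hazard-constant}, one would instead compute directly, for the rate-$1$ exponential, that $\expmin = 1/n$ and that the optimal algorithm's expected cost — obtained from the backwards recursion defining the thresholds $\tau_1, \dots, \tau_n$ — has ratio to $1/n$ tending to $2$ as $n \to \infty$.
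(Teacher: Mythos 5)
Your proposal is correct, and its overall structure matches the paper's: reduce to Theorem~\ref{thm:hazard-constant}, prove that every Entire MHR distribution has valuation $d_1 \ge 1$, invoke $\lambda(1)=2$ with the monotonicity of $\lambda$ for the upper bound, and use the tightness clause with $H(x)=x$ (the rate-$1$ exponential) for the lower bound.

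The one place you diverge is the proof of the key sublemma $d_1\ge 1$. You argue via the \emph{first} derivative: $h(x)=H'(x)\sim a_1 d_1 x^{d_1-1}$ near $0^+$, so $d_1<1$ would force $h(x)\to+\infty$ as $x\to 0^+$, contradicting that a non-decreasing $h$ must be bounded above by $h(c)$ on $(0,c]$. The paper instead argues via the \emph{second} derivative: MHR gives $H''\ge 0$, and it exhibits a small $y$ where the lowest-degree term $a_1 d_1(d_1-1)y^{d_1-2}$ (negative when $d_1<1$) dominates and forces $H''(y)<0$. Both are local dominance arguments near the left endpoint; yours is marginally more elementary in that it avoids a second differentiation of the Puiseux series and reduces the contradiction to the simple fact that a non-decreasing function on a half-open interval cannot blow up at the finite endpoint. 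Either is fine, and the remaining steps (termwise differentiation inside the region of convergence, $a_1>0$ — which is actually Observation~\ref{obs:a-positive} and need not be re-derived — and the monotonicity of $\lambda$, which the paper also leaves unverified) are handled at the same level of rigor as the paper.
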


\paragraph{Single-Threshold Algorithms.} Given the success of single-threshold algorithms in the classical rewards setting, where they are able to achieve the best possible competitive ratio, we ask if there exists a single-threshold algorithm that achieves a constant-factor competitive ratio for the cost prophet inequality setting as well. The answer turns out to be negative. We show that, for Entire distributions, no single-threshold algorithm can achieve a better than poly-logarithmic competitive ratio. We also obtain a matching upper bound. In particular, given an Entire distribution $\cD$, by analyzing how the competitive ratio of a single-threshold algorithm grows as $n$ increases, we design a threshold $T$ such that the algorithm that selects the first $X_i \leq T$ for $i < n$ and $X_n$ otherwise, yields a $\bigO{\polylog{n}}$-factor cost prophet inequality. 
Here, the power in the poly-logarithmic factor inversely depends on the smallest degree of the Puiseux series of $H$.

\begin{theorem}\label{thm:single-threshold}
Given $X_1, \dots, X_n$ drawn independently from a non-negative Entire distribution $\cD$, there exists a single-threshold algorithm that is  $\bigO{\polylog{n}}$-competitive, for large enough $n$. Moreover, this factor is tight, i.e. there exist distributions for which no single-threshold algorithm is $\ltlo{\polylog{n}}$-competitive.
\end{theorem}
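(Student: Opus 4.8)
The plan is to analyze a single-threshold algorithm that uses threshold $T$ for the first $n-1$ rounds and is forced to take $X_n$ otherwise. Let me set up the two quantities. Writing $q = F(T) = \Pr[X \le T]$ for the probability a single draw beats the threshold, the algorithm's expected cost splits as: with probability $1-(1-q)^{n-1}$ some $X_i \le T$ arrives before round $n$, contributing $\E[X \mid X \le T]$ (times the appropriate probabilities), and with probability $(1-q)^{n-1}$ we reach round $n$ and pay $\E[X]$. So $\E[\alg] \le \E[X \mid X \le T] + (1-q)^{n-1}\,\E[X]$ (being a little loose is fine for an upper bound). On the other side, by Observation~\ref{obs:betan-hazard} (or directly), $\E[\min_i X_i] = \int_0^\infty (1-F(x))^n\,\dif x = \int_0^\infty e^{-nH(x)}\,\dif x$, and by Lemma~\ref{lem:poly-hazard-beta} this is $\Theta\bigl((a_1 n)^{-1/d}\bigr)$ where $d = d_1$ is the smallest Puiseux degree. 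The crux is therefore to choose $T$ as a function of $n$ so that the ratio $\E[\alg]/\E[\min_i X_i]$ is $\bigO{\polylog n}$, and to show no choice does better than $\Omega(\polylog n)$ on a suitable distribution.

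For the \emph{upper bound}, I would pick $T = T_n$ so that $(1-q)^{n-1} = e^{-(n-1)H(T_n)}$ is roughly $1/n$ (or $1/\poly(n)$), i.e.\ set $H(T_n) \approx \tfrac{c\log n}{n}$, which since $H(x) \sim a_1 x^d$ near $0$ gives $T_n \approx \bigl(\tfrac{c\log n}{a_1 n}\bigr)^{1/d}$. With this choice the ``failure'' term contributes $\tfrac1n \E[X] = \bigO{\E[\min_i X_i] \cdot \polylog n}$ using $\E[\min_i X_i] = \Theta(n^{-1/d})$ and $\E[X] = O(1)$. For the main term, $\E[X \mid X \le T_n] \le T_n = \Theta\bigl((\log n / n)^{1/d}\bigr)$, which is exactly $\E[\min_i X_i]$ blown up by a factor $\Theta\bigl((\log n)^{1/d}\bigr) = \bigO{\polylog n}$. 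Putting the two together gives the claimed $\bigO{\polylog n}$ bound, with the exponent $1/d$ on the logarithm as stated.

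For the \emph{lower bound}, I would fix the tight distribution $H(x) = x^d$ (so $1 - F(x) = e^{-x^d}$) and argue against an arbitrary threshold $T$. Trichotomy on $T$: if $T$ is too small, $(1-q)^{n-1} = e^{-(n-1)T^d}$ is bounded away from $0$, so the algorithm pays a constant times $\E[X] = \Theta(1)$ while the prophet pays $\Theta(n^{-1/d})$, giving ratio $\Omega(n^{1/d}) = \omega(\polylog n)$ — far worse. If $T$ is too large, the main term $\E[X \mid X \le T]$ is itself $\Omega(1)$ (or at least $\omega((\log n/n)^{1/d})$), again losing. The sweet spot forces $T^d = \Theta(\log n / n)$, and there the conditional expectation $\E[X \mid X \le T]$ is $\Theta(T) = \Theta((\log n/n)^{1/d})$, so the ratio to $\E[\min_i X_i] = \Theta(n^{-1/d})$ is $\Theta((\log n)^{1/d}) = \Theta(\polylog n)$. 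Since every choice of $T$ lands in one of these regimes, no single-threshold algorithm beats $\Omega(\polylog n)$.

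The main obstacle I anticipate is making the lower-bound trichotomy fully rigorous: one must show that for \emph{every} $T$ in the intermediate regime the conditional expectation $\E[X \mid X \le T]$ is genuinely $\Theta(T)$ and not smaller (this needs a lower bound on $\E[X\mid X\le T]$, e.g.\ $\E[X \mid X \le T] \ge \tfrac{T}{2}\Pr[X \ge T/2 \mid X \le T]$ and a density estimate near $0$), and to track the $(n-1)$ versus $n$ and the constants in $(1-q)^{n-1}$ carefully so the three regimes exhaust all cases with no gap. The asymptotics from Lemma~\ref{lem:poly-hazard-beta} for $\E[\min_i X_i]$ do the heavy lifting on the prophet side, so the remaining work is purely the elementary but fiddly estimation of the algorithm's two cost terms as functions of $T$ and $n$.
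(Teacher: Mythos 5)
Your plan mirrors the paper's proof almost exactly: for the upper bound, set a threshold $T_n\asymp\bigl(\log n/(a_1 n)\bigr)^{1/d_1}$, bound the ``success'' term by $T_n$, bound the ``failure'' term by the failure probability times $\E[X]$, and compare against $\beta_n=\Theta(n^{-1/d_1})$ from Lemma~\ref{lem:poly-hazard-beta}; for the lower bound, pick $H(x)=x^d$ and show that every $T$ either pays the failure term (too small), pays too much in the success term (too large), or sits in the sweet spot and still loses a $\Theta\bigl((\log n)^{1/d}\bigr)$ factor. The paper packages the lower bound as a contradiction (two incompatible constraints on $T$, Eqs.\ \eqref{eq:single-lower-bound-3} and \eqref{eq:single-lower-bound-5}) rather than your trichotomy, but these are the same argument.

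One slip to fix in the upper bound: tuning the failure probability $e^{-(n-1)H(T_n)}$ to ``roughly $1/n$'' is only good enough when $d_1\ge 1$. The failure term contributes $e^{-(n-1)H(T_n)}\,\E[X]/\beta_n\approx n^{-c}\cdot n^{1/d_1}$, so you need $c\ge 1/d_1$, not $c=1$, or the term diverges for $d_1<1$. The paper therefore chooses $T$ so that $e^{-(n-1)H(T)}=(\log n/n)^{1/d_1}$, i.e.\ $T=\bigl(\log(n/\log n)/(d_1 a_1(n-1))\bigr)^{1/d_1}$, which makes the failure term contribute $\Theta\bigl((\log n)^{1/d_1}\bigr)$ after dividing by $\beta_n$, matching the success term exactly. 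Your parenthetical ``$1/\poly(n)$'' leaves room for this, but the concrete $\tfrac{1}{n}\E[X]$ estimate needs to be replaced to cover all Entire distributions. The rest of what you flag as ``fiddly'' (lower-bounding $\E[X\mid X\le T]$ by $\Theta(T)$, and tracking constants through the two regimes) is indeed where the paper spends its effort, and your suggested route through a density estimate near $0$ is workable.
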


\begin{remark}
The assumption in our theorems that $n$ should be large enough is a technicality and we believe our theorems hold for all $n \geq 1$. We show that this intuition indeed holds for distributions with $H(x) = x^d$ where $d > 0$, i.e. the competitive ratio increases with $n$ (Lemma~\ref{lem:ratio-incr}). Empirical analysis of several other distributions also verifies this intuition.
\end{remark}

\paragraph{Application to Mechanism Design.} Finally we note that, similar to the extensive application of classical prophet inequalities in designing simple yet approximately optimal posted-price mechanisms for selling items (see Section~\ref{sec:related-work}), our algorithms and results for the {\em cost prophet inequality} can be used for the design and analysis of posted-price-style mechanisms for {\em procuring items}.

Consider a procurement auction (also known as a reverse auction), in which the auctioneer (buyer) wants to procure a single item sold by $n$ different sellers, with an I.I.D. distribution governing the sellers' valuation for selling the item to the auctioneer or, in other words, the sellers' costs/price. If the sellers arrive in an online manner with take-it-or-leave-it offers, for example as is the case in a seller's housing market, then the standard reduction of a posted-price mechanism to a prophet inequality due to Hajiaghayi, Kleinberg and Sandholm \cite{haji} applies directly to the cost setting, if one wants to minimize the social cost.

To minimize the {\em procurement price } paid by the buyer (auctioneer), 
one simply needs to use the 
the {\em virtual costs} $\phi(c) = c + \frac{F(c)}{f(c)}$, since Myerson's optimal auction \cite{myerson-optimal-auction} applies to any single-parameter environment. This holds only if $\cD$ is a regular distribution (a class of distributions which includes MHR distributions among others). For non-regular $\cD$, one simply needs to ``iron'' the social cost function, just as in the classical rewards setting, and proceed similarly afterwards. For more details on this see \cite{myerson-virtual}~(Theorem 1).
\medskip

\subsection{Related Work}\label{sec:related-work}

Given the intractability of the optimal (revenue-maximizing) mechanisms for selling items \cite{pricing-lotteries, hart-nisan-menu, daskalakis-intractability-1, daskalakis-intractability-2}, the focus turned to designing {\em approximately optimal} yet simple mechanisms where prophet inequalities for {\em reward maximization} have been extensively studied. The works of Hajiaghayi, Kleinberg and Sandholm \cite{haji} and Chawla, Hartline, Malec and Sivan \cite{ChawlaHMS} pioneered the use of prophet inequalities to analyze (sequential) posted price mechanisms for {\em selling items}. Specifically, \cite{haji} observed that the problem of designing posted price mechanisms that maximize welfare can be reduced to an appropriate optimal stopping theory problem, and this was extended to revenue-maximizing posted price mechanisms in \cite{ChawlaHMS}. This result led to a significant effort to understand how the expected revenue of an optimal posted price mechanism compares to that of the optimal auction \cite{chawla07, qiqi, BlumHolen08, Adamczyk17, Alaei14, feldman-combinatorial, Dutting2, kesselheim-mhr-ppm, Dobzinski-Comb-Auc, Dobzinski-Comb-Auct-Impos, AssadiKS, assad-singla, dutting-combinatorial}. Recently, in a surprising result, Correa, Foncea, Pizarro and Verdugo \cite{CorreaPricing} showed that the reverse direction also holds, establishing an equivalence between finding stopping rules in an optimal stopping problem and designing optimal posted price mechanisms -- for more information on these applications see a survey by Lucier \cite{Lucier-survey}.

The $\nicefrac{1}{2}$-competitive factor guaranteed by the classical prophet inequality for adversarial arrival order has been shown to hold for more general classes of downwards-closed constraints, all the way up to matroids \cite{klein-wein}. For the special case of $k$-uniform matroids, where one can select up to $k$ values, Alaei \cite{Alaei14} showed a $\prn{1 - \frac{1}{\sqrt{k+3}}}$-competitive ratio. This was recently improved for small $k$ via the use of a static threshold \cite{suchi-small-k} and later made tight for all $k$ \cite{willma-adversarial}. Ezra, Feldman, Gravin and Tang \cite{ezra} showed a $0.337$-prophet inequality for matching constraints. Rubinstein \cite{rubin} considered general downwards-closed feasibility constraints and obtained logarithmic approximations. The standard setting has been extended to combinatorial valuation functions \cite{rs, chek-liv}, where one can obtain a constant competitive ratio when maximizing a submodular function but a logarithmic hardness is known for subadditive functions \cite{rs}. Recently, \cite{suchi-non-adaptive-graphic-matroid} studied non-adaptive threshold algorithms for matroid constraints and gave the first constant-factor competitive algorithm for graphic matroids. Qin, Rajagopal, Vardi and Wierman \cite{convex-pi} study the related problem of \emph{convex prophet inequalities}, in which, instead of costs, one sequentially observes random \emph{cost functions} and needs to assign a total mass of $1$ across all functions in an online manner. Our work is a special case of their work, for constant cost functions and identical distributions, which is why we are able to obtain constant competitive ratios compared to the ratio obtained in \cite{convex-pi}, which is polynomial in the number of cost functions $n$.

Esfandiari, Hajiaghayi, Liaghat and Monemizadeh \cite{esf-prophsec} introduced {\em prophet secretary}, in which the arrival order of the random variables is chosen uniformly at random, instead of by an adversary. They gave an adaptive-threshold algorithm that achieves a $\prn{1 - 1/e}$-competitive ratio and showed no algorithm can achieve a factor better than $0.75$. Ehsani, Hajiaghayi, Kesselheim and Singla \cite{EhsaniHKS18} extend this result to matroid constraints. The factor of $1 - 1/e$ was recently beaten, first for the case where the algorithm is allowed to choose the arrival order, called the {\em free order setting} \cite{abolh} and later for random arrival order \cite{azar}. The best currently known ratio is obtained by Correa, Saona and Ziliotto \cite{correa-random}, where they also improve the upper bound to $0.732$. When one can select up to $k$ values, Arnosti and Ma \cite{willma-random} recently gave a surprising and quite beautiful single-threshold algorithm that achieves the best competitive ratio of $1 - e^{-k} \frac{k^k}{k!}$. More general feasibility constraints have also been studied in the random arrival order case, i.e. for matroids \cite{aw-random} and matchings \cite{brubach, tristan}.

Our work is most closely related to the long line of work that considers the case of I.I.D. random variables drawn from a known distribution, which dates back to Hill and Kertz \cite{hill-kertz}. Kertz \cite{kertz} showed that the competitive ratio in the I.I.D. case approaches $\approx 0.745$ as $n$ goes to infinity, via a recursive approach, and conjectured that this is the best bound possible. A simpler proof of this can be found in \cite{saint-mont}. The bound of $\approx 0.745$ was shown to be tight by Correa, Foncea, Hoeksma, Oosterwijk and Vredeveld \cite{correa-iid}. The proofs of both the upper and lower bounds were recently simplified, by \cite{better-tightness} and \cite{renato-iid} respectively. We refer the reader to two excellent surveys \cite{hill-kertz-survey} and \cite{Correa-survey} for more results about prophet inequalities.

Several of these results are described in the context of an {\em online contention resolution scheme (OCRS)}, which is an algorithm used to round a fractional solution of a linear program in an online manner. Originally introduced by Chekuri, Vondr\'{a}k and Zenklusen \cite{crs} for the offline case, Feldman, Svensson and Zenklusen \cite{ocrs} showed the existence of constant-factor approximate OCRSs for several classes of interesting constraints and demonstrated that an $\alpha$-approximate OCRS for a constraint implies an $\alpha$-competitive prophet inequality for the same constraint. This connection was proved to be deeper, as Lee and Singla \cite{lee-singla} used {\em ex-ante prophet inequalities} to design optimal OCRSs for matroids. Recently, in a beautiful series of works, Dughmi \cite{dughmi1, dughmi2} showed that the design of particular (offline) contention resolution schemes is equivalent to another problem in optimal stopping theory, the matroid secretary problem. Whether such connections exists between cost prophet inequalities and OCRSs for upwards-closed constraints is an interesting open question. 

\paragraph{Organization.} Section~\ref{sec:preliminaries} introduces the cost prophet inequality setting and contains relevant definitions as well as important observations. Section~\ref{sec:multi-threshold} characterizes the optimal algorithm and shows that it achieves a tight constant-factor approximation. For the special case of MHR distributions, it shows that this constant is exactly $2$. In Section~\ref{sec:single-threshold}, we focus on single-threshold algorithms and design a fixed threshold that yields a tight $\bigO{\polylog{n}}$-competitive cost prophet inequality. Finally, we conclude with some interesting open problems in Section~\ref{sec:conclusion}.

Due to space constraints and to improve the readability, technical background about the Gamma function as well as some examples and can be found in the Appendices~\ref{app:gamma} and~\ref{app:counterexamples} respectively. All missing proofs can be found in Appendix~\ref{app:lemmas}.

\section{Preliminaries}\label{sec:preliminaries}

In this section we formalize the {\em cost prophet inequality} setting, and define several important quantities. We are given as input $n$ distributions $\cD_1, \dots, \cD_n$ supported on $[0, +\infty)$, and we sequentially observe the realizations of $n$ random {\em costs} $X_1 \sim \cD_1, \dots, X_n \sim \cD_n$. We {\em must} ``stop'' at some point and take the last cost seen. In particular, at any point after observing an $X_i$, we can choose to select or discard it. If we select $X_i$, then the process ends and we receive a cost equal to $X_i$. Otherwise $X_i$ gets {\em discarded} forever and the process continues. An all-knowing prophet, who can see the realizations of all $X_i$'s upfront can always select the minimum realized cost and hence their expected cost is
\[
\mbox{Offline-OPT} = \E\brk{\min_i X_i}.
\]

Given $\cD_1, \dots, \cD_n$, the goal is to design a {\em stopping strategy} that minimizes the expected cost. That is, design an (online) algorithm $\alg$ to decide when to ``stop'' and select the last cost seen, such that the expected cost incurred is minimized, and ideally is comparable to the prophet's cost. Formally, for $\alpha \geq 1$, we say that $\alg$ is $\alpha$-factor approximate/competitive, or achieves an $\alpha$-{\em cost prophet inequality} if
\begin{equation}\label{eq:cpi}
\E\brk{\alg} \leq \alpha \cdot \E\brk{\min_i X_i} = \alpha \cdot \mbox{Offline-OPT}.
\end{equation}

Esfandiari, Hajiaghayi, Liaghat and Monemizadeh \cite{esf-prophsec} observed that, if $X_i$'s are not identically distributed, no algorithm can achieve {\em any} bounded competitive factor. We present a simplified version of this result with $n = 2$ in Appendix~\ref{app:counterexamples}, where the two distributions have support size one and two respectively.

\begin{proposition}\label{prp:negative-arrival}
For the cost prophet inequality problem with adversarial or random order arrival, no algorithm is $\alpha$-factor competitive for any bounded $\alpha$, even when restricted to $n = 2$ and distributions with support size at most two.
\end{proposition}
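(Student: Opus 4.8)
The plan is to exhibit, for any target bound $\alpha$, a two-variable instance on which every stopping rule pays at least $\alpha$ times the prophet's cost. The key intuition is that with non-identical distributions one variable can be ``almost always cheap but occasionally catastrophic'' while the other is ``moderate and deterministic''; the prophet routes around the catastrophe, but an online algorithm seeing the first variable cannot tell which realization it is facing and is forced to commit.

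\medskip

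\noindent\textbf{Construction.} Fix a large parameter $M$ (to be sent to $\infty$) and a small probability $p = p(M)$. Let $X_1$ take value $0$ with probability $1-p$ and value $M$ with probability $p$, so $X_1$ has support size two. Let $X_2$ be deterministic, equal to some value $v$ with $0 < v < M$; this has support size one. The prophet sees both realizations and selects $\min(X_1, X_2)$: when $X_1 = 0$ (probability $1-p$) she pays $0$, and when $X_1 = M$ (probability $p$) she pays $\min(M, v) = v$. Hence $\E[\min_i X_i] = p\,v$, which we can drive to $0$ by choosing $p$ small.

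\medskip

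\noindent\textbf{Lower bound against any algorithm.} Consider first the order $X_1, X_2$. An online algorithm, upon seeing $X_1$, either accepts it or rejects it, and this decision may depend only on the realization of $X_1$ (there is no earlier information). If it accepts whenever $X_1 = 0$ and rejects whenever $X_1 = M$ — the obviously best reaction — then on the event $X_1 = M$ it proceeds to $X_2$ and pays $v$; its expected cost is $(1-p)\cdot 0 + p\cdot v = pv$, which matches the prophet. So the hard instance needs $X_2$ to hurt: choose $v$ large, say $v = M$. Then when $X_1 = 0$ the algorithm should accept (cost $0$), and when $X_1 = M$ it is indifferent — accepting costs $M$, rejecting and taking $X_2$ also costs $M$. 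Either way its expected cost is $pM$. Now reverse the roles: make $X_2$ the genuinely moderate option by instead taking $X_2$ deterministic equal to $1$, and $X_1$ equal to $0$ w.p. $1-p$ and $M$ w.p. $p$, presented in the order $X_2, X_1$. Seeing $X_2 = 1$ first, the algorithm must choose between paying $1$ now or gambling on $X_1$, whose expectation is $pM$; if $pM < 1$ it rejects $X_2$, then pays $0$ with probability $1-p$ and $M$ with probability $p$, for expected cost $pM$ — but the prophet pays only $\E[\min(1, X_1)] = (1-p)\cdot 0 + p\cdot 1 = p$. The ratio is $M$, unbounded. If instead the algorithm accepts $X_2 = 1$, it pays $1$ while the prophet pays $p \to 0$, so the ratio is $1/p$, again unbounded. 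Taking $p \to 0$ and $M \to \infty$ appropriately (e.g. $p = 1/M$, so $pM = 1$ and one tunes a strict inequality by perturbing $v$ slightly below or above $1$) forces the competitive ratio of \emph{every} deterministic algorithm to exceed any fixed $\alpha$; randomized algorithms are handled by noting that their expected cost on a fixed instance is a convex combination of deterministic costs and hence no smaller than the best deterministic one. For the random-order version the same two-point instance works: the algorithm faces one of the two orders each with probability $1/2$, and in at least one of them it is trapped as above, so its overall expected ratio is still $\Omega(M)$.

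\medskip

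\noindent\textbf{Main obstacle.} The only delicate point is making the two ``horns'' of the dilemma simultaneously unbounded with a \emph{single} choice of parameters, rather than one at a time — i.e., ruling out a clever threshold that escapes both. This is resolved exactly by the observation in the excerpt that for I.I.D.-style processes the optimal decision is memoryless and threshold-based: against $X_2 = 1$ the algorithm's only freedom is ``accept'' versus ``reject,'' and we have pinned down the cost of each branch ($1$ versus $pM$), so tuning $pM$ to sit on the knife's edge (and then perturbing $v$) makes both branches bad. Everything else is a routine computation of the two expectations, which I will not belabor here.
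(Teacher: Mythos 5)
Your final construction (the order $X_2, X_1$ with $X_2 \equiv 1$ deterministic and $X_1 \in \{0, M\}$ with $p = 1/M$) is precisely the paper's Example~\ref{exm:negative-arrival} up to relabeling of the two variables, including the observation that with $pM = 1$ both branches of the algorithm cost $1$ in expectation while the prophet pays $p = 1/M$, and the same factor-of-two argument for random order. The proof is correct and takes essentially the same route; the preliminary detour through $v = M$ and the remark about perturbing $v$ are unnecessary (at $p = 1/M$ both branches are already pinned to cost exactly $1$), but they do no harm.
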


\paragraph{I.I.D. Setting.} The negative results of the above theorem leads us to consider the case where the $X_i$'s are {\em independent} and {\em identically distributed (I.I.D.)}. Formally, our algorithm sees one-by-one the realizations of $n$ I.I.D. random variables $X_1, \dots, X_n$ drawn from a given distribution $\cD$ with support on $[0, +\infty)$. $\cD$ is defined by its \emph{Cumulative Distribution Function (CDF)} $F : [0, +\infty) \to [0, 1]$, where $F(x) = \Pr_{X \sim \cD}\brk{X \leq x}$, and let $f : [0, +\infty) \rightarrow [0,1]$ denote the \emph{Probability Density Function (PDF)} of $\cD$. 

Given an algorithm $\cA$, let $G_\cA(i)$ denote its expected cost, when it observes $i$ I.I.D. random variables drawn from $\cD$.
Thus, the expected cost of $\cA$ is denoted by $\E\brk{\cA} = G_\cA(n)$. For brevity, we use $\beta_n$ for the remainder of the paper to denote the expected cost of the prophet who can always select the minimum of the $n$ realizations, and $R_\cA(n)$ to denote the competitive ratio of $\cA$ against the prophet's cost $\beta_n$, i.e. $R_\cA(n) = \frac{G_\cA(n)}{\beta_n}$. Whenever the algorithm is clear from context, we drop the subscript and just use $G(n)$ and $R(n)$.

The following observation that characterizes $\beta_n$ is crucial in our analysis.

\begin{observation}\label{obs:betan}
For $n \geq 1$,
\[
\beta_n = \E\brk{\min_{i=1}^n X_i}=\int_0^\infty {\prn{1 - F(s)}^n \dif s}.
\]
\end{observation}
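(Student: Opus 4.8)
The plan is to prove the identity $\beta_n = \E[\min_{i=1}^n X_i] = \int_0^\infty (1-F(s))^n\,\dif s$ using the standard ``layer-cake'' (tail-integral) representation of the expectation of a non-negative random variable. First I would set $Y = \min_{i=1}^n X_i$ and note that $Y$ is itself a non-negative random variable, since each $X_i$ is supported on $[0,+\infty)$. For any such $Y$ one has the well-known formula $\E[Y] = \int_0^\infty \Pr[Y > s]\,\dif s$, which follows from writing $Y = \int_0^\infty \mathbf{1}[Y > s]\,\dif s$ pointwise and applying Tonelli's theorem to exchange the expectation and the integral (both are over non-negative integrands, so this is justified; if $\E[Y] = +\infty$ both sides are $+\infty$ and the identity still holds formally).

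The key step is then to compute $\Pr[Y > s]$ explicitly. Since $Y = \min_i X_i$, the event $\{Y > s\}$ is exactly the event $\{X_i > s \text{ for all } i = 1,\dots,n\}$. Using independence of the $X_i$'s and the fact that they are identically distributed with CDF $F$, this factorizes as
\[
\Pr[Y > s] = \prod_{i=1}^n \Pr[X_i > s] = \prod_{i=1}^n (1 - F(s)) = (1 - F(s))^n.
\]
Substituting this into the tail-integral formula gives $\beta_n = \int_0^\infty (1-F(s))^n\,\dif s$, which is the claim. A minor point worth a sentence is to handle whether the inequality inside the probability should be strict ($X_i > s$) or not ($X_i \geq s$); since $F$ can have at most countably many discontinuities, $\Pr[X_i > s]$ and $\Pr[X_i \geq s]$ agree for all but countably many $s$, so the two versions of the integrand differ on a set of Lebesgue measure zero and the integral is unaffected.

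There is really no serious obstacle here — this is a textbook computation — so the ``hard part'' is merely making sure the measure-theoretic bookkeeping is clean: justifying the interchange of integral and expectation via Tonelli (valid because everything is non-negative and measurable), and noting the strict-versus-weak-inequality subtlety above. One could alternatively present the proof without invoking Tonelli by integrating the CDF of $Y$ by parts, i.e. $\E[Y] = -\int_0^\infty s\,\dif(1 - F_Y(s)) = \int_0^\infty (1 - F_Y(s))\,\dif s$ after checking the boundary term $s(1-F_Y(s)) \to 0$ as $s \to \infty$ (which holds whenever $\E[Y] < \infty$), but the layer-cake argument is cleaner and handles the infinite-expectation case uniformly. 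I would present the layer-cake version in two or three lines.
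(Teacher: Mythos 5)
Your proof is correct and takes the same route as the paper: compute $\Pr[\min_i X_i > s] = (1-F(s))^n$ via independence and apply the tail-integral formula $\E[Y] = \int_0^\infty \Pr[Y>s]\,\dif s$. The paper derives the tail-integral identity by switching the order of integration in $\int_0^\infty x f_Y(x)\,\dif x$ rather than via the indicator/Tonelli argument, but these are the same computation; your extra remarks on the infinite-expectation case and the strict-versus-weak inequality are sound but not essential.
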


As it turns out, even in the I.I.D. setting, one cannot hope to obtain a bounded competitive factor for {\em all} distributions. The following counterexample due to Lucier \cite{brendan-personal} demonstrates the same.

\begin{theorem}[\cite{brendan-personal}]\label{thm:negative-arrival-equal-revenue}
For the I.I.D. cost prophet inequality problem, no algorithm is $\alpha$-factor competitive for any $\alpha > 0$, even when restricted to $n = 2$.
\end{theorem}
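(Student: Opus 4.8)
The plan is to exhibit, for $n=2$, a single distribution $\cD$ supported on $[0,+\infty)$ for which $\beta_2 = \E[\min(X_1,X_2)]$ is finite but $\E_{X\sim\cD}[X] = +\infty$, and then argue that any stopping strategy must incur infinite expected cost. The key tension is the one flagged in the introduction: the constraint is upwards-closed, so if the algorithm declines $X_1$ it is \emph{forced} to take $X_2$, paying $\E[X_2] = \E_{X\sim\cD}[X]$; and if it accepts $X_1$ with any positive probability over a set of realizations of unbounded value, that branch also contributes infinitely. The natural candidate is the equal-revenue distribution already singled out in the excerpt: $F(x) = 1 - 1/x$ for $x \ge 1$ (so $H(x) = \log x$), whose mean is $\int_1^\infty (1/x)\,dx = +\infty$.

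First I would verify via Observation~\ref{obs:betan} that $\beta_2 = \int_0^\infty (1-F(s))^2\,ds = 1 + \int_1^\infty s^{-2}\,ds = 2 < \infty$, so the prophet's benchmark is finite and a finite competitive ratio is the meaningful target. Next I would set up the decision structure of an arbitrary (possibly randomized, possibly threshold-based) online algorithm on two rounds: it is described by a measurable acceptance rule on the first observation. Let $A \subseteq [1,+\infty)$ be the set of first-round realizations on which the algorithm accepts $X_1$ (for a randomized rule, replace indicator by acceptance probability $p(x)\in[0,1]$). The expected cost splits as
\[
\E[\alg] = \int_A x\, f(x)\,dx \;+\; \Pr[X_1 \notin A]\cdot \E_{X\sim\cD}[X].
\]
If $\Pr[X_1\notin A] > 0$, the second term is $+\infty$ since $\E_{X\sim\cD}[X]=+\infty$. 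If instead $\Pr[X_1\notin A] = 0$, i.e. the algorithm accepts $X_1$ almost surely, then $\int_A x f(x)\,dx = \E_{X\sim\cD}[X] = +\infty$. Either way $\E[\alg] = +\infty$, while $\beta_2 = 2$, so no finite $\alpha$ satisfies \eqref{eq:cpi}. I would also note the randomized version: writing $\E[\alg] = \int_1^\infty x f(x)\big(p(x) + (1-p(x))\big)\,dx$ is not quite right since the $(1-p(x))$ branch pays $\E[X]$ not $x$; more carefully $\E[\alg] = \int p(x) x f(x)\,dx + \big(\int (1-p(x)) f(x)\,dx\big)\E[X]$, and this is finite only if both $\int (1-p(x))f(x)\,dx = 0$ and $\int p(x) x f(x)\,dx < \infty$; the first forces $p=1$ a.e., contradicting the second. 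Hence the conclusion is unchanged.

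The main obstacle, such as it is, is purely expository rather than mathematical: making the case analysis over \emph{all} stopping strategies airtight, including adaptive and randomized ones, and handling the measure-theoretic bookkeeping cleanly (e.g. that "the algorithm's decision in round $1$ depends only on $X_1$" in the $n=2$ case, so an acceptance rule is without loss of generality a function of $x$). There is no hard estimate to push through — the heavy lifting is done by the single observation that the equal-revenue distribution decouples a finite $\beta_2$ from an infinite mean, and that the upwards-closed constraint makes the mean unavoidable. One could alternatively phrase the dichotomy as: the algorithm's cost is at least $\min\{\E[X \mid X \in A], \E[X]\}$ weighted appropriately, but the direct split above is the cleanest.
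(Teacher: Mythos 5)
Your proposal is correct and takes essentially the same approach as the paper's proof: the equal-revenue distribution with $F(x) = 1 - 1/x$ on $[1,+\infty)$, the observation that $\beta_2 = 2$ is finite while $\E_{X\sim\cD}[X] = +\infty$, and the conclusion that any stopping rule must incur infinite expected cost. The only difference is expository --- the paper asserts ``the expected cost of any algorithm is $+\infty$, regardless of whether it stops at $X_1$ or at $X_2$'' in one line, whereas you spell out the acceptance-region dichotomy (and its randomized variant) that makes this rigorous.
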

\begin{proof}
Let $n = 2$ and consider the equal-revenue distribution, with support $[1,+\infty)$ and $F(x) = 1 - \nicefrac{1}{x}$. For this distribution, we have
\[
\E[X] = \int^\infty_0 {\prn{1 - F(x)} \dif x} = 1 + \int^\infty_1 {\prn{1 - F(x)} \dif x} = 1 + \int^\infty_1 {\frac{1}{x} \dif x} = + \infty.
\]
In this case, the expected cost of any algorithm is $\E\brk{ALG} = +\infty$, regardless of whether it stops at $X_1$ or at $X_2$. However, the prophet is always able to select the minimum of $X_1$ and $X_2$, which is
\[
\opt = \beta_2 = \int_0^\infty {\prn{1 - F(x)}^2 \dif x} = 1 + \int_1^\infty {\frac{1}{x^2} \dif x} = 2.
\]
Therefore, no algorithm can achieve a finite competitive ratio.
\end{proof}

As it turns out, this strong negative result relies on a certain peculiarity of the distribution related to its {\em hazard rate}. We circumvent this counterexample for a broad class of distributions, which we call {\em Entire distributions} and define below, that includes almost all commonly used distributions such as uniform, exponential, Gaussian, beta and others.

\paragraph{Hazard Rate and Entire Distributions.}
All of our results make heavy use of the {\em hazard (failure) rate} of a distribution.We refer the reader to \cite{ifr-book} for an extensive overview. Intuitively, for discrete distributions, the hazard rate at a point $t$ represents the probability that an event occurs at time $t$, given that the event has not occurred up to time $t$. For continuous distributions, the hazard rate instead quantifies the instantaneous rate of the event's occurrence at time $t$.

\begin{definition}[Hazard Rate]\label{def:hazard-rate}
For a distribution $\cD$ with cumulative distribution function $F$ and probability density function $f$, the {\em hazard rate} of $\cD$ is defined as
\[
h(x) \triangleq \frac{f(x)}{1 - F(x)},
\]
for all $x$ in the support of $\cD$. Furthermore, let $H$ denote the integral of $h$, which we call the {\em cumulative hazard rate} of $\cD$,
\[
H(x) \triangleq \int^x_0 {h(u) \dif u}.
\]
\end{definition}

Next, we express $\beta_n$ in terms of the hazard rate of $\cD$.
\begin{observation}\label{obs:betan-hazard}
\[
\beta_n = \int^\infty_0 {e^{-n H(u)} \dif u}.
\]
\end{observation}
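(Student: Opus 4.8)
The goal is to rewrite the integral from Observation~\ref{obs:betan}, namely $\beta_n = \int_0^\infty (1 - F(s))^n \dif s$, purely in terms of the cumulative hazard rate $H$. The plan is to establish the elementary identity $1 - F(x) = e^{-H(x)}$ that links the survival function to the cumulative hazard rate, and then substitute.

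First I would recall the definitions: $h(u) = f(u)/(1 - F(u))$ and $H(x) = \int_0^x h(u)\dif u$. Observe that on the support of $\cD$ (with infimum of support at $0$, as assumed), the function $u \mapsto 1 - F(u)$ is positive, so $\log(1 - F(u))$ is well-defined and differentiable, with derivative $\frac{\dif}{\dif u}\log(1 - F(u)) = \frac{-f(u)}{1 - F(u)} = -h(u)$. Integrating from $0$ to $x$ and using $F(0) = 0$ (so $\log(1 - F(0)) = 0$) gives $\log(1 - F(x)) = -\int_0^x h(u)\dif u = -H(x)$, hence $1 - F(x) = e^{-H(x)}$.

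Then I would simply substitute this into Observation~\ref{obs:betan}:
\[
\beta_n = \int_0^\infty (1 - F(s))^n \dif s = \int_0^\infty \prn{e^{-H(s)}}^n \dif s = \int_0^\infty e^{-n H(s)} \dif s,
\]
which is the claimed expression.

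The only subtlety — and the main thing to be careful about rather than a genuine obstacle — is regularity at the boundary: one should note that the argument uses $F(0) = 0$, and that if $\cD$ has an atom or $F$ is not absolutely continuous the density $f$ (and hence $h$) may not exist in the classical sense. Since the paper restricts attention to distributions admitting a PDF $f$ and (later) to Entire distributions whose $H$ has a convergent Puiseux expansion, $H$ is well-defined and the identity $1 - F = e^{-H}$ holds throughout the support; for $x$ below the support $1 - F(x) = 1$ and $H(x) = 0$, so the identity extends trivially and the integral representation is unaffected. I would state this identity $1-F(x)=e^{-H(x)}$ explicitly as it will be reused, then conclude.
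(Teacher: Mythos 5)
Your proof is correct and follows essentially the same route as the paper: both derive the identity $1 - F(x) = e^{-H(x)}$ by recognizing $h(u) = -\frac{\dif}{\dif u}\ln(1-F(u))$ and integrating from $0$, then substitute into Observation~\ref{obs:betan}. The extra remarks you add about regularity and the behavior below the support are fine but not needed given the paper's standing assumptions.
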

\begin{proof}
Notice that,
\[
H(x) = \int^x_0 {h(u) \dif u} = \int^x_0 {\frac{f(u)}{1 - F(u)} \dif u} = - \int^x_0 {\prn{\ln\prn{1 - F(u)}}' \dif u} = - \ln\prn{1 - F(x)},
\]
which implies that $1 - F(x) = e^{-H(x)}$, and thus, from Observation~\ref{obs:betan}, we have $\beta_n = \int^\infty_0 {e^{-n H(u)} \dif u}.$
\end{proof}

To characterize the class of distributions for which our results hold, we need a generalization of the concept of a Taylor series for a function.

\begin{definition}[Puiseux Series]\label{def:puiseux-series}
We say that a function $f : \R \to \R$ has a Puiseux series expansion if there exist integers $n > 0$ and $i_0 \in \Z$ as well as coefficients $a_1, a_2, \dots$ where $a_1 \neq 0$, such that
\[
f(x) = \sum_{i = i_0}^\infty {a_i x^{\nicefrac{i}{n}}}.
\]
\end{definition}
In other words, Puiseux series are a generalization of Taylor series in that
they allow for fractional exponents in the indeterminate, as long as they
have a bounded denominator. For the remainder of the paper, we will use the simpler form $f(x) = \sum_{i = 1}^\infty {a_i x^{d_i}}$ to denote the Puiseux series of a function $f$, with the understanding that
$d_1, d_2, \dots$ have a bounded denominator. Furthermore, we only focus on
Puiseux series for functions $f : [0, +\infty) \to [0, +\infty)$.

The smallest exponent of the indeterminate in the Puiseux series, $d_1 = \nicefrac{i_0}{n}$ is called the {\em valuation} of $f$ and plays a significant role in our results. The {\em radius of convergence} of a Puiseux series around $0$ is the largest number $r \geq 0$ such that the series converges if $x$ is substituted for a non-zero real number $t \leq r$. A Puiseux series is {\em convergent} at a point $x$ if $x \leq r$.

Now we are ready to define the class of distributions that we focus on in this paper.
\begin{definition}[Entire Distribution]\label{def:entire-distr}
A continuous distribution $\cD$ with support in $[0, + \infty)$ and cumulative hazard rate $H$ is called {\em Entire} if $\E_{X\sim \cD}[X] < +\infty$ and $H$ has a Puiseux series around $0$, $H(x) = \sum_{i = 1}^\infty {a_i x^{d_i}}$, the Puiseux series is not identically zero and is convergent for every point in the support of $\cD$.
\end{definition}

\begin{remark}\label{rmk:support-point}
The definitions and the analysis of our results assumes that the left-most point of the support of $\cD$ is $0$. However, our results easily apply to other distributions with support on $[a,+\infty)$ or $[a,b]$, with $a > 0$, due to the fact that $H(a) = 0$ by definition, and that the Puiseux series of $H$ will be of the form $H(x) = \sum_{i = 1}^\infty {a_i \prn{x - a}^{d_i}}$.
\end{remark}

We note that the class of Entire distributions contains several well-known distributions, including the uniform, exponential, Gaussian, arcsine, beta, gamma, Rayleigh and Weibull distribution, among others. The equal-revenue distribution, however, is not Entire since it has cumulative hazard rate $H(x) = \log{x}$ and the Puiseux series of $\log{x}$ around $x = 1$ is $\sum_{i \geq 0} {\frac{\prn{-1}^{i+1}}{i} \prn{x-1}^i}$, which converges only for $x \in [1,2]$, whereas the support of the distribution is $[1, +\infty)$. This is the reason why Theorem~\ref{thm:negative-arrival-equal-revenue} holds.

Since $h$ is a non-negative function, $H$ is a non-negative and monotonically non-decreasing function. Using this we obtain the following observation.

\begin{observation}\label{obs:a-positive}
Consider an Entire distribution $\cD$ supported on $[0, +\infty)$ with cumulative hazard rate $H(x) = \sum_{i = 1}^\infty {a_i x^{d_i}}$, where $d_1 < d_2 < \dots$. Then, $a_1 > 0$ and $d_1> 0$.
\end{observation}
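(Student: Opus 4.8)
The plan is to exploit three elementary properties of $H$ — it is non-negative, non-decreasing, and satisfies $H(0)=0$ — together with the absolute convergence of its Puiseux series near $0$, in order to show that the lowest-order term $a_1 x^{d_1}$ governs the behaviour of $H(x)$ as $x \to 0^+$. Both claimed inequalities, $a_1 > 0$ and $d_1 > 0$, then follow quickly.

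First I would establish the asymptotic identity $\lim_{x\to 0^+} H(x)/x^{d_1} = a_1$. Fix any point $\rho_0 > 0$ in the support of $\cD$; the Puiseux series of $H$ converges there, hence it converges absolutely for $0 < x < \rho_0$, so $M := \sum_{i\ge 1}|a_i|\,\rho^{d_i} < +\infty$ for any fixed $\rho \in (0,\rho_0)$. Then for $0 < x \le \rho$, using $d_i \ge d_2$ for all $i \ge 2$ and the monotonicity of $t \mapsto t^{c}$ for $c \ge 0$,
\[
\left|\frac{H(x)}{x^{d_1}}-a_1\right| \;=\; \left|\,\sum_{i\ge 2} a_i\, x^{\,d_i-d_1}\right| \;\le\; x^{\,d_2-d_1}\sum_{i\ge 2}|a_i|\,\rho^{\,d_i-d_2} \;\le\; \frac{M}{\rho^{\,d_2}}\; x^{\,d_2-d_1},
\]
and since $d_2 > d_1$ the right-hand side tends to $0$ as $x \to 0^+$, which proves the identity.

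Next I would rule out $d_1 \le 0$. Since $h \ge 0$ we have $H(0)=0$ and $H$ non-decreasing, so $0 \le H(x) \le H(\rho) < +\infty$ on $[0,\rho]$ (choosing $\rho$ small enough that $F(\rho) < 1$, which is possible because $\E_{X\sim\cD}[X] < +\infty$); thus $H$ is bounded near $0$. If $d_1 < 0$, the identity forces $|H(x)| \to +\infty$ as $x \to 0^+$, a contradiction. If $d_1 = 0$, the identity gives $H(x) \to a_1$; but $\cD$ being a continuous distribution means $F$ is continuous with $F(0) = \Pr_{X\sim\cD}[X = 0] = 0$, so $H(x) = -\ln(1-F(x)) \to 0$, forcing $a_1 = 0$ and contradicting $a_1 \ne 0$. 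Hence $d_1 > 0$. Finally, with $d_1 > 0$ we have $x^{d_1} > 0$ for all $x > 0$, so $H(x)/x^{d_1} \ge 0$ on $(0,\rho]$; letting $x \to 0^+$ in the identity gives $a_1 \ge 0$, and since $a_1 \ne 0$ by definition of the (not identically zero) Puiseux series, $a_1 > 0$.

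The only step that requires genuine care is the tail estimate in the second paragraph: one must invoke that a Puiseux series converges absolutely strictly inside its radius of convergence, so that its tail beyond the leading term is uniformly $O\big(x^{\,d_2-d_1}\big)$ near $0$. Given that, the rest is immediate from monotonicity of $H$, non-negativity of $h$, the normalization $H(0)=0$, and continuity of $F$ at $0$.
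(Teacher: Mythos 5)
Your proof is correct, and it actually does a little more than the paper's own argument. Both proofs rest on the same key heuristic — the leading term $a_1 x^{d_1}$ of a convergent Puiseux series dominates as $x \to 0^+$ — but the two arguments differ in execution. For $a_1>0$, the paper directly asserts the existence of an $x_*$ at which $|a_1 x_*^{d_1}|$ exceeds $\sum_{i\ge 2}|a_i x_*^{d_i}|$ and concludes $H(x_*)<0$ if $a_1<0$; you instead establish the clean asymptotic $H(x)/x^{d_1}\to a_1$ via an explicit absolute-convergence bound and then pass to the limit. Same idea, your packaging is tighter. For the sign of $d_1$, the routes genuinely diverge: the paper works with the derivative $h(x)=\sum_i a_i d_i x^{d_i-1}$ and its nonnegativity, which kills $d_1<0$ but says nothing when $d_1=0$ (then $a_1 d_1=0$ and the dominance argument is vacuous), and indeed the paper's text only claims $d_1\ge 0$, leaving a gap against the stated $d_1>0$. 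You work directly with $H$: boundedness of $H$ near $0$ rules out $d_1<0$, and $H(0^+)=0$ (from $H(0)=0$ and continuity of $H=-\ln(1-F)$) rules out $d_1=0$. That second step is exactly what the paper's argument is missing, so your proof is the more complete of the two. One tiny slip: $F(0)=\Pr[X\le 0]$, not $\Pr[X=0]$; for a continuous distribution on $[0,\infty)$ both vanish, so the conclusion is unaffected, but the identification should be stated correctly.
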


Distributions with monotonically increasing hazard rate have found a special place within mechanism design literature, originally introduced for the study of revenue maximization. They are known as MHR (or IFR) distributions.

\begin{definition}[Monotone Hazard Rate Distribution]\label{def:mhr}
A distribution $\cD$ is called a {\em Monotone Hazard Rate (MHR)} distribution if and only if the hazard rate function $h$ (Definition \ref{def:hazard-rate}) of $\cD$ is monotonically increasing.
\end{definition}

\paragraph{Gamma function.}

The Gamma ($\Gamma$) function -- which is an extension of the factorial function over the reals -- and its relatives arise in our analysis of the expected cost of the optimal algorithm. For $x > 0$, it is defined as $\Gamma(x) = \int^\infty_0 {t^{x-1} e^{-t} \dif t}$. Of particular interest to us is the {\em lower incomplete Gamma function} $\gamma$, which is defined for $s > 0, x \geq 0$ as $\gamma(s, x) = \int^x_0 {t^{s-1} e^{-t} \dif t}$.

To assist the reader, we include a primer on the Gamma function and its relatives in Appendix~\ref{app:gamma}, along with a few technical lemmas used in our analysis.

\section{Optimal Algorithm: Constant Approximation via Multiple Thresholds}\label{sec:multi-threshold}

In this section, we focus on {\em optimal algorithms} for the cost prophet inequality (CPI) setting, i.e., algorithms that achieve the smallest possible $\alpha$ in \eqref{eq:a-approx}. We show that these algorithms achieve a (distribution-dependent) constant-factor CPI for the class of Entire distributions, and a $2$-factor CPI for Entire MHR distributions. 

We first observe that, just as in the rewards maximization setting, it suffices to focus on threshold-based algorithms to achieve the optimal competitive ratio. A threshold-based algorithm decides thresholds $\tau_1, \dots, \tau_n$ upfront using only the knowledge of the underlying distribution $\cD$, and selects the first $X_i \leq \tau_i$. Since the thresholds do not depend on the realizations of the $X_i$'s, the optimal threshold-based algorithm is an {\em oblivious} algorithm.

\begin{proposition}\label{prp:thresh-is-fine}
For any instance of the cost prophet inequality setting, one can achieve the
optimal competitive ratio with a threshold-based oblivious algorithm.
\end{proposition}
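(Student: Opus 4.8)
The plan is to prove this via a standard backwards-induction / exchange argument showing that the optimal strategy is memoryless, and that any memoryless strategy in this I.I.D.-style setting (or, more precisely, in any setting where future decisions do not depend on past realizations) is equivalent to a threshold strategy with thresholds fixed in advance.

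Here is the sequence of steps I would carry out. First, I would set up the dynamic program. Let $V_k$ denote the optimal expected cost incurred by the decision maker when there are $k$ random variables remaining to be observed, where ``optimal'' is over all (possibly adaptive, possibly randomized) stopping strategies for those $k$ rounds. By the principle of optimality, when observing the first of these $k$ variables, say with realization $x$ drawn from $\cD_{n-k+1}$, the optimal choice is to compare: stopping now yields cost $x$, while continuing yields expected cost $V_{k-1}$ (the crucial point being that the distributions of the remaining $k-1$ variables, and hence $V_{k-1}$, do not depend on $x$ or on any earlier realizations — this is where independence of the $X_i$'s is used). Thus the optimal action is to stop iff $x \le V_{k-1}$, i.e., the optimal strategy in round $n-k+1$ is exactly the threshold rule with $\tau_{n-k+1} = V_{k-1}$, with $V_0 = +\infty$ (equivalently $\tau_n = +\infty$, forcing acceptance in the last round). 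This defines the thresholds $\tau_1, \ldots, \tau_n$ purely as functions of the distributions $\cD_1, \ldots, \cD_n$, with no dependence on realizations, so the resulting algorithm is oblivious.

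Second, I would make the above rigorous by induction on $k$: the base case $k=1$ is immediate since one must accept, giving $V_0 = +\infty$ and $V_1 = \E_{X\sim\cD_n}[X]$; for the inductive step, I would verify that among all strategies for $k$ rounds — which decompose into a (possibly randomized, possibly realization-dependent) decision in the first round followed by a continuation strategy — the pointwise-optimal first-round decision is the threshold comparison against $V_{k-1}$, and that the continuation strategy may without loss be taken to be the optimal one achieving $V_{k-1}$, again because its performance is unaffected by the first realization. One subtlety worth a sentence: if $\E[X_i]=+\infty$ for some $i$ the $\tau$'s may be infinite and the ``optimal'' cost infinite, but the statement is still vacuously/trivially true, so I would either note this or restrict to the case where expectations are finite, which is the interesting regime (and is guaranteed in the Entire-distribution setting of the main theorems).

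**The main obstacle** I anticipate is purely expository rather than mathematical: being careful about the quantification over strategy classes (adaptive vs. oblivious, deterministic vs. randomized) so that the exchange argument is airtight, and handling boundary cases (infinite expectations, ties at the threshold — which can be broken arbitrarily without affecting the expected cost since $\cD$ is continuous in our setting of interest). The core argument is the elementary observation that, because the $X_i$ are independent, the value-to-go $V_{k-1}$ is a constant rather than a random quantity depending on history, which collapses the optimal policy to a fixed-threshold comparison in each round. Since the paper states this is included ``for the sake of completeness,'' I would keep the write-up brief and point the reader to the analogous arguments in the classical I.I.D. prophet inequality literature (e.g.\ \cite{correa, renato-iid}).
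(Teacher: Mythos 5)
Your proposal is correct and closely matches the paper's approach: both arguments proceed by backwards induction, both hinge on independence to conclude that the continuation value is a deterministic constant (not a function of past realizations), and both conclude that the optimal per-round rule is a threshold comparison against that constant, which then pins down $\tau_i$ exactly as in Lemma~\ref{lem:opt-thresholds}. The one genuine stylistic difference is in how ``threshold rules suffice among all strategies'' is established: the paper's sketch starts from an arbitrary (possibly randomized) acceptance rule, writes the acceptance probability as a function $p^{\cA}(\cdot \mid z)$ of the observed realization $z$, fixes the total acceptance mass $L$, and then argues via a greedy exchange that concentrating that mass on the smallest realizations --- i.e., a threshold at $F^{-1}(L)$ --- can only decrease expected cost, so every strategy is dominated by some threshold. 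You instead appeal to the principle of optimality, comparing ``stop now at cost $x$'' against the known continuation value $V_{k-1}$. The exchange argument is slightly more self-contained in that it explicitly rules out randomized or non-monotone acceptance rules, whereas your Bellman phrasing implicitly inherits the existence of deterministic optimal policies from standard finite-horizon dynamic-programming theory; conversely, your phrasing makes the identity $\tau_{n-k+1} = V_{k-1}$ (and hence the obliviousness of the resulting algorithm) immediate. Your handling of the $\E[X_i] = +\infty$ edge case is sensible and does not appear in the paper's sketch.
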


Intuitively, this is because the algorithm's decision in round $i$ is independent of past realizations and only depends on the realization of $X_i$ and the number of remaining random variables, i.e. it is a {\em memoryless} process.

Given Proposition~\ref{prp:thresh-is-fine}, we focus on threshold-based algorithms. Let $\tau_1,\dots,\tau_n$ denote the thresholds of the {\em optimal algorithm}. As it turns out, the optimal thresholds have a very natural interpretation: the algorithm should select the next random variable $X_i$ if and only if its value is smaller than the value it expects to receive by ignoring $X_i$ and continuing the process. That is, the optimal threshold for the next random variable when we have $k$ realizations left to see is exactly the expected cost incurred by an optimal algorithm when its input is $k-1$ I.I.D. random variables.

We then analyze the performance of the optimal-threshold algorithm and show it obtains a constant-factor competitive ratio for the class of Entire distributions.
We identify each Entire distribution by its cumulative hazard rate $H(x)$ and show
that the constant factor depends on the growth rate of $H$\footnote{Recall that
$H$ is non-decreasing, since its derivative, the hazard rate function $h$, is
non-negative.}. Interestingly, the constant-factor is dominated by the {\em valuation} of the Puiseux series of $H$, i.e. the smallest degree in the series. Intuitively, this is because as $n$ gets large enough, the contribution of the other degrees apart from the valuation of $H$ becomes negligible, as our analysis shows, and the behaviour of $H$ is dominated by its valuation. In particular, for a
distribution with $H(x) = \sum_{i = 1}^\infty {a_i \: x^{d_i}}$, where $d_1 < d_2 < \dots$, the precise constant factor we obtain is
\[
\lambda(d_1) = \frac{\prn{1+1/d_1}^{1/d_1}}{\Gamma(1+1/d_1)}.
\]
Perhaps surprisingly, we show that this constant is tight for the distribution with $H(x) = x^{d_1}$.

We view this as both a positive and a negative result; while we can achieve a constant-factor competitive ratio for every fixed distribution, the constant can be arbitrarily large, as $\lim_{d \to 0} \lambda(d) = +\infty$. Since many interesting distributions, with the exception of some pathogenic cases, are Entire distributions, our results imply a (distribution-dependent) constant-factor cost prophet inequality for almost all distributions. An interesting question is whether this behaviour is due to some technicality that exists for infinite support distributions. We answer this negatively in Appendix~\ref{app:bounded-supp}, where, for any $\alpha > 0$, we provide a family of distributions, each supported on $[0,1]$, for which the competitive ratio of the optimal algorithm is exactly $\lambda(\alpha)$.

Finally, we focus on the special case of MHR distributions and show that if $\cD$ is an {\em Entire MHR distribution} with Puiseux series $H(x) = \sum_{i = 1}^\infty {a_i \: x^{d_i}}$, then $d_1 \geq 1$. Since $\lambda$ is decreasing in $d_1$ and $\lambda(1) = 2$, this directly implies a tight $2$-competitive ratio for Entire MHR distributions.

\subsection{Characterizing the Optimal Thresholds}\label{sec:opt-thrs}

In this section we obtain an exact formulation for the optimal thresholds and, using these, design an optimal threshold-based algorithm. In what follows, we use $G(i)$ to denote $G_{\optalg}(i)$ for brevity, where $\optalg$ is an optimal algorithm.

\begin{lemma}\label{lem:opt-thresholds}
For the cost prophet inequality problem with random variables $X_1, X_2, \dots, X_n$, $\tau_n = +\infty$ for every algorithm. For $1 \leq i \leq n-1$, the optimal threshold for the random variable $X_i$ is
\[
\tau_i = G(n-i).
\]
\end{lemma}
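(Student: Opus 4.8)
The plan is to proceed by backward induction on the number of remaining random variables, using the fact (Proposition~\ref{prp:thresh-is-fine}) that it suffices to consider threshold-based oblivious algorithms. First I would observe that $\tau_n = +\infty$ is forced: if the algorithm reaches $X_n$ having discarded everything before, it \emph{must} take $X_n$ regardless of its realization, so effectively any finite threshold on $X_n$ is irrelevant — the algorithm accepts $X_n$ with probability $1$ conditioned on reaching it. Hence $G(1) = \E_{X\sim\cD}[X]$, which anchors the induction. Next, I would set up the recursive structure: when the algorithm faces $X_i$ with $n-i$ variables still to come afterwards, it has already "used up" $i-1$ of them, and by memorylessness the continuation value — the expected cost incurred if it discards $X_i$ — is exactly $G(n-i)$, the optimal expected cost on an instance of $n-i$ fresh I.I.D. draws. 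This is the heart of the argument and relies on the process being memoryless: the past realizations are discarded and do not affect the distribution of future costs nor the set of available actions.

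Given this, the per-round decision is a pointwise optimization: upon seeing $X_i = x$, the algorithm should stop (pay $x$) if $x \le G(n-i)$ and continue (pay $G(n-i)$ in expectation) otherwise, since it wants to \emph{minimize} cost. This immediately yields that the optimal threshold is $\tau_i = G(n-i)$, and simultaneously gives the recursion
\[
G(k) = \int_0^{\tau} x\, f(x)\, \dif x + \prn{1 - F(\tau)} G(k-1), \qquad \tau = G(k-1),
\]
for $k = n - i + 1 \geq 2$ — i.e. with probability $F(\tau)$ we accept a draw below $\tau$ and pay its (conditional) expectation, and with the remaining probability we fall through to an instance with one fewer variable. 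To make the induction rigorous I would argue that this greedy/threshold policy is in fact optimal among \emph{all} policies (not just threshold ones): conditioned on reaching round $i$, the optimal cost-to-go is $\min(x, G(n-i))$ pointwise in $x$, by an exchange/interchange argument or by invoking the standard optimality of backward induction for finite-horizon stopping problems. This is where one must be slightly careful, but Proposition~\ref{prp:thresh-is-fine} already licenses the restriction to oblivious threshold algorithms, so it suffices to show that among threshold algorithms the choice $\tau_i = G(n-i)$ minimizes $G(n)$, which follows because each $\tau_i$ can be optimized independently given the downstream values.

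The main obstacle, such as it is, is purely bookkeeping: one must verify that $G(k)$ is well-defined (finite) for Entire distributions — which holds since $\E[X] < +\infty$ by Definition~\ref{def:entire-distr}, and $G(k) \le G(1) = \E[X]$ because having more variables can only help — and that the recursion is self-consistent, i.e. the threshold at round $i$ genuinely equals the optimal continuation value rather than some other policy's value. Since we are in the I.I.D. setting and the horizon is finite, the backward induction terminates cleanly at $G(1)$, and monotonicity ($G(k) \le G(k-1)$, because one can always ignore the extra first variable) confirms that the thresholds $\tau_i = G(n-i)$ are non-increasing in $i$, as one would expect: the more draws remain, the pickier the algorithm can afford to be. I do not anticipate any genuine difficulty here; the proof is essentially the textbook backward-induction argument for optimal stopping, specialized to the memoryless I.I.D. cost setting, and I would present it as such "for the sake of completeness," as the authors indicate.
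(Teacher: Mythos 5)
Your proposal is correct and follows the same backward-induction skeleton as the paper: both anchor at $\tau_n = +\infty$ with $G(1) = \E_{X \sim \cD}[X]$, and then pin down each $\tau_i$ given the optimal downstream values via memorylessness. The route you take through the key optimality step, however, is genuinely different. You argue by the Bellman/dynamic-programming principle: conditioned on reaching round $i$ and seeing $X_i = x$, the optimal cost-to-go is $\min(x, G(n-i))$, so the threshold $G(n-i)$ is forced immediately. The paper instead does an explicit calculus verification: it writes $G(n-i+1)$ as a function of a generic $\tau_i$ via integration by parts, namely $G(n-i+1) = \tau_i F(\tau_i) - \int_0^{\tau_i} F(u)\,\dif u + (1 - F(\tau_i)) G(n-i)$, and proves that $\tau_i = G(n-i)$ is the minimizer by reducing to the inequality $F(\tau_i)\prn{G(n-i) - \tau_i} \le \int_{\tau_i}^{G(n-i)} F(u)\,\dif u$, which it establishes by a two-case application of the mean value theorem using monotonicity of $F$. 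Your argument is conceptually cleaner and reaches the same conclusion, but leans on ``standard optimality of backward induction for finite-horizon stopping problems'' as a black box, whereas the paper is self-contained and, in passing, extracts the explicit recursion that is reused in Lemma~\ref{lem:G-recurrence}. One small redundancy in your write-up: you simultaneously invoke Proposition~\ref{prp:thresh-is-fine} to restrict to threshold algorithms and also propose proving optimality over all policies; once Proposition~\ref{prp:thresh-is-fine} is in hand, the per-round decision is just the pointwise comparison of $x$ against $G(n-i)$, and the broader claim is unnecessary.
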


Lemma~\ref{lem:opt-thresholds} implies that the following threshold-based algorithm is an optimal algorithm; it achieves the best possible competitive ratio for the cost prophet inequality (CPI) problem.

\begin{algorithm}{\sc{Optimal Threshold Algorithm}}{\cD}\label{alg:optimal}
Set $\tau_n \from + \infty$ and $\tau_{n-1} \from \E_{X \sim \cD}\brk{X}$.
\\  \For{$i \from n-2$ \KwTo $1$}{
    $\tau_i \from F(\tau_{i+1}) \E\brk{X \midd X \leq \tau_{i+1})} + \prn{1 - F(\tau_{i+1})} \tau_{i+1}$.
}
\For{$i \from 1$ \KwTo $n$}{
    Let $z_i$ be the realization of $X_i$.
\\  \If{$z_1, \dots, z_{i-1}$ {\normalfont were not selected and } $z_i \leq \tau_i$}{
        Select $z_i$.
    }
}
\end{algorithm}

\subsection{Constant Factor Competitive Ratio}

In this section, we show Theorem~\ref{thm:hazard-constant}, restated below.

\begin{reptheorem}{thm:hazard-constant}
For the I.I.D. setting under any given non-negative Entire distribution $\cD$, for large enough $n$, there exists a $\lambda(d)$-factor cost prophet inequality, where
\[
\lambda(d) = \frac{\prn{1+1/d}^{1/d}}{\Gamma\prn{1+1/d}},
\]
$d$ is the smallest degree of the Puiseux series of $H$, and $\Gamma(\cdot)$ is the Gamma function.

Moreover, this constant is tight for the distribution with cumulative hazard rate $H(x) = x^d$.
\end{reptheorem}

\subsubsection{Upper Bound}

\begin{theorem}\label{thm:hazard-constant-upper}
Let $\cD$ be an Entire distribution on $[0, +\infty)$ with cumulative hazard rate $H$, which has a Puiseux series $H(x) = \sum_{i = 1}^\infty {a_i x^{d_i}}$ where $d_1 < d_1 < \dots$, and let
\[
\lambda(d_1) = \frac{\prn{1+1/d_1}^{1/d_1}}{\Gamma\prn{1+1/d_1}}.
\]
Then, Algorithm~\ref{alg:optimal} achieves a $\lambda(d_1)$-competitive ratio with respect to $\beta_n$, for large enough $n$.
\end{theorem}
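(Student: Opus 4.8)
The plan is to express the competitive ratio $R(n) = G(n)/\beta_n$ purely in terms of the cumulative hazard rate $H$ and then show it converges to $\lambda(d_1)$ as $n \to \infty$. First I would set up a recurrence for $G(n)$, the expected cost of the optimal algorithm on $n$ i.i.d.\ draws. By Lemma~\ref{lem:opt-thresholds} the optimal threshold with $k$ variables remaining is $\tau = G(k-1)$, so conditioning on whether $X_1 \le G(n-1)$ gives
\[
G(n) = \int_0^{G(n-1)} x f(x)\,\dif x + \prn{1 - F(G(n-1))}\,G(n-1).
\]
Using $1 - F(x) = e^{-H(x)}$ (Observation~\ref{obs:betan-hazard}) and integration by parts, this can be rewritten as $G(n) = \int_0^{G(n-1)} e^{-H(x)}\,\dif x$, i.e.\ $G(n)$ is a monotone sequence defined by iterating the map $t \mapsto \int_0^t e^{-H(x)}\,\dif x$. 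I would first argue $G(n)$ is decreasing and converges to $0$ (since the map is a contraction near $0$ when $H(0)=0$), so that asymptotically we only care about the behaviour of $H$ near $0$, where $H(x) \approx a_1 x^{d_1}$ by the Puiseux expansion.

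Next I would pin down the rate at which $G(n) \to 0$. Heuristically, near $0$ we have $G(n) \approx \int_0^{G(n-1)} e^{-a_1 x^{d_1}}\,\dif x$; writing $g_n = G(n)$ and expanding $e^{-a_1 x^{d_1}} \approx 1 - a_1 x^{d_1}$ for small $x$ gives $g_n \approx g_{n-1} - \frac{a_1}{d_1+1} g_{n-1}^{\,d_1+1}$, a standard recurrence whose solution satisfies $g_n \sim \prn{\frac{d_1+1}{a_1 d_1}}^{1/d_1} n^{-1/d_1}$. Making this rigorous is the crux: I would use the substitution/bootstrapping argument — first establish crude bounds $c_1 n^{-1/d_1} \le G(n) \le c_2 n^{-1/d_1}$, then feed these back into the recurrence to sharpen the constant, controlling the error from the higher-order terms $a_i x^{d_i}$ with $d_i > d_1$ (these contribute lower-order corrections precisely because $G(n) \to 0$). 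Simultaneously I need the prophet's cost: by Observation~\ref{obs:betan-hazard}, $\beta_n = \int_0^\infty e^{-nH(u)}\,\dif u$, and by Lemma~\ref{lem:poly-hazard-beta} (referenced in the excerpt) this is asymptotically dominated by the valuation term, giving $\beta_n \sim \frac{\Gamma(1+1/d_1)}{(a_1 n)^{1/d_1}}$ via the substitution $t = n a_1 u^{d_1}$ and recognizing the Gamma integral.

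Finally I would take the ratio:
\[
R(n) = \frac{G(n)}{\beta_n} \;\longrightarrow\; \frac{\prn{\frac{d_1+1}{a_1 d_1}}^{1/d_1}}{\frac{\Gamma(1+1/d_1)}{a_1^{1/d_1}}} = \frac{\prn{1 + 1/d_1}^{1/d_1}}{\Gamma(1+1/d_1)} = \lambda(d_1),
\]
where the $a_1$ and $n$ dependence cancels cleanly — which is the reassuring sign that the asymptotic constants are right, and explains why the answer depends only on $d_1$. For "large enough $n$" it then suffices that $R(n) \le \lambda(d_1) + o(1)$, or one can show $R(n)$ is eventually monotone (as the remark suggests and Lemma~\ref{lem:ratio-incr} confirms for pure powers). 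The tightness claim for $H(x) = x^d$ follows because there all the approximations above are exact up to the $e^{-x^d} \approx 1 - x^d$ step, and a matching lower bound on $R(n)$ comes from the same recurrence analysis with inequalities reversed.

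I expect the main obstacle to be the rigorous asymptotic analysis of the nonlinear recurrence $G(n) = \int_0^{G(n-1)} e^{-H(x)}\,\dif x$: extracting not just the $n^{-1/d_1}$ rate but the exact leading constant requires carefully controlling how the tail of the Puiseux series and the quadratic-and-higher terms in the expansion of the exponential perturb the recurrence, and standard "iterate a contraction" arguments only give the rate, not the constant. The bootstrapping step — plug in crude bounds, integrate, extract a sharper constant, repeat — is where all the real work lives, and matching it against the independently-computed asymptotics of $\beta_n$ from Lemma~\ref{lem:poly-hazard-beta} is what closes the loop.
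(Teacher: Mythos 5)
Your heuristic calculations are all correct — the nonlinear recurrence $g_n \approx g_{n-1} - \frac{a_1}{d_1+1}\,g_{n-1}^{d_1+1}$ does have leading asymptotic $\left(\tfrac{d_1+1}{a_1 d_1}\right)^{1/d_1} n^{-1/d_1}$, and the $a_1$'s cancel against $\beta_n$ to give precisely $\lambda(d_1)$. But you are taking a genuinely different route from the paper, and the route you chose has a sign problem that your plan does not actually close.

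The paper never computes the exact asymptotic of $G(n)$. Instead it proves the one-sided inequality $G(n) \le \lambda(d_1)\beta_n$ directly by induction: assume $G(n) \le \lambda(d_1)\beta_n$, substitute this into the upper limit of $\int_0^{G(n)} e^{-H(u)}\,\dif u$, change variables $x = a_1 u^{d_1}$ to surface a lower incomplete Gamma, expand via Fact~\ref{fct:gamma-series}, and verify (Claim~\ref{clm:ratio-upper-bound}) that the resulting factor multiplying $\lambda(d_1)$ is at most $1$ for large $n$. This is self-reinforcing — the induction hypothesis feeds the bound — and it delivers exactly $R(n) \le \lambda(d_1)$, which is what the theorem claims. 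It never needs to know whether $G(n)$ approaches $\lambda(d_1)\beta_n$ from above or below.

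Your plan, by contrast, establishes two-sided asymptotics $G(n) \sim C n^{-1/d_1}$ and divides by $\beta_n$ to conclude $R(n) \to \lambda(d_1)$. But convergence alone is strictly weaker than the theorem: $R(n) \to \lambda(d_1)$ is consistent with $R(n) > \lambda(d_1)$ infinitely often. Your sentence ``it then suffices that $R(n) \le \lambda(d_1) + o(1)$'' is not correct — that statement would only give an $\bigl(\lambda(d_1)+\eps\bigr)$-competitive ratio for each $\eps>0$, not a $\lambda(d_1)$-competitive ratio. The escape hatch you mention, that $R(n)$ is eventually monotone, is exactly the missing ingredient, but the paper only proves it (Lemma~\ref{lem:ratio-incr}) for the pure-power case $H(x) = x^d$, and you would need it for every Entire $\cD$ to close the argument. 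Combined with the fact that you explicitly defer the bootstrapping analysis of the recurrence to ``where all the real work lives,'' the proposal identifies the right target value but leaves two substantive gaps: pinning down the exact leading constant of the recurrence rigorously (not just the rate), and then controlling the sign of the approach to conclude a clean inequality rather than a limit. The paper's induction sidesteps both issues simultaneously, which is precisely what makes it the right decomposition here.
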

\begin{proof}
By Observation \ref{obs:a-positive} we have that $d_1>0$ and $a_1>0$.
For the competitive ratio of Algorithm~\ref{alg:optimal}, we start by analyzing its expected cost with respect to the cumulative hazard rate $H(x)$.
\begin{lemma}\label{lem:G-recurrence}
The expected cost incurred by Algorithm~\ref{alg:optimal} is
\[
G(n) = \int^{G(n-1)}_0 {e^{-H(u)} \dif u}.
\]
\end{lemma}

Recall that $R(n)$ denotes the competitive ratio of Algorithm~\ref{alg:optimal} for $n$ random
variables, and that our algorithm compares against the prophet who always selects the
minimum value out of all realizations, i.e. $\beta_n$ on expectation.
We want to show that $R(n)$ is upper bounded by a constant for all $n \geq 1$. By
Lemma~\ref{lem:G-recurrence}, we have
\[
R(n) = \frac{G(n)}{\beta_n} = \frac{1}{\beta_n} \int^{G(n-1)}_0 {e^{-H(u)} \dif u} =
\frac{1}{\beta_n} \int^{G(n-1)}_0 {e^{-\sum_{i = 1}^\infty {a_i u^{d_i}}} \dif u}.
\]

Before we proceed, we analyze $\beta_n$.
\begin{lemma}\label{lem:poly-hazard-beta}
For every $n \geq 1$.
\[
\beta_n = \frac{\Gamma\prn{1+1/d_1}}{\prn{a_1 \: n}^{1/d_1}} + \ltlo{\frac{1}{n^{1/d_1}}}.
\]
\end{lemma}

We are now ready to upper bound $R(n)$.

\begin{lemma}\label{lem:ratio-induction}
For every $n \geq 1$, we have
\[
R(n) \leq \frac{\prn{1+1/d_1}^{1/d_1}}{\Gamma(1+1/d_1)}.
\]
\end{lemma}
\begin{proof}
We show that $R(n) \leq \frac{\prn{1+1/d_1}^{1/d_1}}{\Gamma(1+1/d_1)}$ via induction on $n$. For $n = 1$, $R(1) = 1$ and $\frac{\prn{1+1/d_1}^{1/d_1}}{\Gamma(1+1/d_1)} \geq 1$ for all $d_1 > 0$.
For the induction hypothesis, assume $R(k) \leq \frac{\prn{1+1/d_1}^{1/d_1}}{\Gamma(1+1/d_1)}$ for all $k \leq n$, and let
$\lambda(d_1) = \frac{\prn{1+1/d_1}^{1/d_1}}{\Gamma(1+1/d_1)}$ for brevity. For $n+1$ we have
\begin{align}
R(n+1) &= \frac{1}{\beta_{n+1}} \int^{G(n)}_0 {e^{-\sum_{i = 1}^\infty {a_i u^{d_i}}} \dif u} \nonumber \\
&\leq \frac{1}{\beta_{n+1}} \int^{\lambda(d_1) \beta_n}_0 {e^{-\sum_{i = 1}^\infty {a_i u^{d_i}}} \dif u} \nonumber \\
&= \frac{1}{\beta_{n+1}} \int^{\lambda(d_1) \beta_n}_0 {e^{-a_1 u^{d_1}} \cdot \prod_{i = 2}^\infty e^{-a_i u^{d_i}} \dif u} \nonumber \\
\label{eq:pre-sub-ratio} &= \frac{1}{\beta_{n+1}} \int^{\lambda(d_1) \beta_n}_0 {e^{-a_1 u^{d_1}} \cdot \prod_{i = 2}^\infty \sum_{\ell_i \geq 0} \frac{\prn{-a_i u^{d_i}}^{\ell_i}}{\ell_i!} \dif u}.
\end{align}
where the second inequality follows by our induction hypothesis, since $G(n) \leq \lambda(d_1) \beta_n$. Let $x = a_1 u^{d_1} \iff u = \prn{\frac{x}{a_1}}^{1/d_1}$. Also,
\[
\dif x = a_1 d_1 u^{d_1 - 1} \dif u \iff \dif u = \frac{u^{1-d_1}}{a_1 d_1} \dif x = \frac{x^{1/d_1 - 1}}{{a_1}^{1/d_1} d_1} \dif x.
\]
Thus, \eqref{eq:pre-sub-ratio} becomes
\[
R(n+1) \leq \frac{1}{d_1 {a_1}^{1/d_1} \beta_{n+1}} \int^{a_1 \prn{\lambda(d_1) \beta_n}^{d_1}}_0 {e^{-x} x^{1/d_1 - 1} \cdot \prod_{i = 2}^\infty \sum_{\ell_i \geq 0} \frac{\prn{-a_i \prn{\frac{x}{a_1}}^{d_i / d_1}}^{\ell_i}}{\ell_i!} \dif x}
\]
Recall that $\cD$ is an Entire distribution and thus the Puiseux series for
$H$ converges everywhere in the support of $\cD$. Therefore, each term $\sum_{\ell_i \geq 0} \frac{\prn{-a_i \prn{\frac{x}{a_1}}^{d_i / d_1}}^{\ell_i}}{\ell_i!}$ converges to $e^{-a_i \prn{\frac{x}{a_1}}^{d_i / d_1}}$, and thus we can use the distributive law for infinite products \cite{distr-law-infinite} and obtain
\[
\prod_{i = 2}^\infty \sum_{\ell_i \geq 0} \frac{\prn{-a_i \prn{\frac{x}{a_1}}^{d_i / d_1}}^{\ell_i}}{\ell_i!} = \sum_{\ell_2, \ell_3, \ldots \geq 0} \prod_{i = 2}^\infty \frac{\prn{-a_i \prn{\frac{x}{a_1}}^{d_i / d_1}}^{\ell_i}}{\ell_i!}.
\]
Therefore,
\begin{align}
R(n+1) &\leq \frac{1}{d_1 {a_1}^{1/d_1} \beta_{n+1}} \int^{a_1 \prn{\lambda(d_1) \beta_n}^{d_1}}_0 {e^{-x} x^{1/d_1 - 1} \cdot  \sum_{\ell_2, \ell_3, \ldots \geq 0} \prod_{i = 2}^\infty \frac{\prn{-a_i \prn{\frac{x}{a_1}}^{d_i / d_1}}^{\ell_i}}{\ell_i!} \dif x} \nonumber \\
&= \frac{1}{d_1 {a_1}^{1/d_1} \beta_{n+1}} \sum_{\ell_2, \ell_3, \ldots \geq 0} \int^{a_1 \prn{\lambda(d_1) \beta_n}^{d_1}}_0 {e^{-x} x^{1/d_1 - 1} \cdot \prod_{i = 2}^\infty \frac{\prn{-a_i \prn{\frac{x}{a_1}}^{d_i / d_1}}^{\ell_i}}{\ell_i!} \dif x} \nonumber \\
&= \frac{1}{d_1 {a_1}^{1/d_1} \beta_{n+1}} \sum_{\ell_2, \ell_3, \ldots \geq 0} \int^{a_1 \prn{\lambda(d_1) \beta_n}^{d_1}}_0 {e^{-x} x^{1/d_1 + 1/d_1 \sum_{j = 2}^\infty {d_j \ell_j} - 1} \cdot \prod_{i = 2}^\infty \frac{\prn{-a_i {a_1}^{-d_i / d_1}}^{\ell_i}}{\ell_i!} \dif x} \nonumber \\
&= \frac{1}{d_1 {a_1}^{1/d_1} \beta_{n+1}} \sum_{\ell_2, \ell_3, \ldots \geq 0} \prod_{i = 2}^\infty \frac{\prn{-a_i {a_1}^{-d_i / d_1}}^{\ell_i}}{\ell_i!} \int^{a_1 \prn{\lambda(d_1) \beta_n}^{d_1}}_0 {e^{-x} x^{1/d_1 + 1/d_1 \sum_{j = 2}^\infty {d_j \ell_j} - 1} \dif x} \nonumber \\
\label{eq:pre-mu} &= \frac{1}{d_1 {a_1}^{1/d_1} \beta_{n+1}} \sum_{\ell_2, \ell_3, \ldots \geq 0} \prod_{i = 2}^\infty \frac{\prn{-a_i {a_1}^{-d_i / d_1}}^{\ell_i}}{\ell_i!} \gamma\prn{1/d_1 + 1/d_1 \sum_{j = 2}^\infty {d_j \ell_j}, a_1 \prn{\lambda(d_1) \beta_n}^{d_1}},
\end{align}
where the third inequality follows by multiplying together the terms of each sum, the fourth inequality follows by exchanging the order of summation and integration, the fifth inequality follows because the product does not depend on $x$, and the last inequality follows by the definition of $\gamma(s,x)$.

Now, using Fact~\ref{fct:gamma-series}, \eqref{eq:pre-mu} becomes
\begin{align}
R(n+1) &\leq \frac{1}{d_1 {a_1}^{1/d_1} \beta_{n+1}} \sum_{\ell_2, \ell_3, \ldots \geq 0} \prod_{i = 2}^\infty \frac{\prn{-a_i {a_1}^{-d_i / d_1}}^{\ell_i}}{\ell_i!} \prn{a_1 \prn{\lambda(d_1) \beta_n}^{d_1}}^{1/d_1 + 1/d_1 \sum_{j = 2}^\infty {d_j \ell_j}} \nonumber \\
& \qquad \qquad \cdot \sum_{\ell_1 \geq 0} {\frac{\prn{- a_1 \prn{\lambda(d_1) \beta_n}^{d_1}}^{\ell_1}}{\ell_1 ! \: \prn{1/d_1 + 1/d_1 \sum_{j = 2}^\infty {d_j \ell_j} + \ell_1}}} \nonumber \\
&= \frac{{a_1}^{1/d_1} \lambda(d_1) \beta_n}{d_1 {a_1}^{1/d_1} \beta_{n+1}} \sum_{\ell_2, \ell_3, \ldots \geq 0} \prod_{i = 2}^\infty \frac{\prn{-a_i \: \prn{\lambda(d_1) \beta_n}^{d_i} }^{\ell_i}}{\ell_i!} \sum_{\ell_1 \geq 0} {\frac{\prn{- a_1 \prn{\lambda(d_1) \beta_n}^{d_1}}^{\ell_1}}{\ell_1 ! \: \prn{1/d_1 + 1/d_1 \sum_{j = 1}^\infty {d_j \ell_j}}}} \nonumber \\
&= \frac{\lambda(d_1) \beta_n}{\beta_{n+1}} \sum_{\ell_1, \ell_2, \ldots \geq 0} \prod_{i = 2}^\infty \frac{\prn{-a_i \: \prn{\lambda(d_1) \beta_n}^{d_i} }^{\ell_i}}{\ell_i!} {\frac{\prn{- a_1 \prn{\lambda(d_1) \beta_n}^{d_1}}^{\ell_1}}{\ell_1 ! \: \prn{1 + \sum_{j = 1}^\infty {d_j \ell_j}}}} \nonumber \\
\label{eq:integral-done} &= \lambda(d_1) \frac{\beta_n}{\beta_{n+1}} \sum_{\ell_1, \ell_2, \ldots \geq 0} \prod_{i = 1}^\infty \frac{\prn{-a_i \: \prn{\lambda(d_1) \beta_n}^{d_i} }^{\ell_i}}{\ell_i! \prn{1 + \sum_{j = 1}^\infty {d_j \ell_j}}}.
\end{align}

\begin{claim}\label{clm:ratio-upper-bound}
For large enough $n$,
\[
\frac{\beta_n}{\beta_{n+1}} \sum_{\ell_1, \ell_2, \ldots \geq 0} \prod_{i = 1}^\infty \frac{\prn{-a_i \: \prn{\lambda(d_1) \beta_n}^{d_i} }^{\ell_i}}{\ell_i! \prn{1 + \sum_{j = 1}^\infty {d_j \ell_j}}} \leq 1.
\]
\end{claim}

Thus, it follows that Algorithm~\ref{alg:optimal} achieves a
$\frac{\prn{1+1/d_1}^{1/d_1}}{\Gamma(1 + 1/d_1)}$-competitive ratio with respect to $\beta_n$.
\end{proof}
\end{proof}

\begin{remark}
The astute reader might observe that throughout the paper we've assumed that the support of $D$ begins at $0$, which implies that $H(x) = \int_0^x h(u) \dif u$, and thus $H(0) = 0$, which in turn implies that $d_1 > 0$. This is without loss of generality. Specifically, if the support of $D$ begins at $a > 0$, one can ``shift'' it to the origin to find the approximation factor. Formally, we have $H(x)= \int_a^x h(u) du = \int_0^{x-a} h(u+a) du$, and thus $H(a) = 0$. Define $H'(x)=H(x+a)$. We have $H'(0) = 0$ and the approximation factor of the original distribution depends on $d'_1 > 0$. Thus, this dependence is a technicality that does not affect the approximation factor.
\end{remark}

\subsubsection{Lower Bound}
In this section, we show that there exist Entire distributions for which the upper bounds given by $\lambda$ of the previous section is tight. Notice that a cumulative hazard rate $H(x) = x^d$ defines, for $d > 0$, a distribution on $[0, +\infty)$, with CDF $F(x) = 1 - e^{-x^d}$, since $F(0) = 0$ and $\lim_{x \to \infty} F(x) = 1$. For $d = 1$, the resulting distribution is the exponential with rate $1$.

\begin{theorem}\label{thm:hazard-constant-lower}
Consider the distribution $\cD$ for which $H(x) = x^d$ for $d \geq 0$. For any $\eps > 0$, there is no $\prn{\frac{\prn{1+1/d}^{1/d}}{\Gamma\prn{1+1/d}}-\eps}$-competitive cost prophet inequality for the single-item setting and I.I.D. random variables drawn from $\cD$.
\end{theorem}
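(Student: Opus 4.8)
The plan is to show that the optimal algorithm (Algorithm~\ref{alg:optimal}) --- which by Proposition~\ref{prp:thresh-is-fine} and Lemma~\ref{lem:opt-thresholds} attains the smallest possible competitive ratio $R_{\optalg}(n)$ for \emph{every} fixed $n$ --- already satisfies $\lim_{n\to\infty}R_{\optalg}(n)=\lambda(d):=\frac{\prn{1+1/d}^{1/d}}{\Gamma(1+1/d)}$. Granting this, for any $\eps>0$ there is $N$ with $R_{\optalg}(n)>\lambda(d)-\eps$ for all $n\ge N$, and since any algorithm $\cA$ obeys $R_{\cA}(n)\ge R_{\optalg}(n)$ for each $n$, no algorithm can achieve competitive ratio $\lambda(d)-\eps$, which is the claim. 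For $H(x)=x^d$ (with $d>0$), Observation~\ref{obs:betan-hazard} together with the substitution $v=n^{1/d}u$ gives the \emph{exact} value $\beta_n=n^{-1/d}\int_0^\infty e^{-v^d}\dif v=\Gamma(1+1/d)\,n^{-1/d}$. Hence the whole task reduces to the sharp asymptotic $n^{1/d}G(n)\to\prn{1+1/d}^{1/d}$, where by Lemma~\ref{lem:G-recurrence} the sequence $G(n):=G_{\optalg}(n)$ satisfies $G(1)=\E\brk{X}=\Gamma(1+1/d)$ and $G(n)=\phi(G(n-1))$ for $n\ge 2$, with $\phi(t):=\int_0^t e^{-u^d}\dif u$.

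Next I would establish the asymptotic. First, $\phi(t)<t$ for $t>0$, so $G(n)$ is strictly decreasing; its limit $L$ would satisfy $L=\phi(L)$, forcing $L=0$, hence $G(n)\downarrow 0$. Expanding the integrand termwise yields $\phi(t)=t-\frac{t^{d+1}}{d+1}+O(t^{2d+1})$ as $t\to 0$. I would then substitute $y_n:=G(n)^{-d}$: writing $g:=G(n-1)$, we get $G(n)=g\bigl(1-\tfrac{g^d}{d+1}+O(g^{2d})\bigr)$, hence $G(n)^{-d}=g^{-d}\bigl(1+\tfrac{d}{d+1}g^d+O(g^{2d})\bigr)=g^{-d}+\tfrac{d}{d+1}+O(g^d)$, i.e. $y_n-y_{n-1}=\tfrac{d}{d+1}+O(G(n-1)^d)$, which tends to $\tfrac{d}{d+1}$ since $G(n-1)\to 0$. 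By the Stolz--Ces\`aro theorem (the form: $y_n-y_{n-1}\to\ell$ implies $y_n/n\to\ell$), $y_n/n\to\tfrac{d}{d+1}$, i.e. $G(n)^d\sim\tfrac{d+1}{d}\cdot\tfrac1n$, equivalently $n^{1/d}G(n)\to\bigl(\tfrac{d+1}{d}\bigr)^{1/d}=\prn{1+1/d}^{1/d}$.

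Putting the pieces together, $R_{\optalg}(n)=\frac{G(n)}{\beta_n}=\frac{n^{1/d}G(n)}{\Gamma(1+1/d)}\to\frac{\prn{1+1/d}^{1/d}}{\Gamma(1+1/d)}=\lambda(d)$, and the reduction in the first paragraph finishes the proof. I expect the main obstacle to be making the asymptotic fully rigorous: controlling the $O(\cdot)$ error terms uniformly once $G(n-1)$ drops below a suitable threshold, and verifying the hypotheses needed to pass from $y_n-y_{n-1}\to\tfrac{d}{d+1}$ to $y_n/n\to\tfrac{d}{d+1}$. Note that the upper bound $R(n)\le\lambda(d)$ from Theorem~\ref{thm:hazard-constant-upper} combined with the monotonicity of $R(n)$ from Lemma~\ref{lem:ratio-incr} only shows that $\lim_n R(n)$ exists; it does not identify the value, so the explicit asymptotic for $G(n)$ is unavoidable.
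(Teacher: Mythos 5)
Your proposal is correct, and it takes a genuinely different route from the paper's. The paper first establishes that $R(n)$ is increasing (Lemma~\ref{lem:ratio-incr}), so that, together with the upper bound $R(n)\le\lambda(d)$, the limit $\lambda^*=\lim_n R(n)$ exists; it then assumes $\lambda^*<\lambda(d)$ and derives a contradiction by expressing $R(n)$ via the lower incomplete Gamma function, $R(n)=\frac{n^{1/d}}{\Gamma(1+1/d)}\frac{1}{d}\gamma\bigl(1/d,\,G(n-1)^d\bigr)$, expanding $\gamma$ as a power series (Fact~\ref{fct:gamma-series}), and extracting the first-order term of the resulting one-step multiplicative recurrence in $R$. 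Your approach instead works directly with the recursion $G(n)=\phi(G(n-1))$, changes variables to $y_n=G(n)^{-d}$ so that the recursion linearizes to first order as $y_n-y_{n-1}\to\frac{d}{d+1}$, and invokes Stolz--Ces\`aro to pin down $n^{1/d}G(n)\to(1+1/d)^{1/d}$; dividing by the exact value $\beta_n=\Gamma(1+1/d)\,n^{-1/d}$ then yields the limit of $R(n)$ directly. This buys you two things the paper's argument does not: you identify the limit constructively rather than by contradiction, and you sidestep Lemma~\ref{lem:ratio-incr} entirely (monotonicity of $R(n)$ is never needed). The price, as you yourself flag, is that the Stolz--Ces\`aro step and the uniform control of the $O(G(n-1)^d)$ error term must be made rigorous, but these are routine once one fixes a threshold $t_0$ below which the expansion of $\phi$ is valid and notes $G(n)\downarrow 0$. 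Your aside that the upper bound plus monotonicity only yield existence of the limit (not its value) is accurate and is exactly why the paper also has to do the detailed recurrence analysis; the two proofs are thus doing the same essential work, but yours packages it as a direct asymptotic rather than a contradiction.
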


One can see Theorem~\ref{thm:hazard-constant} as both a positive and a negative result,
since even though for almost all distributions there exists an algorithm that achieves a constant-factor competitive ratio, this constant can be arbitrarily large. 

Now, Theorem~\ref{thm:hazard-constant} follows by Theorems~\ref{thm:hazard-constant-upper} and~\ref{thm:hazard-constant-lower}.

\subsection{Special Case: MHR Distributions}\label{sec:mhr-results}

Even though the constant-factor competitive ratio obtained by Algorithm~\ref{alg:optimal} is distribution-dependent, it turns out that we can show a uniform factor of $2$ when the distributions are MHR. This factor is also tight, and it provides a nice parallel to the standard
$\nicefrac{1}{2}$-competitive prophet inequality in the rewards setting \cite{kren-such, kren-such2, sam-cahn, klein-wein}.

\begin{theorem}\label{thm:mhr-constant-upper}
For every Entire MHR distribution, there exists an algorithm that achieves a $2$-competitive ratio in the cost prophet inequality setting.
\end{theorem}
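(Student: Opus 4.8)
The plan is to deduce this directly from the constant-factor guarantee already established for general Entire distributions (Theorem~\ref{thm:hazard-constant-upper}), by showing that the MHR hypothesis forces the valuation of the Puiseux series of $H$ to be at least $1$. Recall that Algorithm~\ref{alg:optimal} is $\lambda(d_1)$-competitive for any Entire $\cD$, where $d_1$ is the smallest exponent appearing in $H(x) = \sum_{i \geq 1} a_i x^{d_i}$, that $\lambda(d_1) = \frac{\prn{1+1/d_1}^{1/d_1}}{\Gamma(1+1/d_1)}$ is decreasing in $d_1$, and that $\lambda(1) = \frac{2}{\Gamma(2)} = 2$. So the entire theorem reduces to a single structural claim: \emph{for every Entire MHR distribution, $d_1 \geq 1$}. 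Granting this, $\lambda(d_1) \leq \lambda(1) = 2$, and Algorithm~\ref{alg:optimal} instantiated with $\cD$ is the desired $2$-competitive algorithm (inheriting the ``large enough $n$'' caveat of Theorem~\ref{thm:hazard-constant-upper}).

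To prove $d_1 \geq 1$, I would pass to the Puiseux series of the hazard rate itself. Write $H(x) = \sum_{i \geq 1} a_i x^{d_i}$ with $0 < d_1 < d_2 < \cdots$ and $a_1 > 0$ (Observation~\ref{obs:a-positive}). Since this series converges throughout the support of $\cD$, it may be differentiated term by term there, so $h(x) = H'(x) = \sum_{i \geq 1} a_i d_i x^{d_i - 1}$ is again a Puiseux series, now with valuation $d_1 - 1$ and leading coefficient $a_1 d_1 > 0$. Suppose for contradiction that $d_1 < 1$. Differentiating once more, $h'(x) = \sum_{i \geq 1} a_i d_i (d_i - 1) x^{d_i - 2}$, which near the origin is controlled by its leading term $a_1 d_1 (d_1 - 1) x^{d_1 - 2}$; this term is strictly negative since $a_1 d_1 > 0$ and $d_1 - 1 < 0$, and it dominates the tail as $x \to 0^+$. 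Hence $h' < 0$ on some interval $(0,\delta)$, so $h$ is strictly decreasing there, contradicting the MHR assumption that $h$ is monotonically non-decreasing on the support of $\cD$. Therefore $d_1 \geq 1$. (An equivalent argument avoiding the second derivative: for $0 < x_1 < x_2$ both small with $x_2/x_1$ held fixed, $h(x_2)/h(x_1) \to (x_2/x_1)^{d_1 - 1}$ as $x_1 \to 0$, and this limit is $< 1$ when $d_1 < 1$, again forcing $h$ to decrease near $0$.)

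Combining the two pieces: every Entire MHR distribution has $d_1 \geq 1$, so by monotonicity of $\lambda$ and Theorem~\ref{thm:hazard-constant-upper}, Algorithm~\ref{alg:optimal} achieves competitive ratio $\lambda(d_1) \leq \lambda(1) = 2$ against $\beta_n$, which is exactly the claimed $2$-cost prophet inequality (and it is tight, since the exponential distribution has constant hazard rate, hence $d_1 = 1$ and $\lambda(d_1) = 2$). The only genuinely new ingredient is the structural claim $d_1 \geq 1$, which is also the step requiring care: ``monotone hazard rate'' is a global condition on $h$, yet it already constrains the \emph{local} behavior of $h$ at the left endpoint of the support, and that local behavior is precisely what the valuation $d_1$ records — forbidding $d_1 < 1$ is what rules out the hazard rate blowing up, and therefore decreasing, near $0$. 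Everything else is routine bookkeeping layered on top of the already-established Entire case.
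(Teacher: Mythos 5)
Your proof is correct and follows essentially the same approach as the paper: both reduce the theorem to the structural claim $d_1 \geq 1$, and both establish that claim by twice differentiating the Puiseux series of $H$, noting that if $d_1 < 1$ then the leading term $a_1 d_1 (d_1 - 1) x^{d_1-2}$ of $H''$ is negative and dominates near $0$, contradicting the MHR condition $h' = H'' \geq 0$. The alternate ratio argument you sketch is a nice supplement but the main line is identical to the paper's.
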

\begin{proof}
Let $\cD$ be an Entire MHR distribution with cumulative hazard rate $H$ where $H$ has a Puiseux series $H(x) = \sum_{i = 1}^\infty {a_i x^{d_i}}$ and $d_1 < d_2 < \dots$. Notice that since $\cD$ has a monotonically increasing hazard rate,
we have $h'(x) = H''(x) \geq 0$ everywhere in $[0, +\infty)$. Thus,
\[
\prn{\sum_{i = 1}^\infty {a_i x^{d_i}}}'' \geq 0 \iff \prn{\sum_{i = 1}^\infty {a_i \: d_i x^{d_i-1 }}}' \geq 0 \iff \sum_{i = 1}^\infty {a_i \: d_i \: \prn{d_i - 1} x^{d_i - 2}} \geq 0,
\]
for all $x \geq 0$. Recall that, by Observation~\ref{obs:a-positive}, for $H$ to be the cumulative hazard rate of a distribution $\cD$, it must be an increasing function in $x$, and thus $a_1 > 0$. 

Assume towards contradiction that $d_1 < 1$, which implies that the first term of $H$ is negative.
We use this to contradict the fact that $H''(x) \geq 0$ everywhere. In particular, consider a point $y$ where
\begin{align*}
a_1 \: d_1 (1 - d_1) y^{d_1} > \sum_{i = 2}^\infty {a_i \: d_i \: (d_i - 1) y^{d_i}} \iff \\
y^2 \prn{a_1 \: d_1 (1 - d_1) y^{d_1 - 2} - \sum_{i = 2}^\infty {a_i \: d_i \: (d_i - 1) y^{d_i -2}} } &> 0 \iff \\
- y^2 H''(y) > 0 \implies H''(y) < 0.
\end{align*}
Such a point can always be found because, for any choice of $a_1, a_2, \dots$ and $d_1 < d_2 < \dots$, one can pick a small enough $y$ that ensures $a_1 \: d_1 (1 - d_1) y^{d_1}$ dominates the term $\sum_{i = 2}^\infty {a_i \: d_i \: (d_i - 1) y^{d_i}}$.

Therefore, for all Entire MHR distributions, it must be the case that $d_1 \geq 1$. This implies that for every Entire MHR distribution $\cD$, $\lambda(d_1) \leq \lambda(1) = 2$, and thus Algorithm~\ref{alg:optimal} obtains a $2$-factor approximation to the prophet's cost.
\end{proof}

Furthermore, notice that if we consider the distribution with $H(x) = x$, i.e. the exponential distribution, then, as a corollary of Theorem~\ref{thm:hazard-constant} for $d = 1$, we get that the factor of $2$ is tight. The exponential distribution is MHR as it has a constant hazard rate, and hence we obtain the following result.

\begin{theorem}\label{thm:mhr-constant-lower}
For any $\eps > 0$, there exists no $\prn{2-\eps}$-factor cost prophet inequality for the exponential distribution.
\end{theorem}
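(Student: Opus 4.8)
The plan is to obtain this statement as an immediate corollary of Theorem~\ref{thm:hazard-constant-lower} (equivalently, the lower-bound half of Theorem~\ref{thm:hazard-constant}) specialized to $d = 1$. First I would verify that the rate-$1$ exponential is exactly the distribution appearing there with $d = 1$: its CDF is $F(x) = 1 - e^{-x}$, so its hazard rate is $h(x) = f(x)/(1 - F(x)) = e^{-x}/e^{-x} = 1$, and hence its cumulative hazard rate is $H(x) = \int_0^x 1 \, \dif u = x = x^d$ with $d = 1$. Since $h$ is constant it is in particular (weakly) monotone, so the exponential is an Entire MHR distribution, consistent with Theorem~\ref{thm:mhr-constant-upper}. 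Plugging $d = 1$ into $\lambda$ gives $\lambda(1) = \frac{(1 + 1/1)^{1/1}}{\Gamma(1 + 1/1)} = \frac{2}{\Gamma(2)} = \frac{2}{1} = 2$, so Theorem~\ref{thm:hazard-constant-lower} states precisely that for every $\eps > 0$ there is no $(\lambda(1) - \eps) = (2 - \eps)$-competitive cost prophet inequality for this distribution, which is the claim.

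The only point needing care is the quantifier on $n$: a $(2 - \eps)$-cost prophet inequality would require a single stopping strategy of competitive ratio at most $2 - \eps$ for all $n$, and by Proposition~\ref{prp:thresh-is-fine} it suffices to rule this out for the optimal threshold algorithm; Theorem~\ref{thm:hazard-constant-lower} does so by exhibiting, for each $\eps$, a value of $n$ with $R(n) > 2 - \eps$. So the result follows verbatim once Theorem~\ref{thm:hazard-constant-lower} is available, and there is essentially no additional obstacle at this stage.

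For completeness I would also record the self-contained computation that makes the constant $2$ transparent (and which is the real content, inherited from Theorem~\ref{thm:hazard-constant-lower}). For the rate-$1$ exponential, Observation~\ref{obs:betan-hazard} gives $\beta_n = \int_0^\infty e^{-n u}\, \dif u = 1/n$, and Lemma~\ref{lem:G-recurrence} reduces the optimal cost to the recursion $G(n) = \int_0^{G(n-1)} e^{-u}\, \dif u = 1 - e^{-G(n-1)}$ with $G(1) = \E[X] = 1$. Writing $g_n = G(n)$, one has $g_n \to 0$ and $g_n = g_{n-1} - g_{n-1}^2/2 + O(g_{n-1}^3)$, hence $1/g_n = 1/g_{n-1} + 1/2 + O(g_{n-1})$; summing telescopically yields $1/g_n = n/2 + o(n)$, i.e. $g_n \sim 2/n$, so $R(n) = G(n)/\beta_n = n\, g_n \to 2$ and no $(2 - \eps)$-factor bound can hold. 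The one mildly technical step is controlling the error terms in $1/g_n = 1/g_{n-1} + 1/2 + O(g_{n-1})$ tightly enough to conclude $n g_n \to 2$ rather than merely $\liminf_n n g_n \ge 2$, but this is the standard asymptotic analysis of quadratic recursions and carries no real difficulty.
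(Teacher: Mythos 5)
Your proposal matches the paper's treatment exactly: Theorem~\ref{thm:mhr-constant-lower} is obtained as an immediate corollary of Theorem~\ref{thm:hazard-constant-lower} specialized to $d = 1$, after noting that the rate-$1$ exponential has cumulative hazard rate $H(x) = x$ (hence is Entire and MHR with $d_1 = 1$) and that $\lambda(1) = 2/\Gamma(2) = 2$. The supplementary self-contained asymptotic analysis of the recursion $g_n = 1 - e^{-g_{n-1}}$ (via $1/g_n = 1/g_{n-1} + 1/2 + O(g_{n-1})$, telescoping to $n g_n \to 2$) is correct and gives a pleasantly elementary alternative to the incomplete-Gamma-function argument used in the proof of Theorem~\ref{thm:hazard-constant-lower}, but it is not part of the paper's proof of this particular statement.
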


Now, Theorem~\ref{thm:mhr-constant} follows by Theorems~\ref{thm:mhr-constant-upper} and \ref{thm:mhr-constant-lower}.

\section{Single Threshold Algorithm}\label{sec:single-threshold}

This section is dedicated to proving Theorem~\ref{thm:single-threshold}.
We design an algorithm which sets a fixed threshold $T$ and
selects the first realization that is below $T$. If our algorithm ever reaches $X_n$ and has not selected any value, it is forced to pick the realization of $X_n$ regardless of its cost. Our choice of $T$ is
\[
T = \bigTh{\prn{\frac{\log{n}}{n}}^k},
\]
for an appropriate value of $k$ that depends on the given distribution.

As in Section~\ref{sec:multi-threshold}, we analyze our algorithm's performance for
an Entire distribution with Puiseux series for the cumulative hazard rate $H(x) = \sum_{i = 1}^\infty {a_i \: x^{d_i}}$, where $d_1 < d_2 < \dots$, and obtain a $\bigO{\prn{\log{n}}^{1/d_1}}$-competitive ratio.
We then proceed to show that this ratio is asymptotically tight, as we show that no single threshold algorithm can achieve a competitive ratio better than $\bigOm{\prn{\log{n}}^{1/d}}$ for the distribution with $H(x) = x^d$. As before, our results imply a $\bigO{\polylog{n}}$-factor single-threshold cost prophet inequality for the single-item setting, for almost all distributions.

\subsection{Upper Bound}

\begin{theorem}\label{thm:single-threshold-upper}
Let $\cD$ be an Entire distribution on $[0, +\infty)$ for which the cumulative hazard rate function $H$ has Puiseux series $H(x) = \sum_{i = 1}^\infty {a_i x^{d_i}}$, where $d_1 < d_2 < \dots$. Then, there exists a single threshold $T = T(n, d_1, a_1)$ such that the algorithm that selects the first value $X_i \leq T$ for $i < n$ and $X_n$ otherwise, achieves a $\bigO{\prn{\log{n}}^{1/d_1}}$-competitive ratio compared to $\beta_n$, for large enough $n$.
\end{theorem}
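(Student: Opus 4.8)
The plan is to write the expected cost $G(n)$ of the single-threshold algorithm exactly, split it into two terms, and balance each term against the lower bound $\beta_n = \bigTh{n^{-1/d_1}}$ supplied by Lemma~\ref{lem:poly-hazard-beta}. Let $p = F(T) = 1 - e^{-H(T)}$ be the probability that a single realization lands below $T$. The algorithm pays the first $X_i \leq T$ with $i < n$ if such an index exists, and is forced onto $X_n$ otherwise; since the $X_i$ are i.i.d., conditioning on the event that some $X_i \leq T$ with $i < n$ the payment is distributed exactly as $X \mid X \leq T$, while conditioning on the complementary event the payment is an independent fresh draw $X_n \sim \cD$. Hence
\[
G(n) = \prn{1 - (1-p)^{n-1}}\, \E\brk{X \mid X \leq T} + (1-p)^{n-1}\, \E\brk{X}.
\]
The first term is at most $T$, since $\E\brk{X \mid X \leq T} \leq T$; the second term equals $(1-p)^{n-1}\E\brk{X}$, where $\E\brk{X}$ is a finite constant precisely because $\cD$ is Entire. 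This is exactly the place where finiteness of the mean is used, and where the equal-revenue distribution of Theorem~\ref{thm:negative-arrival-equal-revenue} would break the argument.

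Next I would set
\[
T = \prn{\frac{K \log n}{a_1\, n}}^{1/d_1}
\]
for a constant $K = K(d_1) \geq 1$ chosen large enough (say $K = \max\{1,\, 8/d_1\}$), so that $a_1 T^{d_1} = K \log n / n \to 0$. The key estimate is that the valuation term dominates $H$ near the origin: since $H(x)/x^{d_1} = a_1 + \sum_{i \geq 2} a_i x^{d_i - d_1}$ is a convergent Puiseux series with all exponents $d_i - d_1 > 0$, it tends to $a_1 > 0$ as $x \to 0^+$, so for all large enough $n$ we have $H(T) \geq \tfrac12 a_1 T^{d_1} = \tfrac{K \log n}{2 n}$. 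Consequently
\[
(1-p)^{n-1} = e^{-(n-1) H(T)} \leq e^{-\frac{K(n-1)}{2n} \log n} \leq n^{-K/4} \leq n^{-2/d_1}
\]
for $n \geq 2$, so the overflow term $\E\brk{X}\,(1-p)^{n-1} = \bigO{n^{-2/d_1}} = \ltlo{n^{-1/d_1}}$ is negligible compared both to $T$ and to $\beta_n$.

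Finally I would assemble the two bounds: $G(n) \leq T + \E\brk{X}\, n^{-2/d_1} = \bigO{\prn{\log n / n}^{1/d_1}}$, whereas Lemma~\ref{lem:poly-hazard-beta} gives $\beta_n \geq \tfrac12\, \Gamma\prn{1+1/d_1}\, (a_1 n)^{-1/d_1}$ for $n$ large. Dividing, $R(n) = G(n)/\beta_n = \bigO{\prn{\log n}^{1/d_1}}$, which is the claimed bound, with the threshold depending only on $n, d_1, a_1$ as asserted.

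I expect the only genuinely delicate point to be the asymptotic control of $H$ near $0$ — namely justifying that the lowest-degree term $a_1 x^{d_1}$ dominates the entire Puiseux series (including possibly negative higher coefficients $a_i$) strongly enough to yield $H(T) \geq \tfrac12 a_1 T^{d_1}$ for all large $n$ — together with the bookkeeping of the several ``large enough $n$'' thresholds: the one from this estimate, the one from Lemma~\ref{lem:poly-hazard-beta}, and the trivial requirement $\prn{\log n}^{1/d_1} \geq 1$. The probabilistic part (the exact formula for $G(n)$ and the two-term split) and the choice of $K$ are routine once the $H$-estimate is in hand.
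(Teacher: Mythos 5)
Your proposal is correct and follows essentially the same approach as the paper: write the cost exactly as $(1-(1-p)^{n-1})\,\E[X\mid X\le T]+(1-p)^{n-1}\,\E[X]$, bound the conditional expectation by $T$, pick a threshold of order $(\log n/n)^{1/d_1}$, show the overflow term decays polynomially, and divide by $\beta_n = \Theta(n^{-1/d_1})$ from Lemma~\ref{lem:poly-hazard-beta}. Your version is slightly cleaner in two places: you reach the inequality $G(n)\le T+(1-p)^{n-1}\E[X]$ directly via $\E[X\mid X\le T]\le T$ rather than through the paper's manipulation of $\int_0^T e^{-H}-Te^{-H(T)}$, and your explicit two-sided estimate $H(T)\ge\frac12 a_1 T^{d_1}$ for small $T$ (justified by $H(x)/x^{d_1}\to a_1$) makes rigorous what the paper writes informally as $H(T)\approx a_1 T^{d_1}$; the paper instead tunes $T$ with a $\log(n/\log n)$ numerator whereas your simpler $K\log n$ with $K\ge\max\{1,8/d_1\}$ suffices for the $\bigO{}$ claim.
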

\begin{proof}
We start by analyzing the algorithm's performance for an arbitrary choice of $T$. We have
\begin{align}
\label{eq:single-alg-hazard-first} \E[ALG] &= \prn{1 - \prn{1 - F(T)}^{n-1}} \E\brk{X \midd X \leq T} + \prn{1 - F(T)}^{n-1} \E[X] \\
&= \prn{1 - e^{-(n-1) H(T)}} \int^T_0 {\prn{1 - \frac{F(x)}{F(T)}} \dif x} +
e^{-(n-1) H(T)} \int^\infty_0 {e^{-H(x)} \dif x} \nonumber \\
&= \prn{1 - e^{-(n-1) H(T)}} \int^T_0 {\prn{1 - \frac{1 - e^{-H(x)}}{1 - e^{-H(T)}}} \dif x} +
e^{-(n-1) H(T)} \int^\infty_0 {e^{-H(x)} \dif x} \nonumber \\
&= \frac{1 - e^{-(n-1) H(T)}}{1 - e^{-H(T)}}
\prn{\int^T_0 {e^{-H(x)} \dif x} - T e^{-H(T)}} + e^{-(n-1) H(T)} \int^\infty_0 {e^{-H(x)} \dif x} \iff \nonumber \\
\label{eq:single-alg-hazard} R(n) &= \frac{1}{\beta_n} \prn{\frac{1 - e^{-(n-1) H(T)}}{1 - e^{-H(T)}}
\prn{\int^T_0 {e^{-H(x)} \dif x} - T e^{-H(T)}} + e^{-(n-1) H(T)} \beta_1}.
\end{align}

Notice that,
\[
\int^T_0 {e^{-H(x)} \dif x} - T e^{-H(T)} \leq T \prn{1  - e^{-H(T)}}.
\]
Using the above, \eqref{eq:single-alg-hazard} becomes
\begin{align}
R(n) &\leq \frac{1}{\beta_n} \prn{\frac{1 - e^{-(n-1) H(T)}}{1 - e^{-H(T)}}
T \prn{1  - e^{-H(T)}} + e^{-(n-1) H(T)} \beta_1} \nonumber \\
\label{eq:single-1} &= \frac{1}{\beta_n} \prn{\prn{1 - e^{-(n-1) H(T)}} T + e^{-(n-1) H(T)} \beta_1}.
\end{align}
By Lemma~\ref{lem:poly-hazard-beta}, we know that there exist constants $c_1, c_2 > 0$ such that for large enough $n$, we have
\[
c_1 \frac{\Gamma\prn{1+1/d_1}}{n^{1/d_1}} \leq \beta_n \leq c_2 \frac{\Gamma\prn{1+1/d_1}}{n^{1/d_1}}.
\]
Thus, \eqref{eq:single-1} becomes
\begin{align}
R(n) &\leq \frac{n^{1/d_1}}{c_1 \: \Gamma\prn{1+1/d_1}} \prn{\prn{1 - e^{-(n-1) H(T)}} T + e^{-(n-1) H(T)} \: c_2 \: \Gamma\prn{1+1/d_1}}. \\
\label{eq:single-2} &= \frac{n^{1/d_1}}{c_1 \: \Gamma\prn{1+1/d_1}} \prn{1 - e^{-(n-1) H(T)}} T + e^{-(n-1) H(T)} \frac{c_2}{c_1} n^{1 / d_1}.
\end{align}
Let
\[
T = \prn{\frac{\log\prn{\frac{n}{\log{n}}}}{d_1 \: a_1 \: (n-1)}}^{1/d_1}.
\]
Since $H(T) = \sum_{i = 1}^\infty {a_i T^{d_i}}$, we have
\[
H(T) = a_1 \cdot \frac{\log\prn{\frac{n}{\log{n}}}}{d_1 \: a_1 \: (n-1)} + \sum_{i = 2}^\infty {a_i \prn{\frac{\log\prn{\frac{n}{\log{n}}}}{d_1 \: a_1 \: (n-1)}}^{d_i / d_1}} = \frac{\log\prn{\frac{n}{\log{n}}}}{d_1 \: (n-1)} + \sum_{i = 2}^\infty {a_i \prn{\frac{\log\prn{\frac{n}{\log{n}}}}{d_1 \: a_1 \: (n-1)}}^{d_i / d_1}}.
\]
Since $d_i > d_1$ for all $i \geq 2$, we have that, for large enough $n$,
\[
H(T) \approx a_1 \: T^{d_1},
\]
as $\sum_{i = 2}^\infty {a_i T^{d_i}} = \ltlo{T^{d_1}}$. Thus, \eqref{eq:single-2} becomes
\begin{align*}
R(n) &\leq \frac{n^{1/d_1}}{c_1 \: \Gamma\prn{1+1/d_1}} \prn{1 - e^{-(n-1) \: a_1 \: T^{d_1} }} T + e^{-(n-1) \: a_1 \: T^{d_1}} \frac{c_2}{c_1} n^{1 / d_1} \\
&= \frac{n^{1/d_1}}{c_1 \: \Gamma\prn{1+1/d_1}} \prn{1 - e^{-(n-1) \: a_1 \: \frac{\log\prn{\frac{n}{\log{n}}}}{d_1 \: a_1 \: (n-1)} }} \prn{\frac{\log\prn{\frac{n}{\log{n}}}}{d_1 \: a_1 \: (n-1)}}^{\frac{1}{d_1}} + e^{-(n-1) \: a_1 \: \frac{\log\prn{\frac{n}{\log{n}}}}{d_1 \: a_1 \: (n-1)}} \frac{c_2}{c_1} n^{\frac{1}{d_1}} \\
&= \frac{n^{1/d_1}}{c_1 \: \Gamma\prn{1+1/d_1}} \prn{1 - \prn{\frac{\log{n}}{n}}^{1/d_1}} \prn{\frac{\log\prn{\frac{n}{\log{n}}}}{d_1 \: a_1 \: (n-1)}}^{1/d_1} + \frac{c_2}{c_1} \prn{\frac{\log{n}}{n}}^{1/d_1} \: n^{1 / d_1} \\
&= \frac{1}{{c_1 \: \Gamma\prn{1+1/d_1}} \: \prn{d_1 \: a_1}^{1/d_1}} \prn{\frac{n}{n-1}}^{\frac{1}{d_1}} \prn{1 - \prn{\frac{\log{n}}{n}}^{\frac{1}{d_1}}} \prn{\log\prn{\frac{n}{\log{n}}}}^{\frac{1}{d_1}} + \frac{c_2}{c_1} \prn{\log{n}}^{\frac{1}{d_1}} \\
\end{align*}
However, there exists a constant $c_3 > 0$ such that for large enough $n$,
\[
\prn{1+\frac{1}{n-1}}^{1 + 1/d_1} \prn{1 - \prn{\frac{\log{n}}{n}}^{1/d_1}} \leq c_3,
\]
and also $\prn{\log\prn{\frac{n}{\log{n}}}}^{1/d_1} \leq \prn{\log{n}}^{1/d_1}$. Thus,
\[
R(n) \leq \frac{c_3}{{c_1 \: \Gamma\prn{1+1/d_1}} \: \prn{d_1 \: a_1}^{1/d_1}} \cdot \prn{\log{n}}^{1/d_1} + \frac{c_2}{c_1} \prn{\log{n}}^{1/d_1} = \bigO{\prn{\log{n}}^{1/d_1}}.
\]
\end{proof}

\subsection{Lower Bound}

\begin{theorem}\label{thm:single-threshold-lower}
Consider the distribution $\cD$ for which $H(x) = x^d$ for $d \geq 0$. There is no $\ltlo{\prn{\log{n}}^{1/d}}$-competitive single-threshold cost prophet inequality for the single-item setting and I.I.D. random variables drawn from $\cD$.
\end{theorem}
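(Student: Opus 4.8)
The plan is to fix an arbitrary single threshold $T$ and show that, no matter how $T$ is chosen, the competitive ratio $R(n)$ against $\beta_n$ is $\bigOm{(\log n)^{1/d}}$ for the distribution with $H(x)=x^d$. The starting point is the exact expression for $R(n)$ derived in the upper bound proof, namely \eqref{eq:single-alg-hazard}, specialized to $H(x)=x^d$, together with the estimate $\beta_n=\Theta\prn{n^{-1/d}}$ coming from Lemma~\ref{lem:poly-hazard-beta}. The key quantity controlling everything is $(n-1)H(T)=(n-1)T^d$; I would split into three regimes according to its size and argue that the ratio is large in each.

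First, if $(n-1)T^d$ is bounded above by a constant (in particular if $T$ is small, $T = \littleO{(\log n / n)^{1/d}}$), then $e^{-(n-1)H(T)} = \Omega(1)$, so the term $e^{-(n-1)H(T)}\beta_1$ in the algorithm's cost is $\Omega(1)$; since $\beta_n = \Theta(n^{-1/d})\to 0$, this already forces $R(n)=\Omega(n^{1/d})$, which dominates $(\log n)^{1/d}$. Second, if $(n-1)T^d$ is at least, say, $2\log n$ (i.e. $T$ is ``large''), then the algorithm almost surely stops before $X_n$, and its cost is dominated by $\E[X \mid X \le T]\ge$ a constant multiple of $T$ — more precisely $\int_0^T e^{-H(x)}\dif x - T e^{-H(T)}$, which for $T$ bounded away from $0$ is $\Theta(T)$, and for small $T$ is $\Theta(T^{d+1})$ — but in the ``large'' regime $T$ is itself $\Omega\prn{(\log n/n)^{1/d}}$, so the cost is $\Omega\prn{(\log n / n)^{1/d}\cdot(\text{polylog factors})}$ and dividing by $\beta_n=\Theta(n^{-1/d})$ gives $R(n)=\Omega\prn{(\log n)^{1/d}}$ as desired. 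Third, the remaining ``sweet spot'' regime $(n-1)T^d = \Theta(\log n)$ — which is exactly where the upper bound's threshold lives — is the tight case: here one writes both terms of $R(n)$ explicitly, $R(n) \gtrsim n^{1/d}\prn{(1-e^{-(n-1)T^d})\,\Theta(T^{d+1}/T^d\wedge T) + e^{-(n-1)T^d}}$, substitute $(n-1)T^d = c\log n$, and observe that for \emph{every} constant $c$ at least one of the two terms is $\Omega\prn{(\log n)^{1/d}}$: if $c<1/d$ the second term $e^{-c\log n} = n^{-c}$ is too big and $n^{1/d}\cdot n^{-c} = n^{1/d - c}$ blows up, while if $c\ge 1/d$ the first term contributes $n^{1/d}\cdot T = n^{1/d}\cdot\prn{c\log n/(n-1)}^{1/d} = \Theta\prn{(\log n)^{1/d}}$.

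The main obstacle — and the step I would be most careful about — is handling the case distinction \emph{uniformly over all choices of $T$} rather than just the three representative scales: I need the bound $R(n) = \bigOm{(\log n)^{1/d}}$ to hold for the single worst $T = T(n)$ that an adversary's single-threshold algorithm could pick, so the argument must be a genuine lower bound on $\inf_T R(n)$. Concretely, I would phrase it as: write $R(n) = \frac{1}{\beta_n}\prn{A(T) + B(T)}$ where $A(T) = \frac{1-e^{-(n-1)H(T)}}{1-e^{-H(T)}}\prn{\int_0^T e^{-H(x)}\dif x - Te^{-H(T)}}\ge 0$ and $B(T) = e^{-(n-1)H(T)}\beta_1$, then lower-bound $A(T)+B(T)$ by $c\cdot\min\{T,\ T^{d+1}\} \cdot (1-e^{-(n-1)T^d}) + e^{-(n-1)T^d}\beta_1$ (using $\int_0^T e^{-H} - Te^{-H(T)} = \int_0^T (e^{-H(x)}-e^{-H(T)})\dif x \ge \int_0^{T/2}(e^{-H(x)}-e^{-H(T)})\dif x$ and crude bounds on $H(x)=x^d$ for $x\le T/2$), and finally minimize the resulting explicit one-variable expression over $T>0$, showing its minimum value divided by $\beta_n = \Theta(n^{-1/d})$ is $\bigOm{(\log n)^{1/d}}$. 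The elementary calculus of minimizing $g(T) = c_1 T(1-e^{-(n-1)T^d}) + c_2 e^{-(n-1)T^d} n^{-1/d}$ over $T$ — or rather, showing $n^{1/d} g(T) = \bigOm{(\log n)^{1/d}}$ for all $T$ — is routine once the reduction above is in place, and I would present it by substituting $u = (n-1)T^d$ and checking the two cases $u \le \log n$ and $u > \log n$ separately.
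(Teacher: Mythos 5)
Your plan — fix an arbitrary threshold $T$, decompose $R(n)=\frac{1}{\beta_n}\bigl(A(T)+B(T)\bigr)$ with $A(T)$ the ``stop-early'' term and $B(T)=e^{-(n-1)H(T)}\beta_1$, then lower-bound $A+B$ pointwise and minimize over $T$ — is mathematically the same content as the paper's proof, but packaged as a direct lower bound rather than the paper's proof-by-contradiction. The paper assumes $R(n)=\ltlo{(\log n)^{1/d}}$, extracts from the $B$-term a lower bound on $T$ and from the $A$-term an upper bound on $T$, and shows these are incompatible for large $n$; you instead minimize an explicit one-variable function. Your framing is arguably a bit cleaner: the paper's argument appeals to an informal observation that ``$T$ has to be decreasing in $n$'' in order to justify the small-$T$ approximation $\gamma(1/d,T^d)\approx dT$, whereas your case analysis handles $T$ bounded away from $0$ directly (there $\beta_n\to 0$ forces the ratio to blow up polynomially). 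The small-$T$ estimate $\int_0^T e^{-x^d}\dif x - Te^{-T^d}=\Theta(T^{d+1})$, which after dividing by $1-e^{-T^d}=\Theta(T^d)$ gives $\Theta(T)$, matches what the paper does via the incomplete-Gamma expansion, and the exact $\beta_n=\Gamma(1+1/d)\,n^{-1/d}$ you use is Lemma~\ref{lem:poly-hazard-beta-single-d}.

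The one place your sketch needs to be tightened is the final case split after the substitution $u=(n-1)T^d$. Splitting at $u=\log n$, as you suggest, does not work for all $d$: when $u\le\log n$, the $B$-contribution to the ratio is $n^{1/d}e^{-u}\ge n^{1/d-1}$, which tends to zero for $d>1$ and so cannot dominate $(\log n)^{1/d}$. Likewise the split ``$c<1/d$ vs.\ $c\ge 1/d$'' inside your middle regime has a boundary issue: for $c$ just below $1/d$, $n^{1/d-c}$ can be $\Theta(1)$, so the $B$-term alone fails, and you must fall back to the $A$-term (which does in fact rescue you there, since $T=\Theta\prn{(\log n/n)^{1/d}}$, but your write-up does not invoke it). The clean fix is to split at $u=\tfrac{1}{2d}\log n$: for $u\le\tfrac{1}{2d}\log n$, the $B$-term gives $n^{1/d}e^{-u}\ge n^{1/(2d)}=\omega\prn{(\log n)^{1/d}}$, and for $u>\tfrac{1}{2d}\log n$, the $A$-term gives $\Theta\prn{u^{1/d}(1-e^{-u})}\ge\Theta\prn{(\log n)^{1/d}}$. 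With that correction the minimization over all $T$ goes through and your proof is complete.
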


Theorem~\ref{thm:single-threshold} now follows by Theorems~\ref{thm:single-threshold-upper} and~\ref{thm:single-threshold-lower}.

\section{Conclusion}\label{sec:conclusion}

In this paper, we studied the {\em cost minimization} counterpart of the classical prophet inequality due to Krengel, Sucheston and Garling \cite{kren-such}. The upwards-closed constraint in our setting makes it fundamentally different and more complex. First, the non-I.I.D. case turns out to be intractable in the sense that no finite approximation factor is achievable by any algorithm when the the arrival order is adversarial or random. For the I.I.D. case, we show that if the distribution is Entire, i.e. its cumulative hazard rate $H$ has a convergent Puiseux series, then the best approximation possible is a distribution-dependent constant, which depends on the growth rate of $H$. This constant is at most $2$ for MHR distributions. Beyond Entire distributions, it is unclear if a constant, or any finite, factor approximation is possible, as we present a non-Entire distribution for which no finite factor is possible even when $n = 2$. Furthermore, when restricted to single-threshold algorithms, we show that the best possible approximation is poly-logarithmic, where the power of the logarithm is again a distribution-dependent constant. In all three cases, our results are tight.

Our work opens up a number of interesting questions.
\begin{itemize}
\item The optimal online algorithm has $n$ distinct thresholds, one for each $X_i$, which is at the other extreme compared to the single-threshold algorithms. What if we are allowed to use at most $k$-thresholds for $k > 1$? How does the competitive ratio improve with $k$, starting with the poly-logarithmic factor we show for $k = 1$?

\item Apart from MHR, are there other interesting classes of Entire distributions for which we can get constant-factor approximation for a fixed constant?

\item If one has only sample access to $\cD$, how does the competitive ratio of the optimal algorithm change with the number of samples? This question, with importance in practical applications when the distributions are not fully known, has been studied extensively in the rewards maximization setting \cite{sample-pi-1, sample-pi-2, sample-pi-3}.

\item An interesting non-I.I.D. setting for which our impossibility results do not apply is the {\em free order} setting in which the distributions can differ but the algorithm can select the order in which it sees the realizations. One cannot hope to do better than in the I.I.D. setting, but is a (distribution-dependent) constant-factor competitive ratio for Entire distributions possible?
\end{itemize}


\bibliographystyle{alpha}
\bibliography{references}


\appendix

\section*{Appendix}

\section{Background on the Gamma Function}\label{app:gamma}

The Gamma function $\Gamma(x)$ extends the factorial function to complex numbers. In particular,
\[
\Gamma(n+1) = n!
\]
for every $n \in \N$.

Here we give a brief and incomplete primer on the Gamma function, to assist the reader. However, for a more extensive treatment along with many folklore results about the function, see \cite{gamma-book}.

\begin{definition}[Gamma ($\Gamma$) Function]
For every $x > 0$, the {\em Gamma function} is defined as
\[
\Gamma(x) = \int^\infty_0 {t^{x-1} e^{-t} \dif t}.
\]
\end{definition}

Like the factorial function, the Gamma function also satisfies the following recurrence
\[
\Gamma(x+1) = x \Gamma(x).
\]

The following fact is closely related to Stirling's approximation for the Gamma function and is due to \cite[Eq.~5.11.E7]{gamma-facts}.
\begin{fact}\label{fct:big-gamma}
For $a > 0$ and $b \in \R$, we have
\[
\Gamma(a+b) \leq \sqrt{2 \pi} \prn{\frac{a}{e}}^{a} \cdot a^b.
\]
\end{fact}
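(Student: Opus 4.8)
The inequality is a quantitative form of Stirling's approximation, so the plan is to deduce it from Stirling's formula with an explicit error bound. Concretely, I would start from the Binet--Stirling representation, valid for every real $z > 0$:
\[
\Gamma(z) = \sqrt{2\pi}\, z^{z-1/2}\, e^{-z}\, e^{\mu(z)}, \qquad 0 \le \mu(z) \le \frac{1}{12 z},
\]
which follows from applying the Euler--Maclaurin summation formula to $\ln\Gamma$ (equivalently, it is essentially the content of the cited reference \cite{gamma-facts}). Applying this with $z = a+b$ (the regime that matters has $a+b > 0$, which I assume throughout), it remains to compare $\sqrt{2\pi}\,(a+b)^{a+b-1/2} e^{-(a+b)} e^{\mu(a+b)}$ with the claimed right-hand side $\sqrt{2\pi}\,(a/e)^a a^b = \sqrt{2\pi}\, a^{a+b} e^{-a}$.

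After cancelling the common factor $\sqrt{2\pi}$, this reduces to the elementary inequality $(a+b)^{a+b-1/2} e^{-(a+b)} e^{\mu(a+b)} \le a^{a+b} e^{-a}$; taking logarithms and rearranging, it is equivalent to
\[
\prn{a+b}\ln\prn{1 + \frac{b}{a}} + b + \mu(a+b) \;\le\; \frac{1}{2}\ln\prn{a+b}.
\]
From here I would bound $\mu(a+b) \le \frac{1}{12(a+b)}$ and control the first two terms using $\ln(1+t) \le t$ together with the sharper estimate $(1+t)\ln(1+t) \le t + \frac{t^2}{2}$ for $t \ge 0$; writing $t = b/a$, the left-hand side becomes $a\,((1+t)\ln(1+t) + t) + \mu(a+b)$, which stays bounded (it tends to $2b$) when $b$ is bounded and $a \to \infty$, and is therefore dominated by $\frac{1}{2}\ln(a+b)$ for large $a$ --- precisely the regime in which Fact~\ref{fct:big-gamma} is invoked in Sections~\ref{sec:multi-threshold} and~\ref{sec:single-threshold}.

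The main obstacle is exactly this last comparison: the crude bound $\ln(1+b/a) \le b/a$ only yields $(a+b)\ln(1+b/a) + b \le 2b + b^2/a$, which need not lie below $\frac{1}{2}\ln(a+b)$ once $b$ is allowed to grow with $a$, so one must either pin down the precise hypotheses or import the sharper constants of \cite{gamma-facts}. A cleaner, more self-contained route that sidesteps the error term $\mu$ entirely is to argue via the integral representation: substituting $t = a s$ in $\Gamma(a+b) = \int_0^\infty t^{a+b-1} e^{-t}\,\dif t$ gives $\Gamma(a+b) = a^{a+b} e^{-a} \int_0^\infty \exp\!\big((a+b-1)\ln s - a(s-1)\big)\,\dif s$, so it suffices to bound the remaining integral by $\sqrt{2\pi}$; this I would handle by Laplace's method with an explicit tail estimate, since the exponent is maximized near $s = 1$ with second derivative of order $-a$, giving a contribution of order $\sqrt{2\pi/a} \le \sqrt{2\pi}$. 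I expect the first route (cite Stirling, reduce to the logarithmic inequality) to be the shortest to write, with the Laplace-method route being the most transparent.
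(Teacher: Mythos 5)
The paper does not prove Fact~\ref{fct:big-gamma} at all; it is introduced with ``due to \cite[Eq.~5.11.E7]{gamma-facts}'' and simply cited. Your attempt to supply an actual derivation via the Binet--Stirling representation (or, alternatively, via Laplace's method on the integral) is therefore a genuinely different route from what the paper does, and it is the right way to think about where such a bound comes from.

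Two issues, though. First, a sign error: after cancelling $\sqrt{2\pi}$ and taking logarithms, the inequality $(a+b)^{a+b-1/2}e^{-(a+b)}e^{\mu(a+b)}\le a^{a+b}e^{-a}$ rearranges to
\[
(a+b)\ln\!\Big(1+\frac{b}{a}\Big)\;-\;b\;+\;\mu(a+b)\;\le\;\frac{1}{2}\ln(a+b),
\]
with a \emph{minus} $b$, not a plus (the $e^{-(a+b)}$ and $e^{-a}$ factors cancel to $e^{-b}$). Writing $t=b/a$, the left side is $a\big((1+t)\ln(1+t)-t\big)+\mu(a+b)$, and since $(1+t)\ln(1+t)-t\approx t^2/2$ near $0$, this is $\approx b^2/(2a)\to 0$ for fixed $b$; the argument is then cleaner than the $2b$-limit you describe. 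Second, and more importantly, the obstacle you flag is real and cannot be made to disappear: the inequality \emph{as stated}, for all $a>0$ and $b\in\R$, is false. Taking $a=1$, $b=0$ gives $\Gamma(1)=1$ on the left and $\sqrt{2\pi}/e\approx 0.922$ on the right, and your reduced inequality confirms this ($\mu(1)>0$ while $\tfrac12\ln 1=0$). So no proof strategy can succeed without additional hypotheses restricting $a$ (and the growth of $b$); your caveat ``pin down the precise hypotheses'' is exactly the fix the paper's statement itself needs, since it is only ever invoked in the regime $a\to\infty$ with $b$ bounded.
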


Of particular use to us are the following special functions that are related to the Gamma function.
\begin{definition}[Upper ($\Gamma(\cdot , \cdot)$ and Lower $\gamma(\cdot , \cdot)$ Incomplete Gamma Functions]
For every $s > 0, x \geq 0$, the {\em Upper Incomplete Gamma function} is defined as
\[
\Gamma(s, x) = \int^\infty_x {t^{s-1} e^{-t} \dif t},
\]
whereas the {\em Lower Incomplete Gamma function} is defined as
\[
\gamma(s, x) = \int^x_0 {t^{s-1} e^{-t} \dif t}.
\]
\end{definition}

For every $s > 0, x \geq 0$, we have
\[
\Gamma(s,x) + \gamma(s,x) = \Gamma(s).
\]

Next, we describe a few known results about the lower incomplete Gamma function that we use throughout the paper.

\begin{fact}\label{fct:gamma-series}
For the lower incomplete Gamma function $\gamma(s,x)$ with $s, x > 0$, we have
\[
\gamma(s, x) = x^s \sum_{k = 0}^\infty {\frac{\prn{-x}^k}{k! \: \prn{s+k}}}.
\]
\end{fact}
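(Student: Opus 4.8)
The plan is to expand the exponential in the integrand as its Maclaurin series and integrate term by term. Recall $\gamma(s,x) = \int_0^x t^{s-1} e^{-t}\,\dif t$, and substitute $e^{-t} = \sum_{k=0}^\infty \frac{\prn{-t}^k}{k!}$. The first step is to justify interchanging the infinite sum with the integral: fixing $x$, the partial sums $\sum_{k=0}^N \frac{\prn{-t}^k}{k!}$ converge to $e^{-t}$ uniformly on the compact interval $[0,x]$, since $\abs{\sum_{k=0}^N \frac{\prn{-t}^k}{k!} - e^{-t}} \le \sum_{k > N} \frac{t^k}{k!} \le \sum_{k > N} \frac{x^k}{k!} \to 0$ uniformly in $t \in [0,x]$. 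As $t^{s-1}$ is integrable on $[0,x]$ for $s > 0$, multiplying by $t^{s-1}$ and integrating preserves the limit, so $\int_0^x t^{s-1}\prn{\sum_{k=0}^N \frac{\prn{-t}^k}{k!}}\dif t \to \gamma(s,x)$.

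The second step is the elementary evaluation of each term. Since $s + k > 0$ for every $k \ge 0$, we have $\int_0^x t^{s+k-1}\,\dif t = \frac{x^{s+k}}{s+k}$, so the $N$-th partial sum above equals $\sum_{k=0}^N \frac{\prn{-1}^k}{k!}\cdot\frac{x^{s+k}}{s+k}$. Combining with the first step, $\gamma(s,x) = \sum_{k=0}^\infty \frac{\prn{-1}^k x^{s+k}}{k!\,\prn{s+k}}$, and factoring $x^s$ out of the sum gives exactly $x^s \sum_{k=0}^\infty \frac{\prn{-x}^k}{k!\,\prn{s+k}}$, as claimed. A quick ratio-test check shows this series converges absolutely for all real $x$, which is consistent with $\gamma(s,\cdot)$ being entire.

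The only genuine obstacle is the rigorous justification of the termwise integration; the rest is a one-line computation. An alternative to the uniform-convergence argument is to invoke dominated convergence directly, with dominating function $t \mapsto t^{s-1} e^{t}$ on $[0,x]$, which is integrable there for $s > 0$; either route is routine. I would present the uniform-convergence version since it keeps everything on a compact interval and avoids any measure-theoretic overhead.
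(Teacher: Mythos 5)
Your proof is correct and takes essentially the same approach as the paper: expand $e^{-t}$ as its Maclaurin series, integrate termwise, and factor out $x^s$. The paper's proof performs the interchange of sum and integral without comment, whereas you explicitly justify it via uniform convergence on $[0,x]$ — a welcome but minor extra step.
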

\begin{proof}
By the definition of the lower incomplete Gamma function, we have
\[
\gamma(s, x) = \int^x_0 {t^{s-1} e^{-t} \dif t} = \int^x_0 {\sum_{k = 0}^\infty {\prn{-1}^k \frac{t^{s+k-1}}{k!}}} = \sum_{k = 0}^\infty {\prn{-1}^k \frac{x^{s+k}}{k! \: \prn{s+k}}} = x^s \sum_{k = 0}^\infty {\frac{\prn{-x}^k}{k! \: \prn{s+k}}}.
\]
\end{proof}

The following fact follows easily via Fact~\ref{fct:gamma-series}.

\begin{fact}\label{fct:small-gamma-approx}
We have that, as $x \to 0$,
\[
\frac{\gamma(s,x)}{x^s} \to s^{-1}.
\]
\end{fact}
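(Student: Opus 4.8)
The plan is to deduce this directly from the series representation of the lower incomplete Gamma function established in Fact~\ref{fct:gamma-series}. First I would write, for $s, x > 0$,
\[
\frac{\gamma(s,x)}{x^s} = \sum_{k=0}^\infty \frac{(-x)^k}{k! \, (s+k)},
\]
which is an immediate consequence of Fact~\ref{fct:gamma-series} after dividing both sides by $x^s$. The claim then amounts to showing that the right-hand side tends to its $k=0$ term, namely $\frac{1}{s}$, as $x \to 0^+$.

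The key step is to justify passing the limit $x \to 0^+$ inside the infinite sum. I would do this by noting that for $0 < x \le 1$ the series is dominated termwise by $\sum_{k \ge 0} \frac{1}{k!\,(s+k)} \le \sum_{k \ge 0} \frac{1}{k! \, s} = \frac{e}{s} < \infty$, so the convergence is uniform in $x$ on $(0,1]$ (e.g.\ by the Weierstrass $M$-test); hence one may interchange the limit and the summation. Each term with $k \ge 1$ satisfies $\frac{(-x)^k}{k!\,(s+k)} \to 0$ as $x \to 0^+$, while the $k=0$ term is the constant $\frac{1}{s}$. Therefore $\frac{\gamma(s,x)}{x^s} \to \frac{1}{s} = s^{-1}$, as desired.

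I do not anticipate any real obstacle here; the only point requiring a line of care is the interchange of limit and sum, which the uniform/dominated-convergence bound above handles cleanly. Alternatively, one could bypass the series entirely and argue directly from the integral definition: $\gamma(s,x) = \int_0^x t^{s-1} e^{-t}\,\dif t$, and since $e^{-t} \to 1$ uniformly on $[0,x]$ as $x \to 0^+$, we get $\gamma(s,x) = (1+o(1)) \int_0^x t^{s-1}\,\dif t = (1+o(1)) \frac{x^s}{s}$, which yields the same conclusion; I would mention this as a remark but present the series-based argument as the main proof since Fact~\ref{fct:gamma-series} is already available.
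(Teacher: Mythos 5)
Your proof is correct and follows exactly the route the paper intends: the paper simply states that the fact ``follows easily via Fact~\ref{fct:gamma-series}'', and you have spelled out precisely that deduction (divide the series by $x^s$, justify the term-by-term limit via a uniform bound, and read off the $k=0$ term $s^{-1}$). The extra integral-based remark is a fine alternative but unnecessary.
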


The following claim is due to Qi and Mei \cite{qi-mei-gamma}.
\begin{claim}\label{clm:small-gamma-approximation-2}[See 3.1 in \cite{qi-mei-gamma}]
For small enough $x$, we have
\[
\gamma(s, x) \leq s^{-1} \: x^{s-1} \: e^{-x}.
\]
\end{claim}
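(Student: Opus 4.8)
The plan is to derive the bound from a trivial integral estimate followed by one elementary inequality, and along the way to pin down exactly what ``small enough'' means. First I would note that, for every $s>0$ and every $x>0$, bounding $e^{-t}\le 1$ on the interval $[0,x]$ gives
\[
\gamma(s,x)=\int_0^x t^{s-1}e^{-t}\dif t \;\le\; \int_0^x t^{s-1}\dif t \;=\; \frac{x^s}{s}.
\]
The same bound also drops out of Fact~\ref{fct:gamma-series}: the series $\gamma(s,x)=x^s\sum_{k\ge0}\frac{(-x)^k}{k!(s+k)}$ is alternating with leading term $x^s/s$, and its terms decrease in magnitude once $x$ is small (the ratio of consecutive magnitudes is $\frac{x(s+k)}{(k+1)(s+k+1)}<1$), so it is bounded by its first term.

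Given this, it remains to show $\frac{x^s}{s}\le s^{-1}x^{s-1}e^{-x}$. Dividing both sides by $s^{-1}x^{s-1}>0$, this is equivalent to $x\le e^{-x}$, i.e.\ $xe^{x}\le 1$. The map $x\mapsto xe^{x}$ is continuous and strictly increasing on $[0,\infty)$, equals $0$ at $x=0$, and equals $1$ at the Omega constant $\Omega=W(1)\approx 0.567$ (where $W$ is the principal branch of the Lambert $W$ function); hence $x\le e^{-x}$ holds precisely for $x\in(0,\Omega]$. This is the precise meaning of ``for small enough $x$''. Chaining the two displays, for all $0<x\le\Omega$ we get $\gamma(s,x)\le x^s/s\le s^{-1}x^{s-1}e^{-x}$, as claimed.

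There is no genuine obstacle here; the only point requiring a little care is that the crude estimate $e^{-t}\le 1$ is far from tight as $x\to0$ (indeed $\gamma(s,x)\sim x^s/s$ by Fact~\ref{fct:small-gamma-approx}, while the right-hand side behaves like $x^{s-1}/s$), but since the statement only asks for an inequality valid for small $x$, no sharper bound is needed. If one wanted the largest possible range one could instead use the sharper alternating-tail bound on the series of Fact~\ref{fct:gamma-series} and optimize, but the threshold $x\le\Omega$ already suffices for every application of the claim in this paper.
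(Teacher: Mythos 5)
Your proof is correct and, unlike the paper, self-contained: the paper does not prove this claim but instead cites it to Qi--Mei (Theorem~3.1 of \cite{qi-mei-gamma}). Your argument is elementary and supplies what the paper leaves out, namely an explicit threshold for ``small enough $x$''. The chain $\gamma(s,x)\le\int_0^x t^{s-1}\,\mathrm{d}t = x^s/s$ is immediate, and the reduction of $x^s/s\le s^{-1}x^{s-1}e^{-x}$ to $xe^x\le 1$ is a clean algebraic step; identifying the cutoff as the Omega constant $\Omega = W(1)$ is a nice touch, and it is worth noting that this threshold is uniform in $s>0$, which is exactly what a ``for small enough $x$'' statement with $s$ a free parameter requires. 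The parenthetical alternating-series route via Fact~\ref{fct:gamma-series} is a valid alternative derivation of the same intermediate bound $x^s/s$, though the direct integral estimate is simpler and suffices. The one thing your argument buys over a bare citation is verifiability; what it costs nothing, since the estimate you prove is (as you note) far from tight near $0$, but the claim is only ever invoked as a crude upper bound, so the slack is harmless.
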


\section{Counterexamples}\label{app:counterexamples}

\subsection{Single-Threshold Counterexample}

For the cost-prophet inequality setting, a natural approach that is seemingly intuitive is to set a single threshold $T$ close to $\beta_n = \E[\min_i X_i]$ since, if $n$ is large enough, with good probability there will be a realization below the threshold and, this way, one would achieve a very good competitive ratio.

We present an example that shows why this natural intuition fails.

\begin{example}\label{exm:intuition-fail}
Consider the exponential distribution, for which $F(x) = 1 - e^{-x}$, $f(x) = e^{-x}$, $H(x) = x$, $E[X] = 1$ and
\[
\beta_n = \int^\infty_0 {e^{-n x} \dif x} = \frac{1}{n}.
\]
In our attempt to achieve a constant competitive ratio, we set a threshold $T = \frac{c}{n}$ for some constant $c > 0$. If there exists a realization of $X_1, \dots, X_{n-1}$ that is below $T$, then we would select it; otherwise we are forced to select $X_n$ and obtain a cost equal to $\E[X]$.

The probability that there exists a realization of $X_1, \dots, X_{n-1}$ that is below $T$ is $1 - \prn{1 - F(T)}^{n-1}$. Thus, the expected cost of our algorithm is
\begin{align*}
\E[ALG_n] &= \prn{1 - \prn{1 - F(T)}^{n-1}} \E\brk{X \midd X \leq T} + \prn{1 - F(T)}^{n-1} E[X] \\
&= \prn{1 - e^{-(n-1) T}} \E\brk{X \midd X \leq T} + e^{-(n-1) T} \cdot 1 \\
&= \prn{1 - e^{-c \frac{n-1}{n}}} \E\brk{X \midd X \leq \nicefrac{c}{n}} + e^{-c \frac{n-1}{n}} \\
&= \prn{1 - e^{-c \frac{n-1}{n}}} \frac{\int_0^{\frac{c}{n}} {x f(x) \dif x}}{1 - e^{-\nicefrac{c}{n}}} + e^{-c \frac{n-1}{n}} \\
&= \prn{1 - e^{-c \frac{n-1}{n}}} \frac{\int_0^{\frac{c}{n}} {x e^{-x} \dif x}}{1 - e^{-\nicefrac{c}{n}}} + e^{-c \frac{n-1}{n}} \\
&= \prn{1 - e^{-c \frac{n-1}{n}}} \frac{1 - e^{-\nicefrac{c}{n}} - \frac{c}{n} \; e^{-\nicefrac{c}{n}}}{1 - e^{-\nicefrac{c}{n}}} + e^{-c \frac{n-1}{n}} \\
&= \prn{1 - e^{-c \frac{n-1}{n}}} \prn{1 - \frac{c}{n} \cdot \frac{e^{-\nicefrac{c}{n}}}{1 - e^{-\nicefrac{c}{n}}}} + e^{-c \frac{n-1}{n}}.
\end{align*}
Thus, the competitive ratio is
\[
R(n) = \frac{\E[ALG_n]}{\beta_n} = n \prn{\prn{1 - e^{-c \frac{n-1}{n}}} \prn{1 - \frac{c}{n} \cdot \frac{e^{-\nicefrac{c}{n}}}{1 - e^{-\nicefrac{c}{n}}}} + e^{-c \frac{n-1}{n}}}.
\]
Notice that, as $n \to + \infty$, we have
\[
\lim_{n \to +\infty} {n \prn{1 - e^{-c \frac{n-1}{n}}} \prn{1 - \frac{c}{n} \cdot \frac{e^{-\nicefrac{c}{n}}}{1 - e^{-\nicefrac{c}{n}}}}} = \frac{c \: e^{-c} \prn{e^{c} - 1}}{2},
\]
but
\[
\lim_{n \to +\infty} {n \: e^{-c \frac{n-1}{n}}} = + \infty,
\]
and thus the competitive ratio of this algorithm is infinite.
\end{example}

\subsection{Non-I.I.D.: Adversarial or Random Order}

\begin{repproposition}{prp:negative-arrival}
For the cost prophet inequality problem with adversarial or random order arrival, no algorithm is $\alpha$-factor competitive for any bounded $\alpha$, even when restricted to $n = 2$ and distributions with support size at most two.
\end{repproposition}

The proposition follows from the following example.

\begin{example}\label{exm:negative-arrival}
Let $n = 2$ and consider the following random variables:
\begin{gather*}
X_1 = 1 \quad \text{w.p. } 1, \qquad X_2 = \begin{cases}
0 &\quad \text{w.p. } 1 - 1 / L \\
L &\quad \text{w.p. } 1 / L
\end{cases},
\end{gather*}
for an arbitrarily large number $L>0$. If the arrival order of $X_1$ and $X_2$ is adversarial, an adversary can force every algorithm to see $X_1$ before $X_2$. In this case, every algorithm receives an expected value of $\E\brk{ALG} = 1$, regardless of whether it stops at $X_1$ or at $X_2$. However, the prophet will select $X_1=1$ whenever $X_2=L$, and $X_2=0$ otherwise. Thus the prophet's expected cost is 
\[
\opt = 0 \cdot (1 - 1 / L) + 1 \cdot 1 / L = 1 / L,
\]
which implies an $L$-competitive factor.
For random arrival order, notice that with probability $1 / 2$, the algorithm sees $X_1$ before $X_2$ and thus our previous analysis holds. Therefore, $\E\brk{ALG} \ge 1 / 2$, which implies a competitive factor at least $L / 2$.

Since $L$ is arbitrary large, the competitive factor can be made arbitrarily large. 
\end{example}

\subsection{Bounded Support Distributions}\label{app:bounded-supp}

\begin{observation}
For any $\alpha > 0$, there exists a distribution $\cD_{\alpha}$, supported on $[0,1]$ such that for the I.I.D. cost prophet inequality setting with random variables drawn from $\cD_{\alpha}$
\begin{enumerate}
    \item there exists an $\alpha$-competitive cost prophet inequality, and
    \item there does not exist an $\prn{\alpha - \eps}$-competitive cost prophet inequality for any constant $\eps > 0$.
\end{enumerate}
\end{observation}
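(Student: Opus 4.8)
The plan is to construct the family $\cD_\alpha$ explicitly by leveraging the lower-bound distribution already identified in Theorem~\ref{thm:hazard-constant-lower}. Recall that the distribution with cumulative hazard rate $H(x) = x^d$ has competitive ratio exactly $\lambda(d) = \frac{(1+1/d)^{1/d}}{\Gamma(1+1/d)}$, and that $\lambda$ is a continuous, strictly decreasing surjection from $(0,\infty)$ onto $(e/\sqrt{2\pi}, \infty)$-ish behavior near $0$ giving arbitrarily large values and $\lambda(d) \to 1$ as $d \to \infty$. So for any $\alpha > 1$ there is a unique $d = d(\alpha) > 0$ with $\lambda(d) = \alpha$. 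The only issue is that this distribution is supported on $[0, +\infty)$, not $[0,1]$, so I need to truncate it without disturbing the asymptotic competitive ratio.

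First I would fix $\alpha$, set $d = \lambda^{-1}(\alpha)$, and consider truncating the $H(x) = x^d$ distribution at some point $M$, i.e.\ conditioning on $X \le M$, and then rescaling the support to $[0,1]$ by the map $x \mapsto x/M$. Rescaling multiplies both $\E[\alg]$ and $\beta_n$ by $1/M$, so it leaves the competitive ratio unchanged; hence it suffices to control the effect of truncation. The key observation is that both $\beta_n$ and $G(n)$ are governed by the behaviour of $H$ near $0$ (this is exactly the content of Lemma~\ref{lem:poly-hazard-beta} and the analysis in Section~\ref{sec:multi-threshold}: $\beta_n = \Gamma(1+1/d)/(n^{1/d}) + o(n^{-1/d})$, driven entirely by the valuation $d_1 = d$ and coefficient $a_1 = 1$). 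Truncating at a fixed $M > 0$ and renormalizing changes $F$ only by the constant factor $1/F(M) = 1/(1-e^{-M^d})$, which perturbs $H$ near $0$ by an additive analytic term that does not change its valuation or leading coefficient. Concretely, the truncated-renormalized distribution still has a cumulative hazard rate whose Puiseux series around $0$ has smallest degree $d$ and leading coefficient $1$ (after the $1/M$ rescaling is absorbed — one should track the constant carefully here, but it only rescales, never changes the exponent), so it is Entire, and by Theorem~\ref{thm:hazard-constant} its optimal competitive ratio is exactly $\lambda(d) = \alpha$. That gives part~(1); part~(2) is the tightness half of Theorem~\ref{thm:hazard-constant} applied to the same distribution, noting that the lower-bound construction in Theorem~\ref{thm:hazard-constant-lower} only used the local behaviour of $H$ near the infimum of the support and so survives truncation.

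The main obstacle I expect is bookkeeping the rescaling constants carefully so that the leading Puiseux coefficient of the cumulative hazard rate of $\cD_\alpha$ lands back at $a_1 = 1$ (or, more honestly, verifying that whatever positive constant it lands at, $\lambda$ depends only on the exponent $d$ and not on $a_1$ — which is exactly what Theorem~\ref{thm:hazard-constant} asserts, since $\lambda(d)$ has no dependence on the coefficients). Once that is pinned down, the argument is essentially: $\cD_\alpha$ is Entire with valuation $d = \lambda^{-1}(\alpha)$, apply Theorem~\ref{thm:hazard-constant} for both the upper and lower bound, done. A secondary subtlety is that Theorem~\ref{thm:hazard-constant} is stated ``for large enough $n$'', so the observation's claims should be read in the same asymptotic sense (as the two parts concern the optimal algorithm's competitive ratio in the limit); alternatively, one can invoke Lemma~\ref{lem:ratio-incr} to argue monotonicity in $n$ for the pure power-hazard-rate case and transfer it. I would state the construction of $\cD_\alpha$ via its CDF $F_\alpha(x) = \frac{1 - e^{-(Mx)^d}}{1 - e^{-M^d}}$ on $[0,1]$ for a convenient choice such as $M = 1$ (so $F_\alpha(x) = \frac{1-e^{-x^d}}{1-e^{-1}}$), check it is a valid CDF with the desired hazard-rate valuation, and then cite Theorem~\ref{thm:hazard-constant} twice.
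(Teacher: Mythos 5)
Your proposal is correct but takes a genuinely different route from the paper's. The paper simply takes the Beta distribution with CDF $F_\alpha(x) = x^{\alpha}$, which is natively supported on $[0,1]$, computes $H_\alpha(x) = -\log(1 - x^{\alpha}) = \sum_{k \geq 1} x^{k\alpha}/k$, reads off the valuation $d_1 = \alpha$, and invokes Theorem~\ref{thm:hazard-constant} once to get a tight $\lambda(\alpha)$-CPI. No truncation, no rescaling, no bookkeeping of the leading coefficient. Your construction — truncating the $H(x) = x^d$ distribution at $M$, renormalizing, and rescaling the support to $[0,1]$ — does work, and you correctly pinpoint the two facts that make it go through: the valuation of the cumulative hazard rate is unchanged under truncation-plus-rescaling (only the leading coefficient moves, to $M^d/(1 - e^{-M^d})$), and $\lambda$ depends only on the valuation and not on the coefficient. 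But it is heavier machinery for the same conclusion, and it also requires verifying that the truncated distribution's Puiseux series still converges throughout $[0,1]$ (the paper handles the analogous endpoint-divergence issue for the Beta case with a footnote). One thing you do better than the paper: you set $d = \lambda^{-1}(\alpha)$ so that the resulting competitive ratio is literally $\alpha$, matching the observation's wording, whereas the paper's own proof parameterizes the Beta distribution by $\alpha$ itself and so produces ratio $\lambda(\alpha)$ — a small but real mismatch between the statement and its proof that your reading quietly repairs.
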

\begin{proof}
Consider the Beta distribution, which is supported on $[0,1]$ and is parameterized by $\alpha > 0$, and for which $F_{\alpha}(x) = x^{\alpha}$. For this distribution, we have
\[
H_{\alpha}(x) = -\log{\prn{1 - F_{\alpha}(x)}} = \log{\prn{\frac{1}{1 - x^{\alpha}}}}.
\]
The Puiseux series of $H_{\alpha}$ around $x = 0$ is
\[
H_{\alpha}(x) = \sum_{k \geq 1} {\frac{x^{k \alpha}}{k}},
\]
which converges for $x \in [0,1)$\footnote{The Puiseux series and $H$ are equal also for $x \to 1$, since both diverge to $+\infty$.}. Thus, we observe that, for this distribution, $d_1 = \alpha$, and from Theorem~\ref{thm:hazard-constant}, we know that there exists a tight $\frac{\prn{1 + 1/\alpha}^{1/\alpha}}{\Gamma\prn{1+1/\alpha}}$-cost prophet inequality.
\end{proof}

\section{Missing Proofs}\label{app:lemmas}

\subsection{Proof of Observation \ref{obs:betan}}

\begin{repobservation}{obs:betan}
For $n \geq 1$,
\[
\beta_n = \E\brk{\min_{i=1}^n X_i}=\int_0^\infty {\prn{1 - F(s)}^n \dif s}.
\]
\end{repobservation}
\begin{proof}
Let $Y_n = \min_{i = 1}^n {X_i}$, then the CDF $F_{Y_n}$ of $Y_n$ is
\[
F_{Y_n}(x) = \Pr\brk{Y_n \leq x} = 1 - \Pr\brk{Y_n > x} = 1 - \prod_{i = 1}^n {\Pr\brk{X_i > x}} = 1 - \prn{1 - F(x)}^n, \hfill \forall x \in [0, +\infty).
\]
Recall that for a random variable $X$, we have
\[
\E[X] = \int_0^\infty {x f_X(x) \dif x} = \int_0^\infty {f_X(x) \int^x_0 \dif t \dif x}.
\] By changing the order of integration, we obtain
\[
\E[X] = \int_0^\infty {\int^x_0 f_X(x) \dif x \dif t} = \int_0^\infty {\Pr[X \geq t] \dif t} = \int_0^\infty {\prn{1 - F_X(t)} \dif t}.
\] Using this, we get that the expected cost of the prophet (offline optimum), denoted by $\beta_n$ is,
\[ 
\beta_n = \E[Y_n] = \int_0^\infty {\prn{1 - F_{Y_n}(s)} \dif s}
= \int_0^\infty {\prn{1 - \prn{1 -\prn{1 - F(s)}^n}} \dif s}
= \int_0^\infty {\prn{1 - F(s)}^n \dif s}.
\]
\end{proof}

\subsection{Proof of Observation~\ref{obs:a-positive}}

\begin{repobservation}{obs:a-positive}
Consider an Entire distribution $\cD$ supported on $[0, +\infty)$ with cumulative hazard rate $H(x) = \sum_{i = 1}^\infty {a_i x^{d_i}}$, where $d_1 < d_2 < \dots$. Then, $a_1 > 0$ and $d_1> 0$.
\end{repobservation}
\begin{proof}
Once can easily see that $a_1 > 0$ since $H$ is non-negative. Note that, for any choice of $a_1, a_2, \dots$ and $d_1 < d_2 < \dots$, since the Puiseux series of $H$ is convergent for every $x$ in the support of $\cD$, there exists a small enough $x_* \in [0, 1)$ such that,
\[
\abs{a_1 x^{d_1}_*} > \sum_{i = 2}^\infty \abs{a_i x^{d_i}_*}.
\] 
Thus, if $a_1 < 0$, we have $H(x_*) < 0$, a contradiction.

Next we show that $d_1 \geq 0$. Consider the derivative of $H$, namely $h(x) = \sum_{i=1}^\infty a_i d_i x^{d_i -1}$. Again, given that fact that $d_1 < d_i$ for all $i \geq 2$, there exists $y_*$ such that
\[
\abs{a_1 d_1 y^{d_1-1}_*} > \sum_{i = 2}^\infty \abs{a_i d_i y^{d_i-1}_*}.
\]
Thus since $a_1 > 0$, we have $a_1 \cdot d_1 < 0$ which implies $h(y_*) < 0$, a contradiction to $h$ being non-negative.
\end{proof}

\subsection{Proof of Proposition~\ref{prp:thresh-is-fine}}

\begin{repproposition}{prp:thresh-is-fine}
For any instance of the cost prophet inequality setting, one can achieve the
optimal competitive ratio with a threshold-based oblivious algorithm.
\end{repproposition}
\begin{sketch}
Since every algorithm has to select a value, if an algorithm observes the realization of $X_n$, it is forced to select it. When an algorithm sees $X_{n-1}$, it has to decide whether to select it or not. Whatever the decision process of the algorithm, let $p^{\cA}\prn{r \midd X_{n-1} = z}$ be the probability that algorithm $\cA$ selects the realization of $X_i$, given $X_i$. Then, the expected cost of $\cA$ is
\[
\sum_{z \geq 0} z \: p^{\cA}\prn{r \midd X_{n-1} = z} + \prn{1 - \sum_{z \geq 0} p^{\cA}\prn{r \midd X_{n-1} = z}} \E[X_n].
\]
For a fixed choice of $L = \sum_{z \geq 0} {p^{\cA}\prn{r \midd X_{n-1} = z}}$, to maximize this quantity, $\cA$ will greedily assign all the probability mass of $L$ to the lowest values $z$. Thus, the only choice $\cA$ has to make is $L$ itself, which is equal to $\Pr\brk{X_{n-1} \leq F^{-1}(L)}$. Therefore, every choice of $L$ implies a threshold, namely $F^{-1}(L)$.

Finally, for the remaining random variables, the observation holds via induction, since the random variables are I.I.D.
\end{sketch}

\subsection{Proof of Lemma~\ref{lem:opt-thresholds}}

\begin{replemma}{lem:opt-thresholds}
For the cost prophet inequality problem with random variables $X_1, X_2, \dots, X_n$, $\tau_n = +\infty$ for every algorithm. For $1 \leq i \leq n-1$, the optimal threshold for the random variable $X_i$ is
\[
\tau_i = G(n-i).
\]
\end{replemma}
\begin{proof}
The lemma follows by backwards induction on $n$.

\noindent{\em Base case.} Since we are forced to select a single value, if the algorithm ever observes $X_n$, it must select its realization. This is equivalent to $\tau_n = + \infty$. It then follows that $G(1) = \E_{X \sim \cD}[X]$.

\noindent{\em Induction.} Consider the $i$-th step, where $i < n$. For our induction hypothesis, assume that $\tau_j = G(n-j)$ for all $i < j < n$. Conditioned to the fact that the algorithm has reached the $i$-th step, the expected cost of the optimal algorithm is $G(n - i + 1)$; i.e. the cost that the optimal algorithms expects to receive from the remaining $n - i + 1$ variables. Since $\tau_i$ is the optimal threshold for $X_i$, we obtain the following recurrence for $G(n - i + 1)$.
\begin{equation}\label{eq:recurrence}
G(n-i+1) = F(\tau_i) \E\brk{X \midd X \leq \tau_i} + \prn{1 - F(\tau_i)} G(n-i).
\end{equation}
This is because with probability $F(\tau_i)$ we select $X_i$ and therefore receive cost $\E\brk{X \midd X \leq \tau_i}$, and with probability $1 - F(\tau_i)$, we ignore $X_i$ and we receive cost equal to the expected value of the optimal algorithm on $X_{i+1}, \dots, X_n$, i.e. $G(n-i)$. Thus, it suffices to show that setting $\tau_i = G(n - i)$ minimizes $G(n - i + 1)$.

We rearrange \eqref{eq:recurrence} and obtain
\begin{align*}
G(n-i+1) &= F(\tau_i) \E\brk{X \midd X \leq \tau_i} + \prn{1 - F(\tau_i)} G(n-i) \\
&= F(\tau_i) \cdot \frac{\int^{\tau_i}_0 {u f(u) \dif u}}{F(\tau_i)} + \prn{1 - F(\tau_i)} G(n-i) \\
&= \int^{\tau_i}_0 {u f(u) \dif u} + \prn{1 - F(\tau_i)} G(n-i) \\
&= \int^{\tau_i}_0 {u \prn{F(u)}' \dif u} + \prn{1 - F(\tau_i)} G(n-i) \\
&= \brk{u F(u)}^{\tau_i}_0 - \int^{\tau_i}_0 {F(u) \dif u} + \prn{1 - F(\tau_i)} G(n-i) \\
&= \tau_i F(\tau_i) - \int^{\tau_i}_0 {F(u) \dif u} + \prn{1 - F(\tau_i)} G(n-i). \\
\end{align*}
where the second equality follows by the definition of $\E\brk{X \midd X \leq \tau_i}$ and the second-to-last equality follows via integration by parts.

We will show that the optimal threshold at the $i$-th step is
\[
\tau_i = G(n-i).
\]
In other words, we will show that
\begin{align}
G(n-i) F(G(n-i)) - \int^{G(n-i)}_0 {F(u) \dif u} + \prn{1 - F(G(n-i))} G(n-i) & \nonumber \\
\label{eq:threshold-ineq} \leq \tau_i F(\tau_i) - \int^{\tau_i}_0 {F(u) \dif u} + \prn{1 - F(\tau_i)} G(n-i)&,
\end{align}
for any $\tau_i \neq G(n-i)$. Rearranging \eqref{eq:threshold-ineq}, we get
\begin{align}
& G(n-i) F(G(n-i)) - \int^{G(n-i)}_0 {F(u) \dif u} + \prn{1 - F(G(n-i))} G(n-i) \nonumber \\
& \qquad \qquad \qquad \leq \tau_i F(\tau_i) - \int^{\tau_i}_0 {F(u) \dif u} + \prn{1 - F(\tau_i)} G(n-i) \iff \nonumber \\
& G(n-i) - \int^{G(n-i)}_0 {F(u) \dif u} \leq \tau_i F(\tau_i) - \int^{\tau_i}_0 {F(u) \dif u} + G(n-i) - F(\tau_i) G(n-i) \iff \nonumber \\
& F(\tau_i) \prn{G(n-i) - \tau_i} \leq \int^{G(n-i)}_0 {F(u) \dif u} - \int^{\tau_i}_0 {F(u) \dif u} \iff \nonumber \\
\label{eq:threshold-ineq2} & F(\tau_i) \prn{G(n-i) - \tau_i} \leq \int^{G(n-i)}_{\tau_i} {F(u) \dif u}.
\end{align}
We distinguish between two cases: $\tau_i < G(n-i)$ and $\tau_i > G(n-i)$.
In the case where $\tau_i < G(n-i)$, \eqref{eq:threshold-ineq2} becomes
\[
F(\tau_i) \leq \frac{\int^{G(n-i)}_{\tau_i} {F(u) \dif u}}{G(n-i) - \tau_i},
\]
which is true by the mean value theorem, since $F$ is increasing and
$\tau_i < G(n-i)$. Similarly, in the case where $\tau_i > G(n-i)$, \eqref{eq:threshold-ineq2} becomes
\[
F(\tau_i) \geq \frac{\int^{G(n-i)}_{\tau_i} {F(u) \dif u}}{G(n-i) - \tau_i} = \frac{\int^{\tau_i}_{G(n-i)} {F(u) \dif u}}{\tau_i - G(n-i)},
\]
which is again true by the mean value theorem, since $F$ is increasing and
$\tau_i > G(n-i)$.

We conclude that the optimal threshold for $X_i$ is
\[
\tau_i = G(n-i).
\]
\end{proof}

\subsection{Proof of Lemma~\ref{lem:G-recurrence}}

\begin{replemma}{lem:G-recurrence}
The expected cost incurred by Algorithm~\ref{alg:optimal} is
\[
G(n) = \int^{G(n-1)}_0 {e^{-H(u)} \dif u}.
\]
\end{replemma}
\begin{proof}
Recall that $G(n)$ satisfies the recurrence relation in \eqref{eq:recurrence}
\[
G(n) = F(\tau_1) \E\brk{X \midd X \leq \tau_1} + \prn{1 - F(\tau_1)} G(n-1).
\]
Substituting the optimal thresholds from Lemma~\ref{lem:opt-thresholds} into the recurrence above, we obtain
\begin{align*}
G(n) &= F(G(n-1)) \E\brk{X \midd X \leq G(n-1)} + \prn{1 - F(G(n-1))} G(n-1) \\
&= F(G(n-1)) \frac{\int^{G(n-1)}_0 {u f(u) \dif u}}{F(G(n-1))} + \prn{1 - F(G(n-1))} G(n-1) \\
&= \int^{G(n-1)}_0 {u f(u) \dif u} + \prn{1 - F(G(n-1))} G(n-1) \\
&= \brk{u F(u) \dif u}^{G(n-1)}_0 - \int^{G(n-1)}_0 {F(u) \dif u} + \prn{1 - F(G(n-1))} G(n-1) \\
&= G(n-1) F(G(n-1)) - \int^{G(n-1)}_0 {F(u) \dif u} + G(n-1) - G(n-1) F(G(n-1)) \\
&= \int^{G(n-1)}_0 {\prn{1 - F(u)} \dif u}.
\end{align*}
Next, recall that $H(x) = - \log\prn{1 - F(x)}$, and thus we obtain
\[
G(n) = \int^{G(n-1)}_0 {e^{-H(u)} \dif u}.
\]
\end{proof}

\subsection{Proof of Lemma~\ref{lem:poly-hazard-beta}}

\begin{replemma}{lem:poly-hazard-beta}
For every $n \geq 1$.
\[
\beta_n = \frac{\Gamma\prn{1+1/d_1}}{\prn{a_1 \: n}^{1/d_1}} + \ltlo{\frac{1}{n^{1/d_1}}}.
\]
\end{replemma}
\begin{proof}
\begin{align}
\beta_n &= \int^\infty_0 {e^{-n H(u)} \dif u} = \int^\infty_0 {e^{-n \: \sum_{i = 1}^\infty {a_i u^{d_i}}} \dif u} = \int^\infty_0 {e^{-n \: a_1 u^{d_1}} \cdot e^{-n \: \sum_{i = 2}^\infty {a_i u^{d_i}}} \dif u} \nonumber \\
\label{eq:betas-pre-transformation} &= \int^\infty_0 {e^{-n \: a_1 u^{d_1}} \cdot \prod_{i = 2}^\infty {e^{-n \: a_i u^{d_i}}} \dif u} = \int^\infty_0 {e^{-n \: a_1 u^{d_1}} \cdot \prod_{i = 2}^\infty {\sum_{\ell_i \geq 0} {\frac{\prn{-n \: a_i u^{d_i}}^{\ell_i}}{\ell_i !} }} \dif u}.
\end{align}
Let $x = n \: a_1 u^{d_1} \iff u = \prn{\frac{x}{n \: a_1}}^{1/d_1}$. Then,
\[
\dif x = n \: a_1 d_1 u^{d_1 - 1} \dif u \iff \dif u = \frac{u^{1-d_1}}{n \: a_1 \: d_1} \dif x = \frac{x^{1/d_1 - 1}}{{n \: a_1}^{1/d_1} \: d_1} \dif x,
\]
and \eqref{eq:betas-pre-transformation} becomes
\[
\beta_n = \frac{1}{{n \: a_1}^{1/d_1} \: d_1} \int^\infty_0 {e^{-x} x^{1/d_1 - 1} \cdot \prod_{i = 2}^\infty {\sum_{\ell_i \geq 0} {\frac{\prn{-n \: a_i \prn{\frac{x}{n \: a_1}}^{d_i / d_1}}^{\ell_i}}{\ell_i !} }} \dif x}.
\]
As in the proof of Lemma~\ref{lem:ratio-induction}, each term $\sum_{\ell_i \geq 0} {\frac{\prn{-n \: a_i \prn{\frac{x}{n \: a_1}}^{d_i / d_1}}^{\ell_i}}{\ell_i !}}$ converges to $e^{-n \: a_i \prn{\frac{x}{n \: a_1}}^{d_i / d_1}}$, and thus we have
\[
\prod_{i = 2}^\infty {\sum_{\ell_i \geq 0} {\frac{\prn{-n \: a_i \prn{\frac{x}{n \: a_1}}^{d_i / d_1}}^{\ell_i}}{\ell_i !} }} = \sum_{\ell_2, \ell_3, \ldots \geq 0} {\prod_{i = 2}^\infty {\frac{\prn{-n \: a_i \prn{\frac{x}{n \: a_1}}^{d_i / d_1}}^{\ell_i}}{\ell_i !} }}
\]
\begin{align}
\beta_n &= \frac{1}{{n \: a_1}^{1/d_1} \: d_1} \int^\infty_0 {e^{-x} x^{1/d_1 - 1} \cdot \sum_{\ell_2, \ell_3, \ldots \geq 0} {\prod_{i = 2}^\infty {\frac{\prn{-n \: a_i \prn{\frac{x}{n \: a_1}}^{d_i / d_1}}^{\ell_i}}{\ell_i !} }} \dif x} \nonumber \\
&= \frac{1}{{n \: a_1}^{1/d_1} \: d_1} \sum_{\ell_2, \ell_3, \ldots \geq 0} {\int^\infty_0 {e^{-x} x^{1/d_1 - 1} \cdot \prod_{i = 2}^\infty {\frac{\prn{-n \: a_i \prn{\frac{x}{n \: a_1}}^{d_i / d_1}}^{\ell_i}}{\ell_i !} } \dif x}} \nonumber \\
&= \frac{1}{{n \: a_1}^{1/d_1} \: d_1} \sum_{\ell_2, \ell_3, \ldots \geq 0} {\int^\infty_0 {e^{-x} x^{1/d_1 + 1/d_1 \sum_{j = 2}^\infty {d_j \ell_j} - 1} \cdot \prod_{i = 2}^\infty {\frac{\prn{-n \: a_i \prn{n \: a_1}^{- d_i / d_1}}^{\ell_i}}{\ell_i !} } \dif x}} \nonumber \\
&= \frac{1}{{n \: a_1}^{1/d_1} \: d_1} \sum_{\ell_2, \ell_3, \ldots \geq 0} {\prod_{i = 2}^\infty {\frac{\prn{-n \: a_i \prn{n \: a_1}^{- d_i / d_1}}^{\ell_i}}{\ell_i !} } \cdot \int^\infty_0 {e^{-x} x^{1/d_1 + 1/d_1 \sum_{j = 2}^\infty {d_j \ell_j} - 1} \dif x}} \nonumber \\
\label{eq:betas-gamma-appearance} &= \frac{1}{{n \: a_1}^{1/d_1} \: d_1} \sum_{\ell_2, \ell_3, \ldots \geq 0} {\prod_{i = 2}^\infty {\frac{\prn{-n \: a_i \prn{n \: a_1}^{- d_i / d_1}}^{\ell_i}}{\ell_i !} } \cdot \Gamma\prn{1/d_1 + 1/d_1 \sum_{j = 2}^\infty {d_j \ell_j}} }  \\
&= \frac{\Gamma\prn{1 / d_1}}{d_1 \prn{n \: a_1}^{1/d_1}} + \nonumber \\
&\qquad + \frac{1}{d_1 \prn{n \: a_1}^{1/d_1}} \sum_\stack{\ell_2, \ell_3, \ldots \geq 0}{\ell_2, \ell_3 \dots \neq (0, 0, \dots)} \prod_{i = 2}^\infty \frac{\prn{-n \: a_i \prn{n \: a_1}^{-d_i / d_1}}^{\ell_i}}{\ell_i!} \Gamma\prn{1/d_1 + 1/d_1 \sum_{j = 2}^\infty {d_j \ell_j}} \nonumber \\
&= \frac{\Gamma\prn{1 + 1 / d_1}}{\prn{n \: a_1}^{1/d_1}} + \frac{1}{d_1 \prn{n \: a_1}^{1/d_1}} \sum_\stack{\ell_2, \ell_3, \ldots \geq 0}{\ell_2, \ell_3, \dots \neq (0, 0, \dots)} n^{\sum_{j = 2}^\infty {\ell_j \prn{1 - d_j / d_1}}} \nonumber \\
\label{eq:last-eq-of-betas} &\qquad \qquad \cdot \prod_{i = 2}^\infty \frac{\prn{- a_i a_1^{-d_i / d_1}}^{\ell_i}}{\ell_i!} \Gamma\prn{1/d_1 + 1/d_1 \sum_{j = 2}^\infty {d_j \ell_j}}.
\end{align}
where \eqref{eq:betas-gamma-appearance} follows by the definition of the Gamma function.

Notice that, since $d_i > d_1$ for all $i \geq 2$, we have $n^{1 - d_i / d_1} = \ltlo{1}$ for all $i \geq 2$. Also, in the exponent of $n$ in the second summand, there is always at least one $\ell_j$ that is not $0$, and thus
\[
\beta_n = \frac{\Gamma\prn{1+1/d_1}}{\prn{a_1 \: n}^{1/d_1}} + \ltlo{\frac{1}{n^{1/d_1}}}.
\]
\end{proof}

\subsection{Proof of Claim~\ref{clm:ratio-upper-bound}}

\begin{repclaim}{clm:ratio-upper-bound}
For large enough $n$,
\[
\frac{\beta_n}{\beta_{n+1}} \sum_{\ell_1, \ell_2, \ldots \geq 0} \prod_{i = 1}^\infty \frac{\prn{-a_i \: \prn{\lambda(d_1) \beta_n}^{d_i} }^{\ell_i}}{\ell_i! \prn{1 + \sum_{j = 1}^\infty {d_j \ell_j}}} \leq 1.
\]
\end{repclaim}
\begin{proof}
By Lemma~\ref{lem:poly-hazard-beta}, we know that, for large enough $n$, there exist a constant $c \geq 0$ such that
\[
\beta_n = \frac{\Gamma\prn{1+1/d_1}}{\prn{a_1 \: n}^{1/d_1}} + \ltlo{\frac{1}{n^{1/d_1}}} \leq \frac{\Gamma\prn{1+1/d_1}}{\prn{a_1 \: n}^{1/d_1}} \prn{1 + \ltlo{1}}.
\]
Therefore, we have
\[
\frac{\beta_n}{\beta_{n+1}} = \prn{\frac{n+1}{n}}^{1/d_1} \prn{1 + \ltlo{1}}.
\]
Thus
\begin{gather}
\frac{\beta_n}{\beta_{n+1}} \sum_{\ell_1, \ell_2, \ldots \geq 0} \prod_{i = 1}^\infty \frac{\prn{-a_i \: \prn{\lambda(d_1) \beta_n}^{d_i} }^{\ell_i}}{\ell_i! \prn{1 + \sum_{j = 1}^\infty {d_j \ell_j}}} = \prn{1+\frac{1}{n}}^{1/d_1} \prn{1 + \ltlo{1}} \sum_{\ell_1, \ell_2, \ldots \geq 0} \prod_{i = 1}^\infty \frac{\prn{-a_i \: \prn{\lambda(d_1) \beta_n}^{d_i} }^{\ell_i}}{\ell_i! \prn{1 + \sum_{j = 1}^\infty {d_j \ell_j}}} \nonumber \\
\label{eq:higher-order} \leq \prn{1+\frac{1}{n}}^{1/d_1} \prn{1 + \ltlo{1}} \prn{1 - \sum_{i = 1}^\infty {\frac{a_i}{1 + d_i} \prn{\lambda(d_1) \beta_n}^{d_i}} + \ltlo{\frac{1}{n^{1/d_1}}}}.
\end{gather}
Notice that
\[
\sum_{i = 1}^\infty {\frac{a_i}{1 + d_i} \prn{\lambda(d_1) \beta_n}^{d_i}} = \frac{a_1}{1 + d_1} \prn{\lambda(d_1)}^{d_1} \beta^{d_1}_n + \sum_{i = 2}^\infty {\frac{a_i}{1 + d_i} \prn{\lambda(d_1) \beta_n}^{d_i}}.
\]
Also, $\beta^{d_i}_n = \bigO{\frac{1}{n^{d_i / d_1}}}$, and for $i \geq 2$, we have $d_i > d_1$, which implies that $\beta^{d_i}_n = \ltlo{\frac{1}{n^{1/d_1}}}$. Thus, \eqref{eq:higher-order} becomes
\begin{gather}
\frac{\beta_n}{\beta_{n+1}} \sum_{\ell_1, \ell_2, \ldots \geq 0} \prod_{i = 1}^\infty \frac{\prn{-a_i \: \prn{\lambda(d_1) \beta_n}^{d_i} }^{\ell_i}}{\ell_i! \prn{1 + \sum_{j = 1}^k {d_j \ell_j}}} \leq \prn{1+\frac{1}{n}}^{\frac{1}{d_1}} \prn{1 + \ltlo{1}} \prn{1 - \frac{a_1}{1 + d_1} \prn{\lambda(d_1) \beta_n}^{d_1} + \ltlo{\frac{1}{n^{1/d_1}}}} \nonumber \\
\leq \prn{\prn{1+\frac{1}{n}}^{1/d_1} + \ltlo{\frac{1}{n^{1/d_1}}}} \prn{1 - \frac{a_1}{1 + d_1} \prn{\lambda(d_1) \beta_n}^{d_1} + \ltlo{\frac{1}{n^{1/d_1}}}} \nonumber \\
= 1 + \frac{1}{d_1 \: n} - \frac{a_1}{1 + d_1} \prn{\lambda(d_1) \beta_n}^{d_1} + \ltlo{\frac{1}{n}} \nonumber \\
\label{eq:lambda-derivation} = 1 + \frac{1}{d_1 \: n} - \frac{a_1}{1 + d_1} \prn{\lambda(d_1)}^{d_1} \frac{\prn{\Gamma\prn{1+1/d_1}}^{d_1}}{a_1 n} + \ltlo{\frac{1}{n}}
\end{gather}
For \eqref{eq:lambda-derivation} to be bounded above by $1$, we need
\begin{gather*}
\frac{1}{d_1 \: n} \leq \frac{a_1}{1 + d_1} \prn{\lambda(d_1)}^{d_1} \frac{\prn{\Gamma\prn{1+1/d_1}}^{d_1}}{a_1 n} \iff \prn{\lambda(d_1)}^{d_1} \geq \frac{1+1/d_1}{\prn{\Gamma\prn{1+1/d_1}}^{d_1}} \iff \\
\lambda(d_1) \geq \frac{\prn{1+1/d_1}^{1/d_1}}{\Gamma\prn{1+1/d_1}},
\end{gather*}
which holds, since $\lambda(d_1) = \frac{\prn{1 + 1 / d_1}^{d_1}}{\Gamma\prn{1+1/d_1}}$.
\end{proof}

\subsection{Proof of Theorem~\ref{thm:hazard-constant-lower}}

\begin{reptheorem}{thm:hazard-constant-lower}
Consider the distribution $\cD$ for which $H(x) = x^d$ for $d \geq 0$. For any $\eps > 0$, there is no $\prn{\frac{\prn{1+1/d}^{1/d}}{\Gamma\prn{1+1/d}}-\eps}$-competitive cost prophet inequality for the single-item setting and I.I.D. random variables drawn from $\cD$.
\end{reptheorem}
\begin{proof}
Let $\lambda(d) = \frac{\prn{1+1/d}^{1/d}}{\Gamma\prn{1+1/d}}$. 

\begin{lemma}\label{lem:poly-hazard-beta-single-d}
For every $n \geq 1$,
\[
\beta_n = \frac{\Gamma\prn{1+1/d}}{n^{1/d}}.
\]
\end{lemma}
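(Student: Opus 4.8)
The plan is to compute $\beta_n$ directly from the hazard-rate formula of Observation~\ref{obs:betan-hazard}. For the distribution in question we have $H(u) = u^d$, so
\[
\beta_n = \int_0^\infty e^{-n H(u)} \dif u = \int_0^\infty e^{-n u^d} \dif u,
\]
and it remains to evaluate this integral in closed form. (Note that the statement is only meaningful for $d > 0$, since $H$ must be the integral of a non-negative hazard rate with $H(0) = 0$; the edge case $d = 0$ is vacuous.)

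First I would perform the substitution $x = n u^d$, equivalently $u = \prn{x/n}^{1/d}$, which gives $\dif x = n d\, u^{d-1} \dif u$, hence $\dif u = \frac{1}{d n^{1/d}} x^{1/d - 1} \dif x$. The limits $u = 0$ and $u \to \infty$ map to $x = 0$ and $x \to \infty$, so
\[
\beta_n = \frac{1}{d\, n^{1/d}} \int_0^\infty e^{-x} x^{1/d - 1} \dif x = \frac{\Gamma\prn{1/d}}{d\, n^{1/d}},
\]
where the last equality is just the definition of the Gamma function with argument $1/d$. Finally, applying the recurrence $\Gamma(z+1) = z\,\Gamma(z)$ with $z = 1/d$, i.e.\ $\Gamma\prn{1 + 1/d} = \frac{1}{d}\Gamma\prn{1/d}$, yields
\[
\beta_n = \frac{\Gamma\prn{1 + 1/d}}{n^{1/d}},
\]
as claimed.

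There is essentially no genuine obstacle here: the computation is a routine change of variables reducing the integral to the Gamma function, and it is in fact the $d_i = d_1 = d$, single-term special case of the more general Lemma~\ref{lem:poly-hazard-beta} (where all the higher-order correction terms vanish because the Puiseux series of $H$ has only one term). The only points worth stating carefully are that the substitution is valid and the integral converges for every $d > 0$, and that the final algebraic simplification uses the Gamma recurrence; both are immediate.
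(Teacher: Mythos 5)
Your proof is correct and takes essentially the same route as the paper: the paper derives this lemma by specializing Lemma~\ref{lem:poly-hazard-beta} to the single-term series $H(x) = x^d$, and the proof of that general lemma uses precisely the substitution $x = n a_1 u^{d_1}$ that you perform directly here to reduce the integral to $\Gamma(1/d)/(d\,n^{1/d}) = \Gamma(1+1/d)/n^{1/d}$. Your inline computation is just the clean single-term instance of the paper's argument.
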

\begin{proof}
The proof follows immediately from the proof of Lemma~\ref{lem:poly-hazard-beta}. In particular, we have
\begin{equation}\label{eq:single-d-beta-calc-2}
\beta_n = \int^\infty_0 {e^{-n H(u)} \dif u} = \int^\infty_0 {e^{-n \: u^d} \dif u},
\end{equation}
and, by \eqref{eq:last-eq-of-betas} of Lemma~\ref{lem:poly-hazard-beta}, since $a_1 = 1$ and $a_2 = \dots = a_k = 0$, we get that
\[
\beta_n = \frac{\Gamma\prn{1+1/d}}{n^{1/d}}.
\]
\end{proof}

Using Lemma~\ref{lem:poly-hazard-beta-single-d}, we have that
\begin{equation}\label{eq:single-d-pre-sub}
R(n) = \frac{G(n)}{\beta_n} = \frac{n^{1/d}}{\Gamma(1+1/d)} \int^{G(n-1)}_0 {e^{-H(u)} \dif u} = \frac{n^{1/d}}{\Gamma(1+1/d)} \int^{G(n-1)}_0 {e^{-u^d} \dif u}.
\end{equation}
Let $x = u^d \iff u = x^{1/d}$. Also,
\[
\dif x = d \: u^{d - 1} \dif u \iff
\dif u = \frac{u^{1 - d}}{d} \dif x = \frac{x^{1/d-1}}{d} \dif x,
\]
and thus \eqref{eq:single-d-pre-sub} becomes
\begin{equation}\label{eq:r-expr-gamma}
R(n) = \frac{n^{1/d}}{d \: \Gamma(1+1/d)} \int^{\prn{G(n-1)}^d}_0 {e^{-x} x^{1/d - 1} \dif x} = \frac{n^{1/d}}{\Gamma(1+1/d)} \: \frac{1}{d} \: \gamma\prn{1/d, \prn{G(n-1)}^d}.
\end{equation}
where the second equality follows from the definition of the lower incomplete Gamma function.

\begin{lemma}\label{lem:ratio-incr}
$R(n)$ is increasing in $n$.
\end{lemma}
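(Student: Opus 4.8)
The plan is to turn the statement into a concrete inequality about the iteration $G(n+1)=\phi(G(n))$, where $\phi(x)=\int_0^x e^{-u^d}\,du$, which is exactly Lemma~\ref{lem:G-recurrence} specialised to $H(x)=x^d$. Combining this with $\beta_n=\Gamma(1+1/d)\,n^{-1/d}$ (Lemma~\ref{lem:poly-hazard-beta-single-d}) and $R(n)=G(n)/\beta_n$ yields the one-step identity $R(n+1)/R(n)=\left((n+1)/n\right)^{1/d}\cdot \phi(G(n))/G(n)$, so $R(n)$ is non-decreasing \emph{if and only if} $\left(\phi(G(n))/G(n)\right)^{d}\ge n/(n+1)$ for every $n\ge 1$. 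Since $\phi(x)/x$ is the average of the decreasing function $e^{-u^d}$ on $[0,x]$ and $G(n)$ is itself decreasing, the ratios $G(n+1)/G(n)$ are in fact increasing in $n$; this is suggestive but not sufficient by itself, so the real work is to control how fast $G(n)$ decays.

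First I would record the envelope bounds that drive everything: from $1-e^{-t}\le t$ one gets the lower bound $\phi(x)/x\ge 1-\frac{x^{d}}{d+1}$, and from $e^{-t}\le 1-t+\frac{t^{2}}{2}$ the matching upper bound $\phi(x)/x\le 1-\frac{x^{d}}{d+1}+\frac{x^{2d}}{2(2d+1)}$ (and, when needed, the three-term lower bound coming from $e^{-t}\ge 1-t+\frac{t^2}{2}-\frac{t^3}{6}$); the quadratic bases stay positive because the discriminant $\frac{1}{(d+1)^2}-\frac{2}{2d+1}$ is negative for all $d>0$. The key quantitative invariant, to be proved by induction on $n$, is
\[
G(n)^{d}\ \le\ \frac{1+1/d}{n+1}\qquad\Longleftrightarrow\qquad R(n)\ \le\ \lambda(d)\left(\frac{n}{n+1}\right)^{1/d},
\]
the equivalence using $\Gamma(1+1/d)^{d}\lambda(d)^{d}=1+1/d$. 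I would also use the free lower bound $G(n)^{d}\ge \Gamma(1+1/d)^{d}/n$, valid since $R(n)\ge 1$ always (the optimal stopping cost cannot beat the prophet's).

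Granting the invariant at level $n$, the monotonicity step for $d\ge 1$ follows by plugging the lower bound on $\phi/x$ into $\left(\phi(G(n))/G(n)\right)^{d}\ge\left(1-\frac{G(n)^{d}}{d+1}\right)^{d}$ and applying Bernoulli's inequality $(1-u)^{d}\ge 1-du$ with $u=G(n)^{d}/(d+1)\le \frac{1}{d(n+1)}$, giving exactly $\ge 1-\frac{1}{n+1}=n/(n+1)$. Propagating the invariant from $n$ to $n+1$ uses the upper bound on $\phi/x$: with $\xi:=\left(\frac{1+1/d}{n+1}\right)^{1/d}\ge G(n)$ and monotonicity of $\phi$, it suffices that $\left(\phi(\xi)/\xi\right)^{d}\le (n+1)/(n+2)$, which after inserting the two-term upper bound reduces, via $\log(1+t)\le t$ and $\log(1+t)\le t-\frac{t^{2}}{2}$ for $t<0$, to the elementary inequality $d\log\!\left(1-\frac{\nu}{d}+E\nu^{2}\right)+\log(1+\nu)\le 0$ with $\nu=1/(n+1)$ and $E=\frac{(d+1)^2}{2d^2(2d+1)}$; its leading coefficient works out to $-\frac{d+1}{2(2d+1)}<0$, so it holds once $\nu$ is small, and the finitely many small $n$ are checked by hand. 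The base case $n=1$ is immediate ($R(1)=1$ and $R(2)\ge R(1)$ because the optimal cost never beats the prophet's; and $G(1)^{d}=\Gamma(1+1/d)^{d}\le (1+1/d)/2$).

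The main obstacle is the regime $d<1$. There Bernoulli's inequality points the wrong way, $1-\frac{G(n)^{d}}{d+1}$ can fail to be positive for small $n$, and the one-term bound on $\phi/x$ is too weak; one must carry the next Taylor term (the three-term lower bound on $\phi/x$, together with $G(n)^{d}\ge \Gamma(1+1/d)^{d}/n$) and reduce the monotonicity inequality to a claim of the form $\left(1-\frac{\mu}{d}+A\mu^{2}-B\mu^{3}\right)^{d}\ge 1-\mu$, which after comparing power series collapses to a polynomial inequality in $d$ (the leading comparison being $(d+1)^2\le(1+d)(2d+1)$, i.e.\ $d\ge 0$). One clean way to make the small-$n$ bookkeeping uniform is to use the full alternating-series bracketing of $\phi(x)/x=\sum_{k\ge 0}\frac{(-x^{d})^{k}}{k!(kd+1)}$ between consecutive partial sums, valid exactly on the relevant range of $x$. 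A secondary point worth flagging is that the crude bounds $1\le R(n)\le\lambda(d)$ from Lemma~\ref{lem:ratio-induction} are too lossy to close the induction — the strengthened $n+1$ in the denominator of the invariant is essential — which is why the argument must be framed as an induction maintaining $G(n)^{d}\le (1+1/d)/(n+1)$.
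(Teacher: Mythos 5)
Your reformulation of the step $G(n+1)=\phi(G(n))$ and of the monotonicity condition as $\left(\phi(G(n))/G(n)\right)^{d}\ge n/(n+1)$ is correct, and your diagnosis that the plain bound $R(n)\le\lambda(d)$ is borderline (because the first-order terms exactly cancel) is a sharper reading than what the paper's own proof makes explicit: the paper expands the product of series, reduces to $1+\tfrac{1}{d(n-1)}-\tfrac{1+1/d}{(d+1)(n-1)}+O(1/n^{2})\ge 1$, and then notes that $\tfrac{1}{d}\ge\tfrac{1+1/d}{d+1}$ -- but this is an \emph{equality}, so the argument actually rests on the unexamined sign of the $O(1/n^{2})$ remainder. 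Your instinct to strengthen the invariant is therefore in the right direction, and your envelope bounds on $\phi(x)/x$ and the Bernoulli step are all fine.

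However, the invariant you propose is false at the base. You assert $G(1)^{d}=\Gamma(1+1/d)^{d}\le (1+1/d)/2$, but this fails for every $d>1$. Concretely, at $d=2$ the left side is $\Gamma(3/2)^{2}=\pi/4\approx 0.785$, while the right side is $3/4=0.75$; at $d=10$ the gap widens ($\Gamma(1.1)^{10}\approx 0.61$ versus $1.1/2=0.55$). Indeed, as $d\to\infty$ the left side tends to $e^{-\gamma}\approx 0.561>1/2$, so the base case is not merely a small-$d$ technicality. Equality holds only at $d=1$; for $d<1$ the claim does hold, and numerically the invariant appears to recover for $n\ge 2$ even when $d>1$, but an induction that starts from a false base case does not prove anything, and you would then also need to separately establish the step from $n=1$ to $n=2$ and re-justify the propagation step at the smallest valid index. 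As written, the proposal therefore has a genuine gap: the quantity $G(n)^{d}\le (1+1/d)/(n+1)$ cannot serve as the inductive hypothesis for all $n\ge 1$ and all $d>0$. If you want to pursue this route, the invariant needs to be weakened (for instance to $G(n)^{d}\le (1+1/d)/(n+c_{d})$ with a $d$-dependent shift $c_{d}\in(0,1]$), with the base case and propagation re-verified; alternatively, follow the paper's series expansion but actually track the sign of the second-order coefficient rather than stopping at the cancelling first-order terms.
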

\begin{proof}
Recall that, by \eqref{eq:r-expr-gamma}, we have
\begin{align*}
R(n) &= \frac{n^{1/d}}{\Gamma(1/d)} \gamma\prn{1/d, \prn{G(n-1)}^d} \\
&= \frac{1}{d} \frac{n^{1/d}}{\Gamma\prn{1+1/d}} \gamma\prn{1/d, \prn{G(n-1)}^d} \\
&= \frac{1}{d \: \beta_n} \gamma\prn{1/d, \prn{G(n-1)}^d}.
\end{align*}
However, by Fact~\ref{fct:gamma-series}, we have
\[
\gamma\prn{1/d, \prn{G(n-1)}^d} = G(n-1) \sum_{k = 0}^\infty {\frac{\prn{- \prn{G(n-1)}^d}^k}{k! \: \prn{1/d + k}}}.
\]
Thus
\begin{align*}
R(n) &= \frac{G(n-1)}{d \: \beta_n} \: \sum_{k = 0}^\infty {\frac{\prn{- \prn{G(n-1)}^d}^k}{k! \: \prn{1/d + k}}} \\
&= \frac{G(n-1)}{\beta_{n-1}} \: \frac{\beta_{n-1}}{\beta_n} \: \sum_{k = 0}^\infty {\frac{\prn{- \prn{G(n-1)}^d}^k}{k! \: \prn{1 + d \: k}}} \\
&= R(n-1) \: \frac{\beta_{n-1}}{\beta_n} \: \sum_{k = 0}^\infty {\frac{\prn{- \prn{G(n-1)}^d}^k}{k! \: \prn{1 + d \: k}}}.
\end{align*}
It suffices to show that
\[
\frac{\beta_{n-1}}{\beta_n} \: \sum_{k = 0}^\infty {\frac{\prn{- \prn{G(n-1)}^d}^k}{k! \: \prn{1 + d \: k}}} \geq 1.
\]
Notice that
\[
\frac{\beta_{n-1}}{\beta_n} = \prn{\frac{n}{n-1}}^{1/d} = \prn{1 + \frac{1}{n-1}}^{1/d} = \sum_{\ell = 0}^{1/d} {\frac{1}{\prn{n-1}^\ell} \: \binom{1/d}{\ell}}.
\]
Thus, it suffices to show that
\[
\sum_{\ell = 0}^{1/d} {\frac{1}{\prn{n-1}^\ell} \: \binom{1/d}{\ell}} \cdot \sum_{k = 0}^\infty {\frac{\prn{- \prn{G(n-1)}^d}^k}{k! \: \prn{1 + d \: k}}} \geq 1.
\]
We use the fact that $G(n-1) \leq \lambda(d) \beta_{n-1} = \prn{\frac{1+1/d}{n-1}}^{1/d}$ and get
\begin{align*}
\sum_{\ell = 0}^{1/d} {\frac{1}{\prn{n-1}^\ell} \: \binom{1/d}{\ell}} \cdot \sum_{k = 0}^\infty {\frac{\prn{- \prn{G(n-1)}^d}^k}{k! \: \prn{1 + d \: k}}} &= \sum_{k = 0}^\infty {\sum_{\ell = 0}^{1/d} {\frac{1}{\prn{n-1}^\ell} \: \binom{1/d}{\ell}} \cdot \frac{\prn{- \prn{G(n-1)}^d}^k}{k! \: \prn{1 + d \: k}}} \\
&\geq \sum_{k = 0}^\infty {\sum_{\ell = 0}^{1/d} {\frac{1}{\prn{n-1}^\ell} \: \binom{1/d}{\ell}} \cdot \frac{\prn{- (1+1/d)}^k}{k! \: (n-1)^k \prn{1 + d \: k}}} \\
&= \sum_{k = 0}^\infty {\sum_{\ell = 0}^{1/d} {\binom{1/d}{\ell}} \frac{\prn{- (1+1/d)}^k}{k! \: \prn{1 + d \: k}}} \cdot \frac{1}{\prn{n-1}^{\ell+k}} \\
&= 1 + \frac{1}{d(n-1)} - \frac{1+1/d}{(d+1)(n-1)} + \bigO{\frac{1}{n^2}}
\end{align*}
Thus, for this quantity to be greater than $1$, it suffices to have
\[
\frac{1}{d(n-1)} \geq \frac{1+1/d}{(d+1)(n-1)} \iff \frac{d+1}{d} \geq 1+1/d,
\]
which is true.
\end{proof}

Assume, towards contradiction, that $\lim_{n \to \infty} R(n) = \lambda^*  < \lambda(d_1) = \frac{\prn{1+1/d}^{1/d}}{\Gamma\prn{1+1/d}}$.

We know that $G(n-1) = R(n-1) \beta_{n-1} = \frac{\Gamma\prn{1+1/d}}{(n-1)^{1/d}} R(n-1)$. Thus we get
\begin{equation}\label{eq:pre-gamma-fact}
R(n) = \frac{n^{1/d}}{\Gamma(1+1/d)} \: \frac{1}{d} \: \gamma\prn{1/d, \frac{\prn{\Gamma\prn{1+1/d}}^d}{(n-1)} \prn{R(n-1)}^d }.
\end{equation}
Recall that, by Fact~\ref{fct:gamma-series}, $\gamma(s, x) = x^s \sum_{k = 0}^\infty {\frac{\prn{-x}^k}{k! \: \prn{s+k}}}$, and thus \eqref{eq:pre-gamma-fact} becomes
\begin{align*}
R(n) &= \prn{\frac{n}{n-1}}^{1/d} R(n-1) \frac{1}{d} \sum_{k = 0}^\infty {\frac{\prn{- \frac{\prn{\Gamma\prn{1+1/d}}^d}{(n-1)} \prn{R(n-1)}^d}^k}{k! \: (1/d + k)}} \\
R(n) &= \prn{\frac{n}{n-1}}^{1/d} R(n-1) \sum_{k = 0}^\infty {\frac{\prn{- \frac{\prn{\Gamma\prn{1+1/d}}^d}{(n-1)} \prn{R(n-1)}^d}^k}{k! \: (1 + d \: k)}} \\
&= R(n-1) \prn{1 + \frac{1}{n-1}}^{1/d} \sum_{k = 0}^\infty {\frac{\prn{- \frac{\prn{\Gamma\prn{1+1/d}}^d}{(n-1)} \prn{R(n-1)}^d}^k}{k! \: (1 + d \: k)}} \\
\end{align*}
Notice that
\[
\prn{1 + \frac{1}{n-1}}^{1/d} = \sum_{\ell = 0}^{1/d} {\frac{1}{\prn{n-1}^\ell} \: \binom{1/d}{\ell}}.
\]
Thus,
\begin{align*}
& \prn{1 + \frac{1}{n-1}}^{1/d} \sum_{k = 0}^\infty {\frac{\prn{- \frac{\prn{\Gamma\prn{1+1/d}}^d}{(n-1)} \prn{R(n-1)}^d}^k}{k! \: (1 + d \: k)}} \\
= \sum_{\ell = 0}^{1/d} & {\frac{1}{\prn{n-1}^\ell} \: \binom{1/d}{\ell}} \cdot \sum_{k = 0}^\infty {\frac{\prn{- \frac{\prn{\Gamma\prn{1+1/d}}^d}{(n-1)} \prn{R(n-1)}^d}^k}{k! \: (1 + d \: k)}} \\
= \sum_{k = 0}^\infty & {\sum_{\ell = 0}^{1/d} {\frac{1}{\prn{n-1}^\ell} \: \binom{1/d}{\ell}} \cdot \frac{\prn{- \frac{\prn{\Gamma\prn{1+1/d}}^d}{(n-1)} \prn{R(n-1)}^d}^k}{k! \: (1 + d \: k)}} \\
= \sum_{k = 0}^\infty & {\sum_{\ell = 0}^{1/d} {\frac{1}{\prn{n-1}^{\ell+k}} \: \binom{1/d}{\ell}} \cdot \frac{\prn{- \prn{\Gamma\prn{1+1/d} \cdot  R(n-1)}^d}^k}{k! \: (1 + d \: k)}} \\
\approx 1 & + \frac{1}{d \: (n-1)} - \frac{\prn{\Gamma\prn{1+1/d} \cdot  R(n-1)}^d}{(d+1) \: (n-1)},
\end{align*}
where, for large enough $n$, we can ignore higher order terms and we also have $R(n-1) \approx \lambda^*$. Thus, for $R(n) \leq \lambda^*$, it must be that
\[
\frac{1}{d} - \frac{\prn{\Gamma\prn{1+1/d} \cdot \lambda^*}^d}{(d+1)} \leq 0 \iff \prn{\Gamma\prn{1+1/d} \cdot \lambda^*}^d \geq 1+1/d \iff \lambda^* \geq \frac{\prn{1+1/d}^{1/d}}{\Gamma\prn{1+1/d}},
\]
and we arrive at a contradiction.

Therefore, for any $\eps > 0$, there is no $\prn{\frac{\prn{1+1/d}^{1/d}}{\Gamma\prn{1+1/d}}-\eps}$-competitive cost prophet inequality for the single-item setting and I.I.D. random variables drawn from $\cD$.
\end{proof}

\subsection{Proof of Theorem~\ref{thm:single-threshold-lower}}

\begin{reptheorem}{thm:single-threshold-lower}
Consider the distribution $\cD$ for which $H(x) = x^d$ for $d \geq 0$. There is no $\ltlo{\prn{\log{n}}^{1/d}}$-competitive single-threshold cost prophet inequality for the single-item setting and I.I.D. random variables drawn from $\cD$.
\end{reptheorem}
\begin{proof}
Recall by \eqref{eq:single-alg-hazard} that
\[
R(n) = \frac{1}{\beta_n} \prn{\frac{1 - e^{-(n-1) H(T)}}{1 - e^{-H(T)}}
\prn{\int^T_0 {e^{-H(x)} \dif x} - T e^{-H(T)}} + e^{-(n-1) H(T)} \beta_1}.
\]
Assume, towards contradiction, that $R(n) = \ltlo{\prn{\log{n}}^{1/d}}$. For this to be the case, it must be that
\begin{equation}\label{eq:single-lower-bound-1}
e^{-(n-1) H(T)} \frac{\beta_1}{\beta_n} = \ltlo{\prn{\log{n}}^{1/d}},
\end{equation}
and also that
\begin{equation}\label{eq:single-lower-bound-2}
\frac{1}{\beta_n} \prn{\frac{1 - e^{-(n-1) H(T)}}{1 - e^{-H(T)}}
\prn{\int^T_0 {e^{-H(x)} \dif x} - T e^{-H(T)}}} = \ltlo{\prn{\log{n}}^{1/d}}.
\end{equation}
By \eqref{eq:single-lower-bound-1} and the definition of $\ltlo{\cdot}$, we have that for every $\eps > 0$, there must exist a $n_0 \geq 1$ such that for all $n \geq n_0$, we have
\begin{align}
e^{-(n-1) H(T)} \frac{\beta_1}{\beta_n} \leq \eps \prn{\log{n}}^{1/d} &\iff
e^{-(n-1) H(T)} \leq \eps \: \frac{\beta_n}{\beta_1} \: \prn{\log{n}}^{1/d} \iff \nonumber \\
-(n-1) H(T) \leq \log\prn{\eps \: \frac{\beta_n}{\beta_1} \: \prn{\log{n}}^{1/d}} &\iff
H(T) \geq \frac{\log\prn{\frac{\beta_1}{\eps \: \beta_n \: \prn{\log{n}}^{1/d}}}}{n-1} \iff \nonumber \\
\label{eq:single-lower-bound-3} T^d \geq \frac{\log\prn{\frac{\beta_1}{\eps \: \beta_n \: \prn{\log{n}}^{1/d}}}}{n-1} &\iff
T \geq \prn{\frac{\log\prn{\frac{\beta_1}{\eps \: \beta_n \: \prn{\log{n}}^{1/d}}}}{n-1}}^{1/d}.
\end{align}
However, by \eqref{eq:single-lower-bound-2}, we have that for every $\eps' > 0$, there must exist a $n_1 \geq 1$ such that for all $n \geq n_1$, we have
\begin{align}
\frac{1}{\beta_n} \prn{\frac{1 - e^{-(n-1) H(T)}}{1 - e^{-H(T)}}
\prn{\int^T_0 {e^{-H(x)} \dif x} - T e^{-H(T)}}} &\leq \eps' \: \prn{\log{n}}^{1/d} \nonumber \\
\prn{\frac{1 - e^{-(n-1) H(T)}}{1 - e^{-H(T)}}
\prn{\int^T_0 {e^{-x^d} \dif x} - T e^{-H(T)}}} &\leq \eps' \: \beta_n \: \prn{\log{n}}^{1/d} \nonumber \\
\label{eq:single-lower-bound-4} \prn{\frac{1 - e^{-(n-1) H(T)}}{1 - e^{-H(T)}}
\prn{\frac{1}{d} \gamma\prn{1/d, T^d} - T e^{-H(T)}}} &\leq \eps' \: \beta_n \: \prn{\log{n}}^{1/d}.
\end{align}
where the last equality follows by substituting $t = x^d$ in the integral, as seen several other times in the paper.

Notice that $T$ has to be decreasing in $n$, since, if not, one can easily see from \eqref{eq:single-alg-hazard-first} that the algorithm is too eager to select a value and its performance degrades rapidly as $n$ increases. Therefore, we know that $\lim_{n \to \infty} T = 0$. Furthermore, by Fact~\ref{fct:small-gamma-approx}, we know that for small $T$, i.e. large enough $n$, we have
\[
\gamma\prn{1/d, T^d} \approx d \: T,
\]
and thus \eqref{eq:single-lower-bound-4} becomes
\begin{align*}
\prn{\frac{1 - e^{-(n-1) H(T)}}{1 - e^{-H(T)}}
\prn{T - T e^{-H(T)}}} &\leq \eps' \: \beta_n \: \prn{\log{n}}^{1/d} \iff \\
\prn{\frac{1 - e^{-(n-1) H(T)}}{1 - e^{-H(T)}}
T\: \prn{1 - e^{-H(T)}}} &\leq \eps' \: \beta_n \: \prn{\log{n}}^{1/d} \iff \\
T \: \prn{1 - e^{-(n-1) H(T)}} &\leq \eps' \: \beta_n \: \prn{\log{n}}^{1/d}.
\end{align*}
However, by \eqref{eq:single-lower-bound-3} we know that we must have
\[
T \geq \prn{\frac{\log\prn{\frac{\beta_1}{\eps \: \beta_n \: \prn{\log{n}}^{1/d}}}}{n-1}}^{1/d},
\]
and if
\[
T \: \prn{1 - e^{-(n-1) H(T)}} \leq \eps' \: \beta_n \: \prn{\log{n}}^{1/d},
\]
then it also must be the case that
\[
T \: \prn{1 - \frac{\eps \: \beta_n \: \prn{\log{n}}^{1/d}}{\beta_1}} \leq \eps' \: \beta_n \: \prn{\log{n}}^{1/d}.
\]
Notice that by Lemma~\ref{lem:poly-hazard-beta}
\[
\beta_n = \frac{\Gamma\prn{1+1/d}}{n^{1/d}} \qquad \text{and} \qquad \beta_1 = \Gamma\prn{1+1/d},
\]
and thus
\[
T \: \prn{1 - \eps \: \prn{\frac{\log{n}}{n}}^{1/d}} \leq \eps' \: \beta_1 \: \prn{\frac{\log{n}}{n}}^{1/d}.
\]
For every $\eps$, for $n$ large enough, we have $1 - \eps \: \prn{\frac{\log{n}}{n}}^{1/d} > 0$, and thus
\begin{equation}\label{eq:single-lower-bound-5}
T \leq \eps' \: \frac{\beta_1 \: \prn{\frac{\log{n}}{n}}^{1/d}}{1 - \eps \: \prn{\frac{\log{n}}{n}}^{1/d}}.
\end{equation}
To arrive at a contradiction, we use \eqref{eq:single-lower-bound-3} and \eqref{eq:single-lower-bound-5} to show that it suffices to find, for every $\eps > 0$, a constant $\eps' > 0$ such that
\[
\eps' \: \frac{\beta_1 \: \prn{\frac{\log{n}}{n}}^{1/d}}{1 - \eps \: \prn{\frac{\log{n}}{n}}^{1/d}} < \prn{\frac{\log\prn{\frac{\beta_1}{\eps \: \beta_n \: \prn{\log{n}}^{1/d}}}}{n-1}}^{1/d} = \prn{\frac{\log\prn{\frac{1}{\eps} \: \prn{\frac{n}{\log{n}}}^{1/d}}}{n-1}}^{1/d}.
\]
Indeed, rearranging the terms above, we get
\begin{align*}
\eps'  &< \frac{1 - \eps \: \prn{\frac{\log{n}}{n}}^{1/d}}{\beta_1 \: \prn{\frac{\log{n}}{n}}^{1/d}} \cdot \prn{\frac{\log\prn{\frac{1}{\eps} \: \prn{\frac{n}{\log{n}}}^{1/d}}}{n-1}}^{1/d} \\
&= \frac{1}{\beta_1} \cdot \frac{1 - \eps \: \prn{\frac{\log{n}}{n}}^{1/d}}{\prn{\frac{\log{n}}{n}}^{1/d}} \cdot \prn{\frac{\frac{1}{d} \cdot \log\prn{\frac{1}{\eps^d} \: \frac{n}{\log{n}}}}{n-1}}^{1/d} \\
&= \frac{1}{\beta_1} \cdot \prn{1 - \eps \: \prn{\frac{\log{n}}{n}}^{1/d}} \cdot \prn{ \frac{\frac{1}{d} \cdot \frac{n}{\log{n}} \: \log\prn{\frac{1}{\eps^d} \: \frac{n}{\log{n}}}}{n-1} }^{1/d} \\
&= \frac{1}{\beta_1} \cdot \prn{1 - \eps \: \prn{\frac{\log{n}}{n}}^{1/d}} \cdot \prn{\frac{1}{d} \cdot \frac{n}{n-1} \cdot \frac{\log\prn{\frac{1}{\eps^d} \: \frac{n}{\log{n}}}}{\log{n}} }^{1/d} \\
&= \frac{1}{\beta_1 \: d^{1/d}} \cdot \prn{1 - \eps \: \prn{\frac{\log{n}}{n}}^{1/d}} \cdot \prn{\frac{n}{n-1} \cdot \frac{\log{n} - \log\prn{\eps^d \: \log{n}}}{\log{n}} }^{1/d}.
\end{align*}
Notice, however, that for any fixed $\eps > 0$, we have
\[
\lim_{n \to \infty} \prn{1 - \eps \: \prn{\frac{\log{n}}{n}}^{1/d}} \cdot \prn{\frac{n}{n-1} \cdot \frac{\log{n} - \log\prn{\eps^d \: \log{n}}}{\log{n}} }^{1/d} = 1,
\]
and thus, for every $\eps > 0$ there exists a large enough $n$ and a constant $0 < \eps' \frac{1}{\beta_1 d^{1/d}}$ such that \eqref{eq:single-lower-bound-3} and \eqref{eq:single-lower-bound-5} cannot simultaneously hold, and we arrive at a contradiction.
\end{proof}

\end{document}